\newcommand\BibTeX{{\rmfamily B\kern-.05em \textsc{i\kern-.025em b}\kern-.08em
T\kern-.1667em\lower.7ex\hbox{E}\kern-.125emX}}
\algrenewcommand\algorithmicindent{0.4cm}
\DeclareMathOperator*{\argmin}{arg\,min}
\newtheorem{theorem}{Theorem}
\newtheorem{lemma}{Lemma}
\theoremstyle{definition}
\newtheorem{definition}{Definition}
\newtheorem{assumption}{Assumption}
\newtheorem{problem}{Problem}
\newcommand{\lowx}{\mathbf{x}^a}
\newcommand{\upperx}{\mathbf{x}^b}
\newcommand{\allset}{\mathcal{C}\cup\bar{\mathcal{C}}}
\newcommand{\uncontrolled}{\bar{\mathcal{C}}}
\newcommand{\aggsignal}[1]{\mathbf{#1}(\cdot)}
\newcommand{\signal}[2]{{#1}_{#2}(\cdot)}
\newcommand{\aggdesire}[1]{\mathbf{u}^{{#1},\infty}_{desire}(\cdot)}
\newcommand{\aggsafe}[1]{\mathbf{u}^{{#1}}_{safe}(\cdot)}
\newcommand{\safe}[2]{{u}^{{#1},\infty}_{safe,{#2}}(\cdot)}
\newcommand{\aggout}[1]{\mathbf{u}^{{#1}}_{out}(\cdot)}
\newcommand{\procedureS}{\textsc{Scheduling}}
\newcommand{\procedureES}{\textsc{Exact}}
\newcommand{\procedureAS}{\textsc{Approx}}
\newcommand{\procedureRES}{\textls[-85]{\textsc{RelaxedExact}}}
\newcommand{\supervisor}{\textsc{Supervisor}}
\newcommand{\efficientS}{\textsc{EffSupervisor}}
\newcommand{\garey}{\textsc{Polynomial}}
\begin{document}

\runninghead{Supervisors for Intersection Collision Avoidance}

\title{Robust Supervisors for Intersection Collision Avoidance in the Presence of Uncontrolled Vehicles}

\author{Heejin Ahn\affilnum{1}, Andrea Rizzi\affilnum{2}, Alessandro Colombo\affilnum{3}, and Domitilla Del Vecchio\affilnum{1}}

\affiliation{\affilnum{1}Department of Mechanical Engineering, Massachusetts Institute of Technology, Cambridge, MA, USA\\
\affilnum{2}Tri-Institutional Program in Computational Biology and Medicine, Weill Cornell Medical College, New York, NY, USA\\
\affilnum{3} Dipartimento di Elettronica Informazione e Bioingegneria, Politecnico di Milano, Milano, Italy}
\corrauth{Heejin Ahn, Department of Mechanical Engineering, Massachusetts Institute of Technology, 77 Massachusetts Avenue, Cambridge, MA 02139, USA.}

\email{hjahn@mit.edu}

\begin{abstract}
We present the design and validation of a centralized controller, called a \textit{supervisor}, for collision avoidance of multiple human-driven vehicles at a road intersection, considering measurement errors, unmodeled dynamics, and uncontrolled vehicles. We design the supervisor to be least restrictive, that is, to minimize its interferences with human drivers. This performance metric is given a precise mathematical form by splitting the design process into two subproblems: verification problem and supervisor-design problem. The verification problem determines whether an input signal exists that makes controlled vehicles avoid collisions at all future times. The supervisor is designed such that if the verification problem returns yes, it allows the drivers' desired inputs; otherwise, it
overrides controlled vehicles to prevent collisions. As a result, we propose \textit{exact} and \textit{efficient} supervisors. The exact supervisor solves the verification problem exactly but with combinatorial complexity. In contrast, the efficient supervisor solves the verification problem within a quantified approximation bound in polynomially bounded time with the number of controlled vehicles. We validate the performances of both supervisors through simulation and experimental testing. 
\end{abstract}

\maketitle
\keywords{Class file, \LaTeXe, \textit{SAGE Publications}}

%

\section{Introduction}

Autonomous robots have drawn attention in various applications, such as exploring unknown environment (\cite{maimone_two_2007}), moving materials in warehouses (\cite{wurman_coordinating_2008}), and collecting data on ocean conditions (\cite{smith_planning_2010}). Recently, there has been extensive research on autonomous vehicles from academic, industrial, and governmental sectors for the purpose of reducing the number of traffic accidents. Research has focused on developing fully autonomous vehicles as well as improving safety of human-driven vehicles by means of newly available automation, sensing, and communication capabilities. A major obstacle to the development of collision avoidance architecture for large traffic networks is computational complexity.

In general, there have been several approaches to reduce computational complexity in collision avoidance for multiple vehicles. In a decentralized framework, each vehicle makes its own decision to avoid collisions with its neighboring vehicles, thereby dividing a large problem into smaller local problems. To design a decentralized control law, \cite{hoffmann_decentralized_2008} and \cite{gillula_applications_2011} used reachability analysis for hybrid systems, and \cite{mastellone_formation_2008} defined a potential function. While computationally efficient, this framework is usually more conservative and unable to prevent a deadlock, where no further control input exists to terminate processes (\cite{cassandras_introduction_2008}). A centralized framework considers a whole system and thus can be less conservative, but computationally demanding. Collision avoidance problems were formulated into mixed integer linear programming (MILP) by assuming discrete-time linear vehicle dynamic models (\cite{richards_spacecraft_2002,borrelli_milp_2006}) and considering geometric construction of collisions with vehicle models of instantaneous speed or angle changes (\cite{pallottino_conflict_2002,alonso-ayuso_collision_2011}). These MILP formulations are then solved by commercially available software. 
In this paper, we present a centralized controller, in which the collision avoidance problem is translated into a scheduling problem and then solved by solving this scheduling problem. Moreover, we provide an approximate solution of this problem.

In collision avoidance problems at road intersections, the complexity can be mitigated to some extent by exploiting the fact that vehicles follow predefined paths, and side-impacts can be avoided by approximately scheduling their time of occupancy of the shared intersection (\cite{peng_convexity_2005,peng_coordinating_2005}). Based on this concept, several autonomous intersection management schemes have been studied. \cite{kowshik_provable_2011} proposed a hybrid architecture comprising an interplay between centralized coordination and distributed agents. Centralized coordination assigns time slots to agents, and distributed agents determine if they can cross an intersection within allocated time slots while avoiding rear-end collisions. \cite{wu_cooperative_2012} employed an ant colony algorithm as an approximate solution for finding an optimal sequences of vehicles to improve traffic efficiency. Collision avoidance problems were formulated into nonlinear constrained optimization to eliminate overlaps of 
given trajectories inside an intersection (\cite{lee_development_2012}) and to minimize the risk of collision (\cite{kamal_vehicle-intersection_2014}). These works consider fully autonomous vehicles and can solve collision avoidance problems by finding one safe input. However, when human operators drive vehicles, a controller needs to be least-restrictive, that is, override human drivers only when they can cause a collision.


To ensure least restrictiveness, all possible inputs must be taken into account, usually at expense of computational cost. Moreover, this exhaustive evaluation must be done frequently because controllers must keep monitoring vehicles' safety and intervene only when drivers are unable to prevent collisions.  In \cite{hafner_computational_2011,hafner_cooperative_2013}, a safety control was designed for two vehicles. Their controller was validated in laboratory experiments (\cite{hafner_computational_2011}) and field experiments (\cite{hafner_cooperative_2013}). \cite{colombo_efficient_2012} translated a collision detection problem to a scheduling problem and employed a scheduling algorithm to design a safety control for multiple vehicles. Rear-end collisions as well as intersection collisions were considered in \cite{colombo_least_2014}.

In this paper, we propose a least-restrictive controller, called a \textit{supervisor}, that prevents intersection collisions among human-driven vehicles. 
Our work extends the result of \cite{colombo_efficient_2012} in that 1) we consider sources of uncertainty, including measurement errors, unmodeled dynamics, and uncontrolled vehicles, and 2) we perform a lab-based experiment to validate the supervisor in a setting subject to many sources of uncertainty. Here, controlled vehicles communicate with and are controlled by the supervisor, whereas uncontrolled vehicles are not. The inclusion of uncontrolled vehicles accounts for a realistic mixed-traffic scenario where unequipped vehicles still travel on roads. 


To design a supervisor, we formulate two problems: verification problem and supervisor-design problem. The verification problem determines the existence of an input signal that makes controlled vehicles avoid all future collisions. We prove that this problem is equivalent to an Inserted Idle-Time (IIT) scheduling problem, where an inserted idle-time represents a set of time intervals during which uncontrolled vehicles can occupy the intersection. This formulation was introduced in \cite{ahn_supervisory_2014} to account for uncontrolled vehicles under perfect measurement and dynamic models. The supervisor is designed to override the drivers of controlled vehicles only when the verification problem determines that there will be no input signal to avoid collisions.

In order to study the trade-off between exactness and computational efficiency, we propose \textit{exact} and \textit{efficient} supervisors. The exact supervisor solves the verification problem exactly, and thus, overrides drivers only when strictly necessary. However, since the verification problem has combinatorial complexity, this supervisor is not scalable with the number of controlled vehicles. To improve the computational efficiency, the efficient supervisor is designed to solve the verification problem within a quantified approximation bound in polynomially bounded time. We validate the supervisors by performing computer simulations and lab-based experimental testing. Some of these experimental results were presented in~\cite{ahn_experimental_2015}.

This paper is organized as follows. In Section~\ref{section:systemdefinition}, we define an intersection model and a vehicle dynamic model. In Section~\ref{section:problemstatement}, we formulate two problems: verification problem and supervisor-design problem, which are solved exactly in Section~\ref{section:exactsol} and approximately in Section~\ref{section:approxsol}. The simulation results are given in Section~\ref{section:simulation}, and the experimental results in Section~\ref{section:experiment}.

\section{System Definition}\label{section:systemdefinition}
In this section, we introduce an intersection model and the vehicle dynamics.
\subsection{Intersection model}
\begin{figure}[htb!]
	\centering
	\includegraphics[width=\columnwidth,keepaspectratio=true,angle=0]{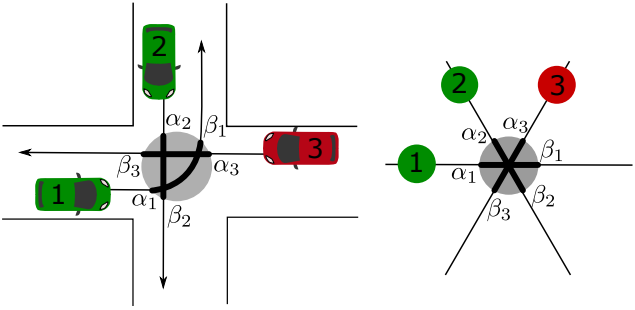}
	\caption{Road intersection (left) and its simplified model (right). We assume that the vehicles follow the prescribed paths, which intersect at one point. An intersection is modeled as an area containing this point (the shaded area), and its location is denoted by an open interval $(\alpha_i, \beta_i)$ along the path of vehicle $i$.}
	\label{figure:generalcars}
\end{figure}

Consider $n$ human-driven vehicles approaching an intersection along
$n$ different paths. As shown in Figure \ref{figure:generalcars}, the
intersection is modeled as an area containing the points at which these paths intersect. An open
interval $(\alpha_i,\beta_i)\subset \mathbb{R}$ denotes the location of this
conflict area on the longitudinal path of vehicle $i$. Among $n$
vehicles, $n_c$ vehicles are communicating with a supervisor, which takes control of these vehicles when potential crashes are detected and returns control back to the drivers when there is no more threat. The other $n_{\bar{c}}=n-n_c$ vehicles are not communicating with, and can never be controlled by the supervisor. The supervisor however can measure their positions and speeds and use these measurements to predict their possible future behaviors. We define a controlled set $\mathcal{C}$ as a set of controlled vehicles and an uncontrolled set $\bar{\mathcal{C}}$ as a set of uncontrolled vehicles. For notational simplicity, we number the vehicles such that $\mathcal{C}=\{1,2,\ldots,n_c\}$ and $\uncontrolled=\{n_c+1,n_c+2,\ldots,n\}$.

\subsection{Vehicle model}
To describe the longitudinal dynamics of vehicle $i\in \allset$, we
introduce a state $x_i(t)=(y_i(t), v_i(t))\in X_i\subset \mathbb{R}^2$, where $y_i(t)\in
Y_i\subseteq \mathbb{R}$ and $v_i(t)\in[v_{i,min},v_{i,max}]\subset\mathbb{R}$
are the position and the speed along the longitudinal path. Let $d_i(t)=(d_{y,i}(t),d_{v,i}(t))\in\mathbb{R}^2$ denote a
disturbance, where $d_{y,i}(t)\in[d_{y,i,min}, d_{y,i,max}]$ and $d_{v,i}(t)\in [d_{v,i,min}, d_{v,i,max}]$ account for unmodeled dynamics of $y_i(t)$ and $v_i(t)$, respectively. The longitudinal dynamics of controlled vehicle $j\in\mathcal{C}$ and uncontrolled vehicle $\gamma\in\uncontrolled$ are modeled as 
\begin{align}\label{equation:generalmodel}
\dot{x}_j=F_j(x_j,u_j,d_j), &&
\dot{x}_\gamma=F_\gamma(x_\gamma,w_\gamma,d_\gamma),
\end{align}
where $u_j\in U_j:=[u_{j,min},u_{j,max}]\subset \mathbb{R}$ is a throttle or brake input to controlled
vehicle $j$ applied by a supervisor, and $w_\gamma\in [w_{\gamma,min},w_{\gamma,max}]\subset\mathbb{R}$ is a driver-input to uncontrolled vehicle $\gamma$.

The parallel composition of \eqref{equation:generalmodel} is denoted as follows:
$$\dot{\mathbf{x}} = \mathbf{F(x,u,w,d)},$$ where
$\mathbf{x}=(x_1,\ldots,x_{n_c},x_{n_c+1},\ldots,x_{n})\in \mathbf{X}$, $\mathbf{u} =(u_1,u_2,\ldots,u_{n_c})\in \mathbf{U}$, $\mathbf{w} =
(w_{n_c+1},w_{n_c+2},\ldots,w_{n})$, and $\mathbf{d} =
(d_1,\ldots,d_{n_c},d_{n_c+1},\ldots,d_{n})$. The output of the system is the position of all vehicles, denoted by $\mathbf{y}=(y_1,\ldots,y_{n_c},y_{n_c+1},\ldots,y_n)\in \mathbf{Y}$. 

Let $\signal{u}{j}\in \mathcal{U}_j$ and $\signal{w}{\gamma}\in\mathcal{W}_\gamma$ denote an input signal to controlled vehicle
$j\in\mathcal{C}$ and a driver-input signal to uncontrolled vehicle
$\gamma\in\uncontrolled$, respectively. Let $\signal{d}{i}=(\signal{d}{y,i},
\signal{d}{v,i})\in\mathcal{D}_i$ denote a disturbance signal of vehicle
$i\in\allset$. We say $\signal{u}{j}\leq u'_{j}(\cdot)$ if $u_{j}(t)\leq u'_{j}(t)$ for all $t$.
Similarly, $\signal{w}{\gamma}\leq w'_{\gamma}(\cdot)$ if $w_{\gamma}(t)\leq
w'_{\gamma}(t)$ for all $t$, and $\signal{d}{i}\leq d'_{i}(\cdot)$ if
$d_{y,i}(t)\leq d'_{y,i}(t)$ and $d_{v,i}(t)\leq d'_{v,i}(t)$ for all $t$.


 Let $x_j({t},\signal{u}{j},\signal{d}{j}, x_j(0))$ denote the state of controlled vehicle $j$ at time $t$ with an
input signal $\signal{u}{j}$ and a disturbance signal $\signal{d}{j}$ starting at an initial state $x_j(0)$. Similarly, let $x_\gamma(t,\signal{w}{\gamma},\signal{d}{\gamma},x_\gamma(0))$ denote the state of uncontrolled vehicle $\gamma$ at time $t$ with a
driver-input signal $\signal{w}{\gamma}$ and a disturbance signal
$\signal{d}{\gamma}$ starting at $x_\gamma(0)$. Unless the input signals and initial state are important, we write $x_i(t)$ and $x_\gamma(t)$. We say $x_i(t)\leq x'_i(t)$ if $y_i(t)\leq y'_i(t)$ and $v_i(t)\leq v'_i(t)$ for all $t$ for $i\in\allset$ and make the following assumptions.

\begin{assumption}\label{assumption:x_orderpreserving}
	The functions $F_j$ and $F_\gamma$ in \eqref{equation:generalmodel} are order-preserving. That is, if
	$\signal{u}{j}\leq u'_j(\cdot)$, we have
	$$x_j({t},\signal{u}{j},\signal{d}{j},x_j(0))\leq
	x_j({t},u'_j(\cdot),\signal{d}{j},x_j(0)).$$ Similarly, if
	$\signal{d}{j}\leq d'_{j}(\cdot)$, we have
	$$x_j({t},\signal{u}{j},\signal{d}{j},x_j(0))\leq
	x_j({t},\signal{u}{j},d'_{j}(\cdot),x_j(0)).$$ Furthermore, if $x_j(0)\leq
	x'_j(0)$, we have $$x_j({t},\signal{u}{j}, \signal{d}{j},x_j(0))\leq
	x_j({t},\signal{u}{j},\signal{d}{j}, {x}'_j(0)).$$ The same relations hold for
	the state of uncontrolled vehicle $\gamma$.
\end{assumption} 
\begin{assumption}\label{assumption:y_nondecreasing_t}
	For all $i\in\allset$, $\dot{y}_i(t) \geq 0$ for all $t$. Thus, the outputs $y_j({t},\signal{u}{j},\signal{d}{j}, x_j(0))$ and $y_\gamma(t,\signal{w}{\gamma},\signal{d}{\gamma},x_\gamma(0))$ are 	non-decreasing in $t$.
\end{assumption}
\begin{assumption}\label{assumption:path-connected}
	For all $j\in\mathcal{C}$, ${x}_j(t,\signal{u}{j},\signal{d}{j},{x}_j(0))$ is continuously dependent on $\signal{u}{j}\in \mathcal{U}_j$, and the input signal space $\mathcal{U}_j$ is path-connected.
\end{assumption}
Let $\mathbf{x}(t,\aggsignal{u},\aggsignal{w},\aggsignal{d},\mathbf{x}(0))$ denote the aggregate state with $\aggsignal{u}\in\mathcal{U}, \aggsignal{w}\in\mathcal{W}$, and $\aggsignal{d}\in\mathcal{D}$. This can also be written as $\mathbf{x}(t)$ if the other arguments are not important.


Notice that the vehicle dynamics \eqref{equation:generalmodel} are subject to uncertainty originated from disturbances $d_j$ and $d_\gamma$ and an unknown driver-input $w_\gamma$. In addition to these, we consider a measurement noise
$\delta_i:=(\delta_{y,i},\delta_{v,i})$ for $i\in\allset$, where $\delta_{y,i}\in \mathbb{R}$ and $\delta_{v,i}\in \mathbb{R}$ are noises on the position measurement $y_{m,i}(t)$ and the speed measurement $v_{m,i}(t)$ at some time $t$, respectively. Then, the actual state $x_i(t)=(y_i(t), v_i(t))$ satisfies the following equation:
\begin{align}\label{equation:measurement}
y_i(t) = y_{m,i}(t)+\delta_{y,i}, && v_i(t)=v_{m,i}(t)+\delta_{v,i}.
\end{align}
Let $x_{m,i}(t)=(y_{m,i}(t), v_{m,i}(t))$ denote the state measurement. Then, \eqref{equation:measurement} can be rewritten as $x_i(t) = x_{m,i}(t)+\delta_i$. We make an assumption as follows.
\begin{assumption}\label{assumption:noise_bounded}
	The measurement noise is bounded, that is, $\delta_{i}\in[\delta_{i,min},
	\delta_{i,max}]$.
\end{assumption} 
The aggregate state measurement is denoted by
$\mathbf{x}_m(t)$, and then, $\mathbf{x}(t) = \mathbf{x}_{m}(t) + \delta$, where $\delta\in \Delta:=[\delta_{min}, \delta_{max}]$ is the aggregated measurement noise.

\subsection{The state estimation}
We define a set of states
$[\lowx(t),\upperx(t)]$, called the state estimation, that provides a lower and upper bound of the exact state, that is, $\mathbf{x}(t)\in[\lowx(t),\upperx(t)]$ for
all $t$. At $t=0$, the state estimation is defined as $\lowx(0) = \mathbf{x}_m(0)+\delta_{min}$ and $\upperx(0) = \mathbf{x}_m(0)+\delta_{max}$, so that $\mathbf{x}(0)\in[\lowx(0), \upperx(0)]$ because of \eqref{equation:measurement} and Assumption~\ref{assumption:noise_bounded}. Given the intial state estimation $[\lowx(0),
\upperx(0)]$ and an input signal $\aggsignal{u}$, the state estimation $[\lowx(t,\aggsignal{u},\lowx(0)), \upperx(t,
\aggsignal{u},\upperx(0))]$ at time $t$ is defined as follows: for $j\in\mathcal{C}$,
\begin{align}
\begin{split}\label{equation:controlledestimate}
&x^a_j(t,\signal{u}{j},x_j^a(0)) := \min_{	\begin{subarray}
	\centering
	\signal{d}{j}\in \mathcal{D}_j, x_j(0)\\\in[x^a_j(0), x^b_j(0)]
	\end{subarray}}
x_j(t,\signal{u}{j},\signal{d}{j},x_j(0)),\\
&x^b_j(t,\signal{u}{j},x_j^b(0)):=\max_{\begin{subarray}
\centering
\signal{d}{j}\in \mathcal{D}_j, x_j(0)\\\in[x^a_j(0), x^b_j(0)]
\end{subarray}}
x_j(t,\signal{u}{j},\signal{d}{j}, x_j(0)).
\end{split}
\end{align}
For $\gamma\in\uncontrolled,$
\begin{align}
\begin{split}\label{equation:uncontrolledestimate}
& x^a_\gamma(t) := \min_{	\begin{subarray}
	\centering
	\signal{w}{\gamma}\in \mathcal{W}_\gamma,\signal{d}{\gamma}\in \mathcal{D}_\gamma,\\
	x_\gamma(0)\in[x^a_\gamma(0),x^b_\gamma(0)]
	\end{subarray}} x_\gamma(t,\signal{w}{\gamma},\signal{d}{\gamma},x_\gamma(0)),\\
& x^b_\gamma(t):=\max_{
	\begin{subarray}
	\centering
	\signal{w}{\gamma}\in \mathcal{W}_\gamma,\signal{d}{\gamma}\in \mathcal{D}_\gamma,\\
	x_\gamma(0)\in[x^a_\gamma(0),x^b_\gamma(0)]
	\end{subarray}}
x_\gamma(t,\signal{w}{\gamma},\signal{d}{\gamma}, x_\gamma(0)).
\end{split}
\end{align}
The state estimation is denoted by $[\lowx(t,\aggsignal{u},\lowx(0)), \upperx(t,
\aggsignal{u},\upperx(0))]$ or $[\lowx(t),\upperx(t)]$ if the omitted arguments are not necessary.

By these definitions, the state estimation guarantees that given an initial state estimation $[\lowx(0),\upperx(0)]$ and an input signal $\aggsignal{u}$, 
\begin{align}\label{equation:stateestimation_guarantee}
\begin{split}
\mathbf{x}(t,\aggsignal{u},&\aggsignal{w},\aggsignal{d},\mathbf{x}(0))\\
&\in
[\lowx(t,\aggsignal{u},\lowx(0)), \upperx(t,\aggsignal{u},\upperx(0))],
\end{split}
\end{align}
for all $t$ for any $\aggsignal{w}\in\mathcal{W}$ and $\aggsignal{d}\in\mathcal{D}$ for any $\mathbf{x}(0)\in [\lowx(0),\upperx(0)]$. 


Notice that $x^a_j(t,\signal{u}{j},x^a_j(0))$ and $x^b_j(t,\signal{u}{j},x_j^b(0))$ satisfy the inequalities in Assumption~\ref{assumption:x_orderpreserving}. We can define the position estimation, denoted by $[\mathbf{y}^a(t,\aggsignal{u},\lowx(0)), \mathbf{y}^b(t,\aggsignal{u},\upperx(0)))]$ by letting $x^a_i(t)=(y^a_i(t), \cdot)$ and $x^b_i(t)=(y^b_i(t),\cdot)$ for all $i\in\allset$. Throughout this paper, we use \textquoteleft$\cdot$' in a vector if specifying the entry is not important. The position estimation also inherits the order-preserving property from $\mathbf{y}(t)$. Later in
Section~\ref{section:supervisor}, this estimation can be updated using the state measurement.

\begin{figure*}[t!]
\centering
\includegraphics[width = 0.7\linewidth]{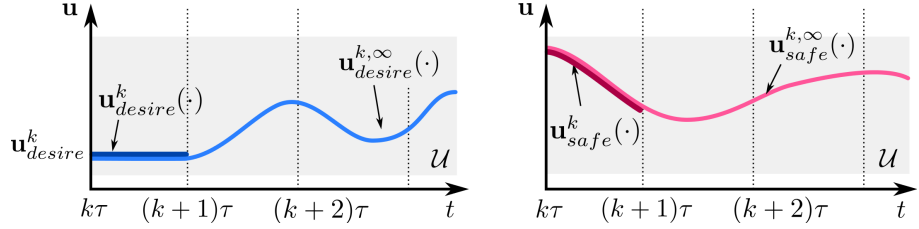}
\caption{$\aggdesire{k}$ is an artificially constructed signal that equals to $\mathbf{u}^k_{desire}$ for time $[k\tau,(k+1)\tau)$ as in \eqref{eq:desired_input}. $\aggsafe{k,\infty}$ is an safe input signal that makes the system avoid the Bad set as in \eqref{eq:safe_input}.}
\label{figure:input_signal}
\end{figure*}
\begin{figure*}[t!]
\centering
\includegraphics[width = 0.95\linewidth]{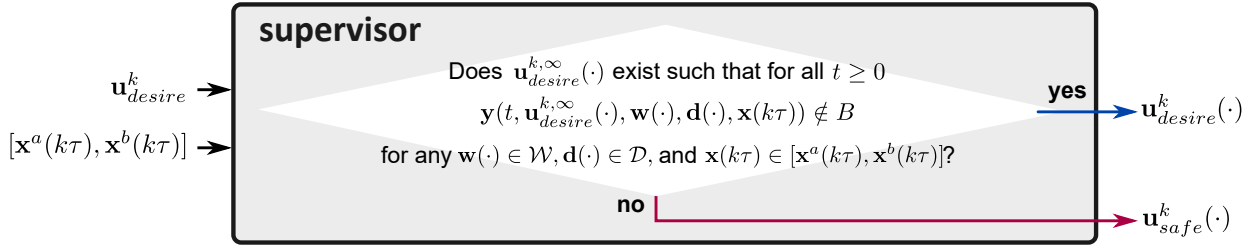}
\caption{Illustration of the supervisor. At time $k\tau$, the supervisor takes the desired input of controlled vehicles' drivers $\mathbf{u}^{k}_{desire}$ and the state estimation $[\lowx(k\tau),\upperx(k\tau)]$. Depending on the existence of $\aggdesire{k}$ that prevents all the vehicles from entering the Bad set for all $t$ for any $\aggsignal{w}\in\mathcal{W},\aggsignal{d}\in\mathcal{D},$ and $\mathbf{x}(k\tau)\in[\lowx(k\tau),\upperx(k\tau)]$, the supervisor either allows the drivers to travel with their desired input $\mathbf{u}_{desire}^{k}(\cdot)$ or overrides them with the safe input $\mathbf{u}_{safe}^{k}(\cdot)$.}
\label{figure:supervisor_design_problem}
\end{figure*}

\section{Problem Statement}\label{section:problemstatement}

Let us define an intersection collision. Since an intersection is modeled as a single conflict area, we consider a
collision occurs if at least two vehicles are
simultaneously inside the intersection. The output configurations corresponding to collisions belong to the \textit{Bad set}, denoted by $B\subset Y$. The Bad set is defined as follows:
\begin{align*}
B:= \{&\mathbf{y}\in Y: y_i\in (\alpha_i, \beta_i) \text{ and } y_j\in
(\alpha_j, \beta_j)\\
&\text{ for some } i\ne j \text{ such that }i\in \allset\text{ and } j\in
\mathcal{C}\}.
\end{align*}
Throughout this paper, we assume that uncontrolled vehicles do not crash among themselves. This assumption enables us to focus on preventing collisions in which at least one controlled vehicle is involved.

A supervisor runs in discrete time with a time step $\tau$. At time $k\tau$ where $k$ is a nonnegative integer, a desired input $\mathbf{u}_{desire}^k\in \mathbf{U}$ and a state $\mathbf{x}_m(k\tau)\in \mathbf{X}$ are measured. Then, a state estimation $[\lowx(k\tau), \upperx(k\tau)]$ is updated to satisfy $\mathbf{x}(k\tau)\in[\lowx(k\tau),\upperx(k\tau)]\subseteq [\mathbf{x}_m(k\tau)+\delta_{min}, \mathbf{x}_m(k\tau)+\delta_{max}]$. The desired input $\mathbf{u}_{desire}^k$ is a vector of current inputs of controlled vehicles' drivers at time $k\tau$. As illustrated in Figure~\ref{figure:input_signal}, we define desired input signals $\mathbf{u}_{desire}^k(\cdot)\in\mathcal{U}$ on time $[k\tau, (k+1)\tau)$ and $\aggdesire{k}\in \mathcal{U}$ on time $[k\tau, \infty)$ such that \begin{equation}\label{eq:desired_input}
\mathbf{u}_{desire}^k(t)=\mathbf{u}^{k,\infty}_{desire}(t)=\mathbf{u}_{desire}^k
~\text{for}~t\in [k\tau, (k+1)\tau).
\end{equation}   Also, we define a safe input signal $\aggsafe{k,\infty}$ on time $[k\tau,\infty)$ such that
\begin{align}
\begin{split}\label{eq:safe_input}
\forall&\aggsignal{w}\in\mathcal{W}, \aggsignal{d}\in\mathcal{D},\mathbf{x}(k\tau)\in[\lowx(k\tau),\upperx(k\tau)],\\
&\mathbf{y}(t,\aggsafe{k,\infty},\aggsignal{w},\aggsignal{d},\mathbf{x}(k\tau))\notin B~ \text{for all}~t.
\end{split}
\end{align}  Let $\aggsafe{k}$ be $\aggsafe{k,\infty}$ restricted to time $[k\tau, (k+1)\tau)$.  

The supervisor $s([\lowx(k\tau), \upperx(k\tau)],\mathbf{u}_{desire}^k)$ is designed according to the following problem, which is illustrated in Figure~\ref{figure:supervisor_design_problem}.

\begin{problem}[(Supervisor-design)]\label{problem:supervisor}
	Design a supervisor $s([\lowx(k\tau),\upperx(k\tau)],\mathbf{u}_{desire}^k)$ such that for any $\aggsignal{w}, \aggsignal{d},$ and $\mathbf{x}(k\tau)\in[\lowx(k\tau),\upperx(k\tau)]$, it returns
	$$	\begin{cases}
	\mathbf{u}_{desire}^k(\cdot) & \text{if}~\exists \aggdesire{k}:\text{
		for all}~ t\geq 0\\ 
	&\mathbf{y}(t,\aggdesire{k},\aggsignal{w},\aggsignal{d},\mathbf{x}(k\tau))\notin B\\
	
	\aggsafe{k} & \text{otherwise,}
	\end{cases}
	$$
	and it is non-blocking, that is, if $s([\lowx ((k-1)\tau),\upperx
	((k-1)\tau)],\mathbf{u}_{desire}^{k-1})\neq \emptyset$, then for any $\mathbf{u}_{desire}^{k}\in\mathbf{U},$ we have $ s([\lowx
	(k\tau),\upperx (k\tau)],\mathbf{u}_{desire}^{k})\neq \emptyset.$
\end{problem}

The supervisor in Problem~\ref{problem:supervisor} is least restrictive in the sense that overrides are activated only when the desired input of controlled vehicles' drivers would lead to collisions at some future times. 
To distinguish between safe and unsafe inputs, the supervisor needs to verify that the state reached using the desired input is compatible with a safe evolution of the system. This introduces the following problem.

\begin{problem}[(Verification)]\label{problem:verification}
	Given an initial state estimation $[\lowx(0), \upperx(0)]$, determine if there
	exists an input signal $\aggsignal{u}$ that guarantees
	$\mathbf{y}(t,\aggsignal{u},\aggsignal{w},\aggsignal{d},\mathbf{x}(0))\notin B$ for all $t\geq 0$
	for any $\aggsignal{w}\in\mathcal{W}, \aggsignal{d}\in\mathcal{D},$ and
	$\mathbf{x}(0)\in[\lowx(0),\upperx(0)]$.
\end{problem}

\begin{figure*}[t!]
	\centering
	\includegraphics[width=0.7\linewidth,keepaspectratio=true,angle=0]{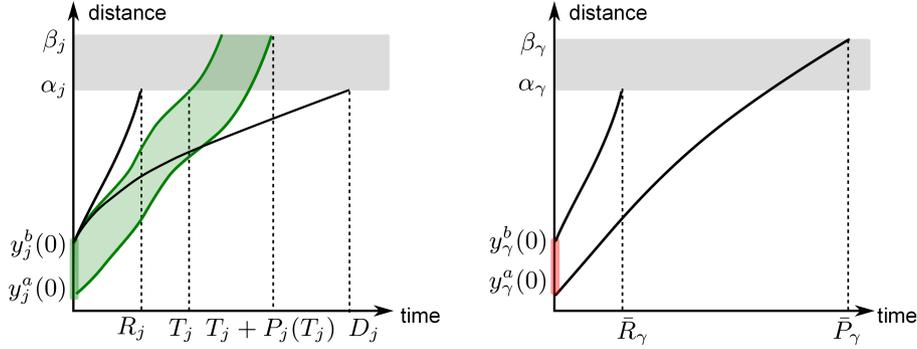}
	\caption{Scheduling parameters in
Definition~\ref{definition:schedulingparmeters}. Referring to Figure~\ref{figure:generalcars}, $(\alpha_j,\beta_j)$ is the location of an intersection along the route of vehicle~$j$.}
	\label{figure:parameters}
\end{figure*}
By solving Problem~\ref{problem:verification}, the supervisor determines if an override is necessary. If the desired input leads to a set of states at which  Problem~\ref{problem:verification} returns \textit{no}, the supervisor overrides controlled vehicles with a safe input signal. The two above problems are solved exactly in the next section, and approximately in Section~\ref{section:approxsol}.


\section{Exact solutions}\label{section:exactsol}

In this section, we provide the solutions of the verification problem (Problem~\ref{problem:verification}) and the supervisor-design problem (Problem~\ref{problem:supervisor}). To solve Problem~\ref{problem:verification}, we formulate an Inserted Idle-Time scheduling problem, which is solved straightforwardly, and prove that this problem is equivalent to Problem~\ref{problem:verification}. We then propose a solution to Problem~\ref{problem:supervisor}.

\subsection{Inserted Idle-Time scheduling problem}

Problem~\ref{problem:verification} can be translated into a scheduling problem, considering the intersection as a resource
that all vehicles must share. The schedule represents a sequence of the times at which each controlled vehicle enters an intersection, and the idle-times represent the sets of times during which an intersection is potentially occupied by uncontrolled vehicles. This analogy is characterized mathematically using the concept of decision problem equivalence.


\begin{definition}(\label{definition:equivalence}\cite{cormen_introduction_2009})
Consider two decision problems $A$ and $B$. The input to a particular problem is
called an \textbf{instance} of that problem. Then, $A$ is reducible to $B$ if
there is a procedure that transforms any instance $\alpha$ of $A$ into some
instance $\beta$ of $B$ in polynomial time, and the answer for $\alpha$ is
\textquotedblleft yes", denoted by $\alpha\in A$, if and only if the answer for
$\beta$ is \textquotedblleft yes", denoted by $\beta\in B$. We say $A$ is
\textbf{equivalent} to $B$ if and only if $A$ is reducible to $B$ and $B$ is
reducible to $A$.
\end{definition}
An instance $I$ of Problem~\ref{problem:verification} is described by $(
[\lowx(0),\upperx(0)],S)$ where $S:=(\mathbf{F,X,Y},\mathcal{U},\mathcal{W},
\mathcal{D},\mathcal{C},\bar{\mathcal{C}},\alpha_1,\ldots,\alpha_n,\beta_1,\ldots,\beta_n)$.

To formulate the IIT scheduling problem, we define scheduling parameters.

\begin{definition}\label{definition:schedulingparmeters}
Given an initial state estimation $[\lowx(0), \upperx(0)]$, release times $R_j$,
deadlines $D_j$, and process times $P_j(T_j)$ are defined for controlled
vehicles. For $j\in \mathcal{C}$, if $y_j^b
(0)< \alpha_j$,
\begin{align*}
& R_j := \min_{\signal{u}{j}\in \mathcal{U}_j} \{ t \geq 0: y^b_{j}
(t,\signal{u}{j},x^b_j(0)) = \alpha_j \},\\
& D_j := \max_{\signal{u}{j}\in \mathcal{U}_j} \{ t\geq 0:
y^b_{j}(t,\signal{u}{j},x^b_j(0))=\alpha_j \}.
\end{align*}
Given a real number $T_j> 0$,
\begin{align*}
P_j(T_j):= &\min_{\signal{u}{j}\in \mathcal{U}_j}\{ t-T_j\geq 0:\\
&y^a_{i}(t,\signal{u}{j},x^a_j(0))=\beta_j\\
&\hspace{0.3 in} \text{with constraint}~ y^b_j(T_j,\signal{u}{j},x^b_j(0)) = \alpha_j\}.
\end{align*}
If $y^b_{j}(0) \geq \alpha_j$, then $R_j=0, D_j=0$, and
$P_j(T_j)=\min_{\signal{u}{j}\in \mathcal{U}_j}\{ t:
y^a_{j}(t,\signal{u}{j},x^a_j(0))=\beta_j\}$. If $y^a_{j}(0) \geq \beta_j$, then
set $R_j=0, D_j=0$, and $P_j(T_j)=0$. If the constraint is not satisfied, set
$P_j(T_j)=\infty$.

Idle-times $(\bar{R}_\gamma, \bar{P}_\gamma)$ are defined for uncontrolled
vehicles. For $\gamma\in \bar{\mathcal{C}}$, if $y^b_\gamma(0) <
\alpha_{\gamma}$,
\begin{align*}
& \bar{R}_{\gamma}:= \{t\geq 0: y^b_\gamma(t)=\alpha_\gamma\},\\
& \bar{P}_{\gamma}:=\{t\geq 0: y^a_\gamma(t)=\beta_\gamma\}.
\end{align*}
If $y^b_\gamma(0)\geq \alpha_\gamma$, set $\bar{R}_\gamma=0$ and
$\bar{P}_\gamma=\{t: y^a_\gamma(t)=\beta_\gamma\}$. If $y^a_\gamma(0) \geq
\beta_\gamma$, set $\bar{R}_\gamma=0$ and $\bar{P}_\gamma=0$. 
\end{definition}

In this definition, release time $R_j$ is the earliest that controlled
vehicle $j$ can enter an intersection while deadline $D_j$ is the
latest. Given that controlled vehicle $j$ enters the intersection no earlier than time $T_j$, process time $P_j(T_j)$ is the earliest that it can cross the intersection. Uncontrolled vehicle $\gamma$
enters and exits the intersection within the idle-time $(\bar{R}_\gamma,
\bar{P}_\gamma)$ regardless of its driver-input and disturbance signals. These parameters are illustrated in Figure~\ref{figure:parameters}.

The Inserted Idle-time (IIT) scheduling problem is formulated as follows.
\begin{problem}[(IIT scheduling)]\label{problem:scheduling}
	Given an initial state estimation $[\lowx(0), \upperx(0)]$, determine whether
there exists a schedule $\mathbf{T}=(T_1,\ldots, T_{n_c})\in
\mathbb{R}^{n_c}_{+}$ such that	for all $j\in \mathcal{C}$, 
	\begin{equation}
	\label{condition:boundedinput}
	R_j\leq T_j\leq D_j,
	\end{equation}
	for all $i\ne j \in \mathcal{C}$,
	\begin{equation}
	\label{condition:controlled}
	(T_i, T_i+P_i(T_i))\cap(T_j,T_j+P_j(T_j)) = \emptyset,
	\end{equation}
	for all $j\in \mathcal{C}$ and $\gamma \in \uncontrolled$,
	\begin{equation}
	(T_j, T_j+P_j(T_j)) \cap  (\bar{R}_\gamma,\bar{P}_\gamma) = \emptyset.
\label{condition:uncontrolled}
	\end{equation}
\end{problem}

Notice that $T_j$ and $P_j(T_j)$ are defined such that for some $\signal{u}{j}$, $y_j^b(T_j,\signal{u}{j},x_j^b(0))=\alpha_j$ and $y_j^a(T_j+P_j(T_j),\signal{u}{j},x_j^a(0))=\beta_j$. Condition~\eqref{condition:boundedinput} represents the constraint induced on the schedule by the bounded input signals. By
condition~\eqref{condition:controlled}, the times during which controlled
vehicles $i$ and $j$ occupy the intersection for any disturbance do not
overlap. This implies that a collision between controlled vehicles $i$ and
$j$ is averted. Similarly, condition~\eqref{condition:uncontrolled} implies that vehicle $j$ does not occupy the intersection during the
idle-time, thereby preventing a collision between controlled vehicle $j$ and uncontrolled vehicle $\gamma$. Thus, a schedule satisfying the above conditions is related to an input
signal that can prevent any future collision. This is the essence of the proof of the following theorem.

\begin{theorem}\label{theorem:equivalence}
	Problem \ref{problem:verification} and Problem \ref{problem:scheduling} are
equivalent.
\end{theorem}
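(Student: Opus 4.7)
The plan is to follow Definition~\ref{definition:equivalence} and exhibit polynomial-time reductions in both directions together with yes/no preservation. Both problems take the same data $[\lowx(0),\upperx(0)]$, so the required transformation is simply the computation of the scheduling parameters $R_j,D_j,P_j(\cdot),\bar{R}_\gamma,\bar{P}_\gamma$ from the state estimation via Definition~\ref{definition:schedulingparmeters}; these are one-dimensional optimizations over $\mathcal{U}_j$ or $\mathcal{W}_\gamma$ and are executable in polynomial time. The substantive content is the if-and-only-if between the existence of a safe input signal and the existence of a valid schedule.

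The key intermediate fact I would prove first is the following occupancy characterization: under a fixed input $\signal{u}{j}$, the set of times at which vehicle $j$ can lie in $(\alpha_j,\beta_j)$ for some $\signal{d}{j}\in\mathcal{D}_j$ and some $x_j(0)\in[x_j^a(0),x_j^b(0)]$ is exactly the open interval $(T_j^-,T_j^+)$, where $T_j^-$ is the time at which $y_j^b(\cdot,u_j,x_j^b(0))$ hits $\alpha_j$ and $T_j^+$ the time at which $y_j^a(\cdot,u_j,x_j^a(0))$ hits $\beta_j$; an analogous statement holds for uncontrolled vehicle $\gamma$ with interval $(\bar{R}_\gamma,\bar{P}_\gamma)$. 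This follows from the sandwich \eqref{equation:stateestimation_guarantee}, monotonicity (Assumption~\ref{assumption:y_nondecreasing_t}), and continuity of the flow in $(d_j,x_j(0))$, which together make the range of $y_j(t)$ across disturbances and initial states equal to $[y_j^a(t),y_j^b(t)]$. Since the disturbances and driver-inputs of distinct vehicles are independent, safety of $\aggsignal{u}$ is then equivalent to pairwise disjointness of these per-vehicle intervals whenever at least one controlled vehicle is involved.

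For the forward direction ($\Rightarrow$), given a safe $\aggsignal{u}$, set $T_j := T_j^-$ for each $j \in \mathcal{C}$. Condition \eqref{condition:boundedinput} follows immediately from the min/max definitions of $R_j$ and $D_j$. Because $u_j$ is an admissible witness for the constraint $y_j^b(T_j,u_j,x_j^b(0)) = \alpha_j$ appearing in the definition of $P_j(T_j)$, the minimality of $P_j(T_j)$ gives $T_j+P_j(T_j)\le T_j^+$, so $(T_j,T_j+P_j(T_j))\subseteq(T_j^-,T_j^+)$; the disjointness established above then yields \eqref{condition:controlled} and \eqref{condition:uncontrolled}. Conversely ($\Leftarrow$), given a schedule satisfying \eqref{condition:boundedinput}--\eqref{condition:uncontrolled}, the minimum defining $P_j(T_j)$ is attained by some $u_j^\star$ that simultaneously achieves $y_j^b(T_j,u_j^\star,x_j^b(0)) = \alpha_j$ and $y_j^a(T_j+P_j(T_j),u_j^\star,x_j^a(0)) = \beta_j$; the composite input $\aggsignal{u}^\star$ then has per-vehicle occupancy intervals exactly $(T_j,T_j+P_j(T_j))$ and $(\bar{R}_\gamma,\bar{P}_\gamma)$, which are disjoint by hypothesis, so together with the assumption that uncontrolled vehicles do not collide among themselves we recover safety.

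The main obstacle I expect is the precise justification of the occupancy characterization. The nontrivial implication --- that for every $t\in(T_j^-,T_j^+)$ some disturbance/initial-state pair places $y_j(t)\in(\alpha_j,\beta_j)$ --- amounts to an intermediate-value statement that the range of $y_j(t)$ over $(d_j,x_j(0))$ is the full closed interval $[y_j^a(t),y_j^b(t)]$. This relies on continuity in $(d_j,x_j(0))$ and connectedness of the disturbance domain, neither of which is stated as an explicit assumption but both of which are standard for the ODE model \eqref{equation:generalmodel} and are implicit in the definitions \eqref{equation:controlledestimate}--\eqref{equation:uncontrolledestimate}; flagging this point explicitly will be the most delicate part of the write-up.
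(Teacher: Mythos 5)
Your proof follows essentially the same route as the paper's: you define $T_j$ as the time at which the upper position bound reaches $\alpha_j$, use the minimality in the definition of $P_j(T_j)$ to contain the occupancy interval for the forward direction, and realize the schedule by a witness input (whose existence rests on Assumption~\ref{assumption:path-connected}) for the converse. Your explicit occupancy-characterization lemma, and the intermediate-value/connectedness point you flag, simply make precise a step that the paper's proof uses implicitly when it asserts that safety forces $y_i^a(\tilde{T}_j)\geq\beta_i$, so the argument is sound and not materially different.
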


\begin{proof}
By Definition~\ref{definition:equivalence}, we need to show two things: Problem~\ref{problem:verification} is reducible to
Problem~\ref{problem:scheduling}, and Problem~\ref{problem:scheduling}
is reducible to Problem~\ref{problem:verification}. Notice that an instance $I$ of Problem~\ref{problem:scheduling} is also described by $([\lowx(0),\upperx(0)],S)$, which is the same as an instance of Problem~\ref{problem:verification}. Thus, the transformation between the instances of these problems takes constant time. Now, the following relation is left to prove the equivalence.
$$I\in \text{Problem~\ref{problem:verification}} \Leftrightarrow I\in \text{
Problem~\ref{problem:scheduling}}.$$

($\Rightarrow$) Given an initial state estimation $[\lowx(0), \upperx(0)]$, there is an input signal $\aggsignal{\tilde{u}}\in
\mathcal{U}$ such that $\mathbf{y}(t,\aggsignal{\tilde{u}},\aggsignal{w},
\aggsignal{d},\mathbf{x}(0))\notin B$ for all $t$ for any
$\aggsignal{w}\in\mathcal{W}$ and $\aggsignal{d}\in \mathcal{D}$ for any
$\mathbf{x}(0)\in[\lowx(0),\upperx(0)]$.


Using $\signal{\tilde{u}}{j}$ for $j\in\mathcal{C}$, which denotes the $j$-th
element of $\aggsignal{\tilde{u}}$, let $\tilde{T}_j$ be the time at which
$y^b_j(t,\signal{\tilde{u}}{j},x^b_j(0))=\alpha_j$ if $y^b_j(0)< \alpha_j$. If $y^b_j(0)\geq \alpha_j$, set
$\tilde{T}_j=0$. This
satisfies the constraint of $P_j(\tilde{T}_j)$ in
Definition~\ref{definition:schedulingparmeters}. Let $\tilde{T}_j+\tilde{P}_j(\tilde{T}_j)$ be the time at which
$y^a_j(t,\signal{\tilde{u}}{j},x^a_j(0))=\beta_j$. Set
$\tilde{P}_j(\tilde{T}_j)=0$ if $y^a_j(0)\geq \beta_j$. 

By the definitions of $R_j$ and $D_j$, condition~\eqref{condition:boundedinput} is satisfied. Suppose
without loss of generality, $\tilde{T}_i \leq \tilde{T}_j$. At $t=\tilde{T}_j$, we have $y_j^b(t,\signal{\tilde{u}}{j},x^b_j(0))= \alpha_j$. Since $\signal{\tilde{u}}{i}$ and $\signal{\tilde{u}}{j}$ guarantee that at most one vehicle is inside an intersection, we must have $y_i^a(t,\signal{\tilde{u}}{i},x_i^a(0))\geq \beta_i$. Because $y^a_j(t)$ is non-decreasing in time, $\tilde{T}_i+\tilde{P}_i(\tilde{T}_i)\leq t$. By the definition of $P_i(\tilde{T}_i)$, we have
$\tilde{T}_i+P_i(\tilde{T}_i)\leq \tilde{T}_i+\tilde{P}_i(\tilde{T}_i)\leq t=\tilde{T}_j$,
which concludes $(\tilde{T}_i,
\tilde{T}_i+P_i(\tilde{T}_i))\cap(\tilde{T}_j,\tilde{T}_j+P_j(\tilde{T}_j)) =
\emptyset$ (condition~\eqref{condition:controlled}). 

In order to avoid the Bad set, vehicle $\gamma\in\uncontrolled$ is not inside an intersection during $[\tilde{T}_j,\tilde{T}_j+\tilde{P}(\tilde{T}_j)]$  for any $\signal{w}{\gamma}\in\mathcal{W}_\gamma, \signal{d}{\gamma}\in\mathcal{D}_{\gamma},$ and $x_\gamma(0)\in [x^a_\gamma(0), x^b_\gamma(0)]$. Since $y_\gamma^b(\bar{R}_\gamma)=\alpha_\gamma$ and $y_\gamma^a(\bar{P}_\gamma)=\beta_\gamma$ by Definition~\ref{definition:schedulingparmeters}, we have $(\tilde{T}_j,
\tilde{T}_j+\tilde{P}_j(\tilde{T}_j))\cap(\bar{R}_\gamma,\bar{P}_\gamma) =
\emptyset$. By the definition of
$P_j(\tilde{T}_j)$, we have $(\tilde{T}_j, \tilde{T}_j+P_j(\tilde{T}_j))\subset (\tilde{T}_j, \tilde{T}_j+\tilde{P}_j(\tilde{T}_j))$. Thus, $(\tilde{T}_j, \tilde{T}_j+P_j(\tilde{T}_j)) \cap 
(\bar{R}_\gamma,\bar{P}_\gamma) = \emptyset$.
(condition~\eqref{condition:uncontrolled}). 

This proves that there exists the schedule $\tilde{\mathbf{T}}$ that satisfies
conditions~\eqref{condition:boundedinput}, \eqref{condition:controlled}, and
\eqref{condition:uncontrolled}.

\medskip\noindent($\Leftarrow$) Given an initial state estimation $[\lowx(0),
\upperx(0)]$, there exists a schedule $\tilde{\mathbf{T}}\in \mathbb{R}^{n_c}$
that satisfies the conditions of Problem~\ref{problem:scheduling}. 

For $j\in\mathcal{C}$, define $\signal{\tilde{u}}{j}$ such that $y^b_j(\tilde{T}_j,\signal{\tilde{u}}{j},x^b_j(0))=\alpha_j$ if $y^b_j(0)<\alpha_j$. If $y^b_j(0)\geq \alpha_j$ and $y^a_j(0)\leq \beta_j$, define $\signal{\tilde{u}}{j}$ such that $y^a_j(\tilde{T}_j+P_j(\tilde{T}_j),\signal{\tilde{u}}{j},x^a_j(0))=\beta_j$. If $y^a_j(0)>\beta_j$, we do not consider vehicle $j$ because it has already crossed the intersection. Since $y^a_j(t,\signal{u}{j},x^a_j(0))$ depends continuously on $\signal{u}{j}\in\mathcal{U}_j$ and $\mathcal{U}_j$ is path connected by Assumption~\ref{assumption:path-connected}, condition~\eqref{condition:boundedinput} implies that such an input signal exists, i.e., $\signal{\tilde{u}}{j}\in\mathcal{U}_j$.

Consider $\signal{\tilde{u}}{i}$ and $\signal{\tilde{u}}{j}$ for $i\ne j\in\mathcal{C}$ where $\tilde{T}_i \leq \tilde{T}_j$. Condition~\eqref{condition:controlled} says
$\tilde{T}_i+P_i(\tilde{T}_i)\leq \tilde{T}_j$. At time
$t\in[\tilde{T}_i+P_i(\tilde{T}_i),\tilde{T}_j]$, we have
$y^a_i(t,\signal{\tilde{u}}{i},x^a_i(0))\geq \beta_i$ and $y^b_j(t,
\signal{\tilde{u}}{j},x^b_j(0))\leq \alpha_j$. Thus, for any disturbance and initial condition, vehicle $j$ enters the intersection after vehicle $i$ leaves it.

For $j\in\mathcal{C}$ and $\gamma\in\uncontrolled$,
condition~\eqref{condition:uncontrolled} implies either $\bar{P}_\gamma\leq T_j$
or $\tilde{T}_j+P_j(\tilde{T}_j) \leq \bar{R}_\gamma$. In the first case, at any
time $t\in[\bar{P}_\gamma, T_j]$, we have $y^a_\gamma(t)\geq \beta_\gamma$ and 
$y^b_j(t,\signal{\tilde{u}}{j},x^b_j(0))\leq \alpha_j$ so that for any $\signal{w}{\gamma}\in\mathcal{W}_\gamma, \signal{d}{\gamma}\in\mathcal{D}_\gamma, x_\gamma(0)\in[x_\gamma^a(0),x_\gamma^b(0)]$ and for any $\signal{d}{j}\in\mathcal{D}_j, x_j(0)\in[x_j^a(0), x_j^b(0)]$, after vehicle $\gamma$
exits the intersection, vehicle $j$ enters it. In the second case, at any time
$t\in[\tilde{T}_j+P_j(\tilde{T}_j),\bar{R}_\gamma]$, we have
$y^a_j(t,\signal{\tilde{u}}{j},x^a_j(0))\geq\beta_j$ and $y^b_\gamma(t)\leq
\alpha_\gamma$. This means after vehicle $j$ exits the intersection for any $\signal{d}{j}\in\mathcal{D}_j$ and $x_j(0)\in[x_j^a(0), x_j^b(0)]$, vehicle $\gamma$ enters it for any $\signal{w}{\gamma}\in\mathcal{W}_\gamma, \signal{d}{\gamma}\in\mathcal{D}_\gamma,$ and $x_\gamma(0)\in[x_\gamma^a(0),x_\gamma^b(0)]$. 

Therefore, the schedule $\tilde{\mathbf{T}}$ ensures that there exists an input
signal $\aggsignal{\tilde{u}}$ which prevents entering the Bad set. \qed
\end{proof}

We now provide an algorithm that solves Problem~\ref{problem:scheduling}, which, in turn, solves Problem~\ref{problem:verification} by Theorem~\ref{theorem:equivalence}. This algorithm contains two procedures. The first procedure, \procedureS, generates a schedule $\mathbf{T}$ given a sequence $\pi$ such that $T_{\pi_i}\leq T_{\pi_j}$ if $i<j$, and evaluates whether $\mathbf{T}$ satisfies conditions~\eqref{condition:boundedinput}-\eqref{condition:uncontrolled}. Here, $\pi_i$ denotes the $i$-th entry of $\pi$ and $T_{\pi_i}$ the $\pi_i$-entry of $\mathbf{T}$. The second procedure, \procedureES, inspects all possible sequences until a sequence with a feasible schedule is found. If a feasible schedule is found, the answer of Problem~\ref{problem:verification} is \textit{yes}, and otherwise, the answer is \textit{no}.

Algorithm~\ref{algorithm:verification} focuses on vehicles before an intersection. The set of such vehicles is denoted by $\mathcal{M}:=\{j\in \mathcal{C}:y_j^b(0)<\alpha_j\}$. For simplicity, let $y^b_j(0)<\alpha_j$ for $j\in \{1,\ldots,|\mathcal{M}|\}$ and $y^b_j(0) \geq \alpha_j$ for $j\in\{|\mathcal{M}|+1,\ldots,n_c\}$. Let $\mathcal{P}$ be a set of all permutations of $\mathcal{M}$. Its
cardinality is $(|\mathcal{M}|)!=1\times 2\times \ldots\times |\mathcal{M}|$. Let $\pi\in
\mathbb{R}^{|\mathcal{M}|}$ be a vector in $\mathcal{P}$. Let $\bar{\pi}\in\mathbb{R}^{n_{\bar{c}}}$ be a vector of $\gamma\in \bar{\mathcal{C}}$
in an increasing order of $\bar{R}_\gamma$, that is, $\bar{R}_{\bar{\pi}_{j}}\leq \bar{R}_{\bar{\pi}_{j'}}$ for $j\leq j'$. Let $\pi_{0,j}$ and $\bar{\pi}_j$ denote the
$j$-th entry of $\pi_0$ and $\bar{\pi}$, respectively.
\begin{figure*}[t!]
	\centering
	\includegraphics[width=\linewidth]{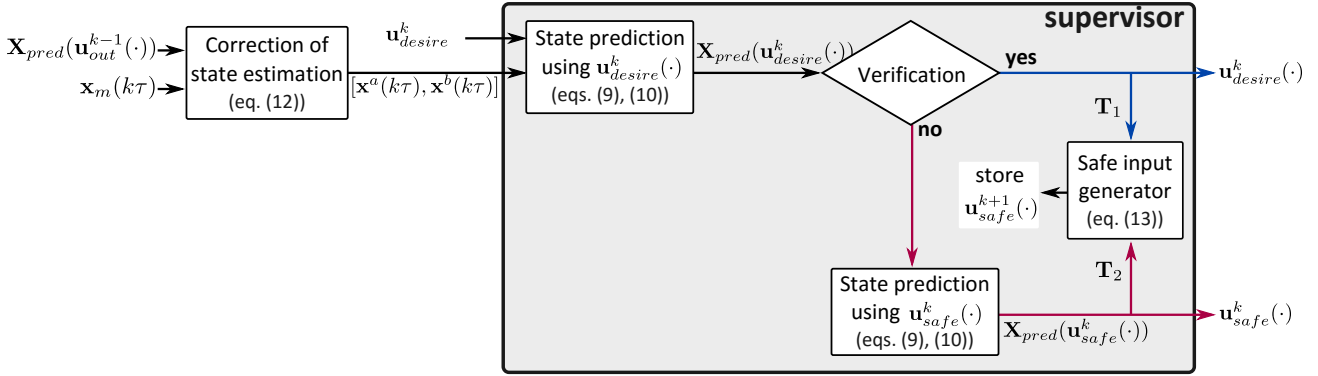}
	\caption{Schematic of the supervisor at time $k\tau$. Using the state measurement $\mathbf{x}_m(k\tau)$ and the previous step's state prediction $\mathbf{X}_{pred}(\aggout{k-1})$, the state estimation is corrected to $[\lowx(k\tau),\upperx(k\tau)]$. The supervisor takes $[\lowx(k\tau),\upperx(k\tau)]$ and the current desired input $\mathbf{u}_{desire}^k$ and returns either $\mathbf{u}_{desire}^k(\cdot)$ or $\aggsafe{k}$ depending on the answer of the verification problem. When the answer is \textit{no}, the state prediction $\mathbf{X}_{pred}(\aggsafe{k})$ guarantees the existence of a feasible schedule $\mathbf{T}_2$, thereby ensuring the non-blocking property of the supervisor. At each step, a safe input signal $\aggsafe{k+1}$ is generated using a feasible schedule and stored for a possible use at the next step.}
	\label{figure:supervisor}
\end{figure*}
\begin{algorithm}[H]
	\begin{algorithmic}[1]
		\Procedure{\procedureS}{$\pi_0,[\lowx(0), \upperx(0)],S$}
		\ForAll{$j\in\mathcal{C}$} {calculate $R_j,D_j$}\label{scheduling:RjDj}
		\EndFor
		\ForAll{$\gamma\in\uncontrolled$} {calculate
			$\bar{R}_{\gamma},\bar{P}_{\gamma}$}\label{scheduling:RbarPbar}
		\EndFor
		\State $\mathcal{M} = \{j\in\mathcal{C}: y^b_j(0)<\alpha_j\}$ and $\bar{\mathcal{M}}=\mathcal{C}\setminus \mathcal{M}$ \label{scheduling:M_and_Mbar}
		\State $T_j = 0~\forall~j\in\bar{\mathcal{M}}$ and $P_{max} = \max_{j\in \bar{\mathcal{M}}} P_j(T_j)$ \label{scheduling:Pmax}
		\State $\pi = (\pi_{0,j}: \pi_{0,j}\in \mathcal{M}~\text{for}~j=1~\text{to}~|\pi_0|)$\label{scheduling:pi_ne_M}
		\For{$i= 1$ to $|\mathcal{M}|$}
		\label{scheduling:bigforloop}
		\If{$i=1$} $T_{\pi_1}= \max(R_{\pi_1},P_{max})$\label{scheduling:i=1} 
		\ElsIf{$i\geq 2$} 
		\State $T_{\pi_i}=
		\max(R_{\pi_i},T_{\pi_{i-1}}+P_{\pi_{i-1}}(T_{\pi_{i-1}}))$\label{scheduling:con1and2}
		\EndIf
		
		\For{$j= 1$ to $n_{\bar{c}}$}\label{scheduling:innerforloop}
		\If{$T_{\pi_i} \geq \bar{R}_{\bar{\pi}_j}$} \label{scheduling:Tj>Rbar}
		\State $T_{\pi_i}=
		\max(T_{\pi_i},\bar{P}_{\bar{\pi}_j})$\label{scheduling:Tj>Rbar_then>Pbar}
		\ElsIf{$T_{\pi_{i}}+P_{\pi_i}(T_{\pi_i})>\bar{R}_{\bar{\pi}_j}$}\label{scheduling:Pj>Rbar}
		\State $T_{\pi_i}=
		\bar{P}_{\bar{\pi}_j}$\label{scheduling:Pj>Rbar_thenTj=Pbar}
		\EndIf
		\EndFor\label{scheduling:endinnerfor}
		\EndFor\label{scheduling:endi=2tom}
		\If{$T_j \leq D_j$ for all $j\in \mathcal{C}$} \textbf{return} $(\mathbf{T},yes)$\label{scheduling:checkT<D}
		\EndIf
		\State \textbf{return} $(\emptyset,no)$
		\EndProcedure\label{scheduling:lastline}
		
		\Procedure{\procedureES}{$[\lowx(0), \upperx(0)],S$}
		\If{$[\mathbf{y}^a(0), \mathbf{y}^b(0)]\cap B\ne \emptyset$}~\textbf{return} $(\emptyset,no)$\label{exact:already_in_B}
		\EndIf
		\State $\mathcal{M} = \{j\in\mathcal{C}: y^b_j(0)<\alpha_j\}$ and $\bar{\mathcal{M}}=\mathcal{C}\setminus \mathcal{M}$ \label{exact:M_and_Mbar}
		\If{$|\mathcal{M}| = 0$} \textbf{return} $(\emptyset,yes)$ \EndIf\label{exact:Mzero}
		\ForAll{$\pi\in\mathcal{P}$} \label{exact:permutation}
		\State $(\mathbf{T},ans)=\procedureS(\pi, [\lowx(0), \upperx(0)],S)$
		\If{$ans=yes$} \textbf{return} $(\mathbf{T},yes)$ 
		\EndIf
		\EndFor\label{exact:end_permutation}	
		\State \textbf{return} $(\emptyset,no)$
		\EndProcedure\label{exact:lastline}
	\end{algorithmic}
	\caption{Exact Solution of Problem~\ref{problem:scheduling}}
	\label{algorithm:verification}
\end{algorithm}

Procedure \procedureS~in Algorithm~\ref{algorithm:verification} works as
follows. In lines \ref{scheduling:RjDj}-\ref{scheduling:RbarPbar}, the scheduling
parameters in Definition~\ref{definition:schedulingparmeters} are computed given
an initial state estimation. In lines~\ref{scheduling:M_and_Mbar} and \ref{scheduling:Pmax}, if $y^b_j(0)\geq \alpha_j$, $T_j=0$ because $R_j=D_j=0$ and $P_{max}$ is the time at which all vehicles $j\in\bar{\mathcal{M}}$ exit the intersection. The input argument $\pi_0$ is the vector of the indexes of all vehicles controlled by the supervisor. The assignment at line~\ref{scheduling:pi_ne_M} extracts from $\pi_0$ the sub-vector $\pi$ of vehicles which have not yet reached the intersection. The schedule needs only be computed for the sub-vector $\pi$.


In a given sequence $\pi$, vehicle $\pi_{i-1}$ crosses
the intersection earlier than vehicle $\pi_{i}$. Given this sequence $\pi$, procedure \procedureS~finds the earliest possible schedule. 
In the \textbf{for} loop of
lines~\ref{scheduling:innerforloop}-\ref{scheduling:endinnerfor}, uncontrolled vehicle $\bar{\pi}_j$ is considered. If
line~\ref{scheduling:Tj>Rbar} is true, then
line~\ref{scheduling:Tj>Rbar_then>Pbar} ensures that $T_{\pi_i}\geq
\bar{P}_{\bar{\pi}_j}$ so that condition~\eqref{condition:uncontrolled} is
satisfied. Otherwise, if line~\ref{scheduling:Pj>Rbar} is true, meaning that
$T_{\pi_i}<\bar{R}_{\bar{\pi}_j}< T_{\pi_{i}}+P_{\pi_i}(T_{\pi_i})$, then
$T_{\pi_i}$ is delayed to $\bar{P}_{\bar{\pi}_j}$ in line~\ref{scheduling:Pj>Rbar_thenTj=Pbar} so that
condition~\eqref{condition:uncontrolled} becomes satisfied. The schedule $T_{\pi_i}$ takes $\bar{P}_{\bar{\pi}_j}$, which is the earliest possible value. Otherwise,
$T_{\pi_{i}}+P_{\pi_i}(T_{\pi_i})\leq \bar{R}_{\bar{\pi}_j}$ so that
condition~\eqref{condition:uncontrolled} is guaranteed. Lastly, in
line~\ref{scheduling:checkT<D}, the right inequality of
condition~\eqref{condition:boundedinput} is checked. If
line~\ref{scheduling:checkT<D} is true, this procedure returns a feasible schedule and answer \textit{yes}. If $T_j > D_j$ for some $j$, since the schedule $\mathbf{T}$ is constructed so that it is the earliest possible value, there cannot be another schedule $\tilde{\mathbf{T}}$ such that $\tilde{T}_j \leq D_j$. Thus, no feasible schedule can be found in this sequence $\pi$, and the procedure returns \textit{no}.

Procedure~\procedureES~in Algorithm~\ref{algorithm:verification} solves Problem~\ref{problem:scheduling} by inspecting all 
permutations in $\mathcal{P}$ until a feasible schedule is found as noted in lines~\ref{exact:permutation}-\ref{exact:end_permutation}. In line~\ref{exact:already_in_B}, if a given initial position estimation is already inside the Bad set, the procedure returns \textit{no}. In line~\ref{exact:Mzero}, if $y^b_j(0) \geq \alpha_j$ for all $j\in\mathcal{C}$, then it returns \textit{yes} since all controlled vehicles have entered the intersection. If no feasible schedule is found after evaluating all permutations in $\mathcal{P}$, an empty set and answer \textit{no} are returned in line~\ref{exact:lastline}.

The running time of procedure~\procedureES~is determined by the \textbf{for}
loops of lines~\ref{exact:permutation}-\ref{exact:end_permutation}.
For the worst case, all permutations in $\mathcal{P}$ must be evaluated. As the
number of controlled vehicles increases,
the computation time increases exponentially. Indeed, a scheduling problem is known to be NP-hard (\cite{colombo_efficient_2012}).
To avoid this computational complexity issue, we devise an approximate
algorithm to solve the IIT scheduling problem with the guarantee of a quantified approximation bound. This approximate algorithm is provided in Section~\ref{section:efficientverification}.

\subsection{Exact supervisor}\label{section:supervisor}
In this section, we solve Problem~\ref{problem:supervisor} by providing an algorithm implementing the exact supervisor. 


The schematic of a supervisor at time $k\tau$ is illustrated in Figure~\ref{figure:supervisor}. The state prediction computed at the previous step $\mathbf{X}_{pred}(\aggout{k-1})$ is known, where $\aggout{k-1}$ is the output that the supervisor returns at the previous step, that is, controlled vehicles traveled with $\aggout{k-1}$ for time $[(k-1)\tau, k\tau)$. Also, the state measurement $\mathbf{x}_m(k\tau)$ is received. These state prediction and measurement are used to update the state estimation $[\lowx(k\tau),\upperx(k\tau)]$. This state estimation and the desired input $\mathbf{u}_{desire}^k$ are the inputs to the supervisor and used to predict the state at the next time step, denoted by $\mathbf{X}_{pred}(\mathbf{u}_{desire}^k(\cdot))$. Given this state prediction, the verification problem is solved. If the answer is \textit{yes}, the supervisor allows the controlled vehicles to travel with the desired input for time $[k\tau,(k+1)\tau)$. A safe input signal $\aggsafe{k+1}$ is generated using a feasible schedule $\mathbf{T}_1$ and stored for a possible use at the next time step. If the answer is \textit{no}, a safe input signal $\aggsafe{k}$ stored at the previous step is used to override the controlled vehicles. It will be proved in this section that the state prediction $\mathbf{X}_{pred}(\aggsafe{k})$ always has a feasible schedule $\mathbf{T}_2$, that is, $\mathbf{X}_{pred}(\aggsafe{k})\in$~Problem~\ref{problem:verification}. A safe input signal $\aggsafe{k+1}$ is generated using $\mathbf{T}_2$ and stored. The output $\aggout{k}$ is either $\mathbf{u}_{desire}^k(\cdot)$ or $\aggsafe{k}$, and the state prediction $\mathbf{X}_{pred}(\aggout{k})$ will be used at the next step to correct the state estimation with a new state measurement.

In the last section, we presented Algorithm~\ref{algorithm:verification} that solves the verification problem. This section describes the remaining components: state prediction, correction of the state estimation, and a safe input generator. Also, an algorithm to implement the supervisor is presented with the proof that this supervisor is the solution of Problem~\ref{problem:supervisor}.

\subsubsection{State prediction.} 

Using the dynamic model \eqref{equation:generalmodel}, this function predicts the state reached at the next time step with a given input signal starting from a set of states. Suppose that at time $k\tau$, we have the state estimation $[\lowx(k\tau), \upperx(k\tau)]$ and the input $\mathbf{u}^k(\cdot)\in\mathcal{U}$ defined on time $[k\tau, (k+1)\tau)$. The state prediction $\mathbf{X}_{pred}(\mathbf{u}^k(\cdot))=[\min\mathbf{X}_{pred}(\mathbf{u}^k(\cdot)), \max \mathbf{X}_{pred}(\mathbf{u}^k(\cdot))]\subset \mathbf{X}$ is a set of possible states at the next time step. Its $i$-th set is denoted by $X_{pred,i}(u^k_i(\cdot))=[\min X_{pred,i}(u^k_i(\cdot)), \max X_{pred,i}(u^k_i(\cdot))]$. The state prediction is defined as follows: for $j\in \mathcal{C}$,
\begin{align}
\begin{split}\label{equation:stateprediction_j}
&\min{X_{pred,j}(u^k_j(\cdot))} = \min_{\signal{d}{j}\in \mathcal{D}_j}
x_j(\tau,u_j^k(\cdot),\signal{d}{j},x^a_j(k\tau)),\\
&\max{X_{pred,j}(u^k_j(\cdot))} = \max_{\signal{d}{j}\in \mathcal{D}_j}
x_j(\tau,u_j^k(\cdot),\signal{d}{j}, x^b_j(k\tau)).
\end{split}
\end{align}
For $\gamma\in\uncontrolled,$
\begin{align}
\begin{split}\label{equation:stateprediction_gamma}
&\min{X_{pred,\gamma}} = \min_{\begin{subarray} 
a\signal{w}{\gamma}\in \mathcal{W}_\gamma,\\
	\signal{d}{\gamma}\in \mathcal{D}_\gamma
\end{subarray}} x_\gamma(\tau,\signal{w}{\gamma},\signal{d}{\gamma},x^a_\gamma(k\tau)),\\
&\max{X_{pred,\gamma}} = \max_{\begin{subarray}
a\signal{w}{\gamma}\in \mathcal{W}_\gamma,\\
\signal{d}{\gamma}\in \mathcal{D}_\gamma
\end{subarray}} x_\gamma(\tau,\signal{w}{\gamma},\signal{d}{\gamma},x^b_\gamma(k\tau)).
\end{split}
\end{align}
By definition, $\min \mathbf{X}_{pred}(\mathbf{u}^k(\cdot))$ is the smallest state propagated from $\lowx(k\tau)$ for time $\tau$ with the input $\mathbf{u}^k(\cdot)$ for any disturbance and any driver-input of uncontrolled vehicles, and $\max \mathbf{X}_{pred}(\mathbf{u}^k(\cdot))$ is the largest.

Notice that \procedureES$(\mathbf{X}_{pred}(\mathbf{u}_{desire}^k(\cdot)),S)$ in Algorithm~\ref{algorithm:verification} determines the existence of an input signal $\aggsafe{k+1,\infty}$ that satisfies $\mathbf{y}(t,\aggsafe{k+1,\infty},\aggsignal{w},\aggsignal{d},\mathbf{x}((k+1)\tau))\notin B$ for all $t$ for any $\aggsignal{w}$ and $\aggsignal{d}$ for any $\mathbf{x}((k+1)\tau)\in \mathbf{X}_{pred}(\mathbf{u}_{desire}^k)$. If we define $\aggdesire{k}$ as
\begin{equation}\label{equation:udesire}
\mathbf{u}_{desire}^{k,\infty}(t) = \begin{cases}
\mathbf{u}_{desire}^k & \text{for}~t\in [k\tau, (k+1)\tau),\\
\mathbf{u}_{safe}^{k+1,\infty}(t) & \text{for}~t\in[(k+1)\tau,\infty),
\end{cases}
\end{equation} then $\mathbf{y}(t,\aggdesire{k},\aggsignal{w},\aggsignal{d},\mathbf{x}(k\tau))\notin B$ for all $t$ for any $\aggsignal{w}$ and $\aggsignal{d}$ for any $\mathbf{x}(k\tau)\in [\lowx(k\tau), \upperx(k\tau)]$. 

\subsubsection{Correction of the state estimation.} 
Once a new state measurement is received, this function restricts the state estimation so that it is compatible with the state measurement and the state prediction computed at the previous step. Suppose the supervisor returns $\aggout{k-1}$ at time $(k-1)\tau$. Then, at time $k\tau$, we have $\mathbf{x}(k\tau)\in \mathbf{X}_{pred}(\aggout{k-1})$ by definitions \eqref{equation:stateprediction_j} and \eqref{equation:stateprediction_gamma}. Also, a new state measurement $\mathbf{x}_m(k\tau)$ is received, which implies $\mathbf{x}(k\tau)\in [\mathbf{x}_m(k\tau)+\delta_{min}, \mathbf{x}_m(k\tau)+\delta_{max}]$. Thus, we make a correction of the state estimation $[\lowx(k\tau), \upperx(k\tau)]$ at time $k\tau$ as the intersection of these two intervals. That is,
\begin{align}\label{equation:statecorrection}
\begin{split}
&\lowx(k\tau) = \max (\min\mathbf{X}_{pred}(\aggout{k-1}),\mathbf{x}_m(k\tau)+\mathbf{\delta}_{min}
),\\
&\upperx(k\tau) =
\min(\max\mathbf{X}_{pred}(\aggout{k-1}),\mathbf{x}_m(k\tau)+{\delta}_{max}).
\end{split}\end{align}
Notice that this correction still guarantees $\mathbf{x}(k\tau)\in[\lowx(k\tau),\upperx(k\tau)]$. The supervisor takes this corrected state estimation as an input as shown in Figure~\ref{figure:supervisor}.


\subsubsection{Safe input generator.} Given a feasible schedule, this function generates a corresponding safe input signal. This is possible because the existence of a feasible schedule implies the existence of a safe input signal by Theorem~\ref{theorem:equivalence}. We define a safe input generator $\sigma(\mathbf{X}_{pred}(\mathbf{u}^k(\cdot)),\mathbf{T})$ to compute $\aggsafe{k+1,\infty}$, where $\mathbf{T}$ is the schedule returned by \procedureES$(\mathbf{X}_{pred}(\mathbf{u}^k(\cdot)),S)$. For $j\in\mathcal{C}$, if $T_j>0$,
\begin{align}\label{equation:safeinput}
\begin{split}
&\sigma_j(X_{pred,j}(u_{j}^k(\cdot)),T_j):=\safe{k+1}{j}\\
&\in\{\signal{u}{j}\in \mathcal{U}_j:\\
&\hspace{0.25 in} y^a_j(T_j+P_j(T_j),\signal{u}{j},
\min X_{pred,j}(u_{j}^k(\cdot)))=\beta_j\\
&\hspace{0.5 in}\text{and}~y^b_j(T_j,\signal{u}{j}, \max X_{pred,j}(u_{j}^k(\cdot)))=\alpha_j\},
\end{split}
\end{align}
where $\sigma_j(X_{pred,j}(u_{j}^k(\cdot)),T_j)$ is the $j$-th entry of $\sigma(\mathbf{X}_{pred}(\mathbf{u}^k(\cdot)),\mathbf{T})$.
If $T_j=0$, then
let ${u}_{safe,j}^{k+1,\infty}(t)=u_{j,max}$ for $t\in [(k+1)\tau,\infty)$. The safe input signal $\safe{k+1}{j}$ makes controlled vehicle $j$ enter the intersection no earlier that $T_j$ and exit it no later than $T_j+P_j(T_j)$. 

If \procedureES$(\mathbf{X}_{pred}(\mathbf{u}_{desire}^k),S)$ finds a feasible schedule $\mathbf{T}$, the supervisor computes a safe input signal $\aggsafe{k+1}$, which is $\mathbf{u}^{k+1,\infty}_{safe}(t)$ restricted to
$t\in [(k+1)\tau, (k+2)\tau)$. The supervisor stores this safe input signal for a possible use at the next time step.

\subsubsection{Solution of Problem~\ref{problem:supervisor}}The supervisor is implemented in procedure~\supervisor~in Algorithm \ref{algorithm:supervisor}.

\begin{algorithm}[H]
	\caption{Implementation of the supervisor}
	\label{algorithm:supervisor}
	\begin{algorithmic}[1]
		\Procedure{Supervisor}{$[\lowx(k\tau),\upperx(k\tau)],\mathbf{u}_{desire}^k$}
		\State $(\mathbf{T}_1,ans )=\procedureES(\mathbf{X}_{pred}(\mathbf{u}_{desire}^k(\cdot)),S)$
		\If{$ans=yes$ and for all $t\in [0,\tau), B\cap[\lowx(t,\mathbf{u}_{desire}^k(\cdot),\lowx(k\tau)),\upperx(t,\mathbf{u}_{desire}^k(\cdot),\upperx(k\tau))]=\emptyset$ }
		\State $\aggsafe{k+1,\infty}= \sigma(\mathbf{X}_{pred}(\mathbf{u}_{desire}^k(\cdot)),\mathbf{T}_1)$
		\State $\aggsafe{k+1}= \mathbf{u}_{safe}^{k+1,\infty}(t)
		~\text{for}~t\in[(k+1)\tau, (k+2)\tau)$
		\State \textbf{return} $\mathbf{u}_{desire}^k(\cdot)$\label{supervisor:return_desire}
		\Else
		\State $(\mathbf{T}_2,\cdot)=\procedureES(\mathbf{X}_{pred}(\aggsafe{k}),S)$\label{supervisor:T2}
		\State $\aggsafe{k+1,\infty}= \sigma(\mathbf{X}_{pred}(\aggsafe{k}),\mathbf{T}_2)$
		\State $\aggsafe{k+1}= \mathbf{u}_{safe}^{k+1,\infty}(t)~
		\text{for}~ t\in[(k+1)\tau, (k+2)\tau)$
		\State \textbf{return} $\aggsafe{k}$\label{supervisor:return_safe}
		\EndIf
		\EndProcedure
	\end{algorithmic}
\end{algorithm}
To initiate the procedure, it is assumed that initial state estimation $[\lowx(0),\upperx(0)]$ and initial desired input $\mathbf{u}_{desire}^0$ do not cause collisions at any future time. That is, \procedureES$(\mathbf{X}_{pred}(\mathbf{u}_{desire}^0),S)$ must return \textit{yes} so that \supervisor$([\lowx(0),\upperx(0)],\mathbf{u}_{desire}^0)\ne \emptyset$.

\begin{theorem}\label{theorem:supervisor}
	Procedure~\supervisor~in Algorithm~\ref{algorithm:supervisor} implements the supervisor $s$ designed in Problem~\ref{problem:supervisor}.
\end{theorem}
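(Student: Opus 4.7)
My plan is to verify both claims of Problem~\ref{problem:supervisor} by induction on the step index $k$, using the following invariant on entry to step $k$: the stored signal $\aggsafe{k,\infty}$ is a safe input signal from $\mathbf{X}_{pred}(\aggout{k-1})$ in the sense of \eqref{eq:safe_input}. Since the state-prediction definitions \eqref{equation:stateprediction_j}--\eqref{equation:stateprediction_gamma} together with the correction \eqref{equation:statecorrection} ensure $\mathbf{x}(k\tau)\in\mathbf{X}_{pred}(\aggout{k-1})$ and $[\lowx(k\tau),\upperx(k\tau)]\subseteq \mathbf{X}_{pred}(\aggout{k-1})$, this invariant means the cached override is genuinely safe for the true state. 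The initial hypothesis on $[\lowx(0),\upperx(0)]$ and $\mathbf{u}_{desire}^0$, namely that $\procedureES(\mathbf{X}_{pred}(\mathbf{u}_{desire}^0),S)$ returns \emph{yes}, seeds the induction by furnishing $\aggsafe{1,\infty}$ via $\sigma$ as in \eqref{equation:safeinput}.

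For output correctness I would split on the two algorithmic branches. When the \textbf{if} fires, $\procedureES(\mathbf{X}_{pred}(\mathbf{u}_{desire}^k(\cdot)),S)$ returns \emph{yes}, so by Theorem~\ref{theorem:equivalence} a safe input signal $\aggsafe{k+1,\infty}$ exists from $\mathbf{X}_{pred}(\mathbf{u}_{desire}^k(\cdot))$; concatenating it with $\mathbf{u}_{desire}^k$ as in \eqref{equation:udesire}, and combining with the explicit in-window check that $[\lowx(t),\upperx(t)]\cap B=\emptyset$ on $[0,\tau)$, yields a witness $\aggdesire{k}$ meeting the first case of Problem~\ref{problem:supervisor}. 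Conversely, given any such $\aggdesire{k}$, its tail past $(k+1)\tau$ is a safe signal from $\mathbf{X}_{pred}(\mathbf{u}_{desire}^k(\cdot))$, forcing \procedureES to return \emph{yes} by Theorem~\ref{theorem:equivalence}, while its head directly satisfies the in-window check; hence the \textbf{if} fires exactly when it should. In the \textbf{else} branch the induction invariant immediately yields that $\aggsafe{k}$ is safe on $[k\tau,(k+1)\tau)$ for every admissible $\mathbf{x}(k\tau)$, matching the second case.

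The non-blocking property then reduces to re-establishing the invariant at step $k+1$. In the \textbf{if} branch this is immediate, since the feasible schedule $\mathbf{T}_1$ returned by \procedureES is converted by $\sigma$ into a well-defined $\aggsafe{k+1,\infty}$. In the \textbf{else} branch I would argue that the tail of the stored $\aggsafe{k,\infty}$ restricted to $[(k+1)\tau,\infty)$ remains a safe input signal from $\mathbf{X}_{pred}(\aggsafe{k})$, using Assumption~\ref{assumption:x_orderpreserving} to transport the safety guarantee through the conservative one-step propagation defined by \eqref{equation:stateprediction_j}--\eqref{equation:stateprediction_gamma}; Theorem~\ref{theorem:equivalence} then certifies that $\procedureES(\mathbf{X}_{pred}(\aggsafe{k}),S)$ at line~\ref{supervisor:T2} returns a feasible $\mathbf{T}_2$, from which $\sigma$ produces the new $\aggsafe{k+1,\infty}$. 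The hard part will be precisely this transport step: showing rigorously that truncating a safe signal at $(k+1)\tau$ and re-anchoring it at the over-approximating set $\mathbf{X}_{pred}(\aggsafe{k})$ preserves safety in the sense of \eqref{eq:safe_input}, as this is what keeps the supervisor from ever running out of safe fallbacks and closes the induction.
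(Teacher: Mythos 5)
Your proposal is correct and follows essentially the same route as the paper's proof: the same induction on $k$ with the invariant that $\aggsafe{k,\infty}$ is well-defined and safe from $\mathbf{X}_{pred}(\aggout{k-1})$, the same branch analysis for output correctness, and the same key step for the \textbf{else} branch, namely that the correction \eqref{equation:statecorrection} gives $[\lowx(k\tau),\upperx(k\tau)]\subseteq \mathbf{X}_{pred}(\aggout{k-1})$ so that, by the order-preserving property of Assumption~\ref{assumption:x_orderpreserving}, $\mathbf{X}_{pred}(\aggsafe{k})$ is contained in the set on which the tail of $\aggsafe{k,\infty}$ is known to be safe. The ``transport step'' you flag as the hard part is exactly what the paper carries out in \eqref{equation:usafe_at_k+1}--\eqref{equation:predict_inside}.
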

\begin{proof}
	Procedure \supervisor~returns $\mathbf{u}_{desire}^k(\cdot)$ in line~\ref{supervisor:return_desire} if $ans_1$ is \textit{yes}. This implies that there exists $\aggsafe{k+1,\infty}$, which in turn, implies by \eqref{equation:udesire} that there exists $\aggdesire{k}$ that satisfies $\mathbf{y}(t,\aggdesire{k},\aggsignal{w},\aggsignal{d},\mathbf{x}(k\tau))\notin B$ for all $t$ for any $\aggsignal{w}$ and $\aggsignal{d}$ for any $\mathbf{x}(k\tau)\in [\lowx(k\tau), \upperx(k\tau)]$. Otherwise, it returns $\aggsafe{k}$ in line~\ref{supervisor:return_safe}. This structure corresponds to the supervisor design in Problem~\ref{problem:supervisor}.
	
	We prove the non-blocking property by mathematical induction on time step $k$. For the base case, it is assumed that \supervisor $([\lowx(0),\upperx(0)],\mathbf{u}_{desire}^0)=\aggout{0}\ne \emptyset$ where $\aggout{0}$ is defined on time $[0,\tau)$, and $\aggsafe{1,\infty}$ is well-defined. We say $\aggsafe{1,\infty}$ is well-defined if there exists a schedule $\mathbf{T}$ that defines $\aggsafe{1,\infty}$ as $\sigma(\mathbf{X}_{pred}(\mathbf{u}_{desire}^0(\cdot)),\mathbf{T})$. Suppose at $t=(k-1)\tau$, we have \supervisor $([\lowx((k-1)\tau),
	\upperx((k-1)\tau)],\mathbf{u}_{desire}^{k-1})=\aggout{k-1}\ne
	\emptyset$ and $\aggsafe{k,\infty}$ is well-defined. That is, there exists $\aggsafe{k,\infty}$ that for all $t\geq 0$ for any $\aggsignal{w}$ and $\aggsignal{d}$,
	\begin{align}
	\begin{split}\label{equation:usafe_at_kt}
	\forall\mathbf{x}(k\tau)\in &\mathbf{X}_{pred}(\aggout{k-1}),\\ &\mathbf{y}(t,\aggsafe{k,\infty},\aggsignal{w}, \aggsignal{d}, \mathbf{x}(k\tau))\notin B.
	\end{split}
	\end{align}
	Then, at $t=k\tau$, we need to show that $\aggout{k}\ne \emptyset$ no matter what $\mathbf{u}_{desire}^k$ is applied, and $\aggsafe{k+1,\infty}$ is well-defined.
	
	In Algorithm~\ref{algorithm:supervisor}, \supervisor$([\lowx(k\tau), \upperx(k\tau)],\mathbf{u}_{desire}^k)$ assigns $\aggout{k}$ either $\mathbf{u}_{desire}^k(\cdot)$ in line~\ref{supervisor:return_desire} or $\aggsafe{k}$ in line~\ref{supervisor:return_safe}. In either case, $\aggout{k}\ne \emptyset$. In the former case, $ans=yes$ and $\mathbf{T}_1$ exists, which implies by Theorem~\ref{theorem:equivalence} that there exists an input signal guaranteeing the avoidance of the Bad set for any uncertainty. Thus, $\aggsafe{k+1,\infty}=\sigma(\mathbf{X}_{pred}(\mathbf{u}_{desire}^k),\mathbf{T}_1)$
	is well-defined on time $[(k+1)\tau, \infty)$. In the latter case, we consider \procedureES$(\mathbf{X}_{pred}(\aggsafe{k}),S)$. Here, $\aggsafe{k}$ is $\aggsafe{k,\infty}$ restricted to time $[k\tau, (k+1)\tau)$. If let $\aggsafe{k+1,\infty}$ be $\aggsafe{k,\infty}$ restricted to time $[(k+1)\tau,\infty)$, we can rewrite \eqref{equation:usafe_at_kt} as 
	\begin{align}
	\forall \mathbf{x}((k+1)\tau) \in [\lowx&(\tau, \aggsafe{k}, \min\mathbf{X}_{pred}(\aggout{k-1})),\notag\\& \upperx(\tau, \aggsafe{k}, \max \mathbf{X}_{pred}(\aggout{k-1}))],\notag\\
	\mathbf{y}(t,\aggsafe{k+1,\infty},&\aggsignal{w}, \aggsignal{d}, \mathbf{x}((k+1)\tau))\notin B.\label{equation:usafe_at_k+1}
	\end{align}
	Since by \eqref{equation:statecorrection}, $[\lowx(k\tau), \upperx(k\tau)]\subseteq \mathbf{X}_{pred}(\aggout{k-1})$, the state prediction $\mathbf{X}_{pred}(\aggsafe{k})$, which denotes $[\lowx(\tau, \aggsafe{k},\lowx(k\tau)), \upperx(\tau, \aggsafe{k},\upperx(k\tau))]$, satisfies
	\begin{align}
	\begin{split}\label{equation:predict_inside}
	\mathbf{X}_{pred}(\aggsafe{k})
	\subseteq [&\lowx(\tau, \aggsafe{k}, \min\mathbf{X}_{pred}(\aggout{k-1})),\\ \upperx & (\tau, \aggsafe{k}, \max \mathbf{X}_{pred}(\aggout{k-1}))],
	\end{split}\end{align} 
	due to the order-preserving property with respect to an initial state in Assumption~\ref{assumption:x_orderpreserving}. Thus, \eqref{equation:usafe_at_k+1} is still satisfied for any $\mathbf{x}((k+1)\tau)\in \mathbf{X}_{pred}(\aggsafe{k})$. That is, $\mathbf{T}_2$ in line~\ref{supervisor:T2} exists, and  $\aggsafe{k+1,\infty}=\sigma(\mathbf{X}_{pred}(\mathbf{u}_{safe}^k),\mathbf{T}_2)$ is well-defined.
	
	Therefore, the supervisor is non-blocking.\qed
\end{proof}

\section{Approximate solutions}\label{section:approxsol}
\subsection{Efficient Verification}\label{section:efficientverification}

While scheduling problems on a single machine with arbitrary release times, deadlines, and process
times are known to be NP-hard, \cite{garey_scheduling_1981} proved that the complexity can be
reduced to $O(n\log n)$ if process times of all jobs are identical. This was done using \textquotedblleft forbidden regions" and the \textquotedblleft earliest deadline scheduling (EDD)" rule. Forbidden regions are time intervals during which no feasible job is allowed to start, and can be computed in $O(n\log n)$ time. Once forbidden regions are computed, EDD can be used to solve the scheduling problem in $O(n\log n)$ time. 


We design Algorithm~\ref{algorithm:polynomial} by modifying Garey's result to handle inserted idle-times. We define initial forbidden regions $\mathbf{F}_0$ to account for the idle-times and set them as inputs of Algorithm~\ref{algorithm:polynomial}. In Garey's result, forbidden regions are initially declared empty and not taken as inputs.

In the procedure in Algorithm~\ref{algorithm:polynomial}, $r_j,d_j$, and $t_j$ denote the $j$-th entry of $\mathbf{r},\mathbf{d}$, and $\mathbf{t}$, respectively, and $F_{0,\gamma}$ the $\gamma$-th interval of $\mathbf{F}_0$ so that $\mathbf{F}_0=\cup_\gamma F_{0,\gamma}$. Critical time $c$ is the latest time at which a job can start at each iteration. A set $\mathcal{A}$ contains jobs that have not been scheduled but are ready to be scheduled at time $s$, which means their release times are smaller than or equal to $s$. A set $\mathcal{B}$ contains all jobs that have not been yet scheduled. Initially, $\mathcal{B}=\{1,\ldots,|\mathbf{r}|\}$ where $|\mathbf{r}|$ is the cardinality of $\mathbf{r}$. 

\begin{algorithm}[H]
\begin{algorithmic}[1]
\Procedure{\garey}{$\mathbf{r,d,F_0}$}
\State Let $\sigma$ be a vector of indexes in an increasing order of $\mathbf{r}$ and $\mathbf{F}=\mathbf{F}_0$.
\For{$i=|\mathbf{r}|$ to 1} \label{polynomial:forbidden_start}\Comment{{Forbidden Region Declaration}}
	\For{$j\in \{j:d_j\geq d_{\sigma_i}\}$} 
	\If{$c_j$ undefined} {$c_j = d_j$} \Else~{$c_j=c_j-1$} \EndIf \label{polynomial:declaration_minus1}
	\If{$c_j\in F_\gamma$ for some $\gamma$} {$c_j=\inf{F
			_\gamma}$}\EndIf\label{polynomial:declaration_c}
	\EndFor
	\If{$\sigma_i=1$ or $r_{\sigma_{i-1}}<r_{\sigma_i}$} 
	\State {$c=\min \{c_j:c_j~\text{defined}\}$} \EndIf
	\If{$c<r_{\sigma_i}$} {$ans = no$}\EndIf \label{polynomial:ans=no}
	\If{$r_{\sigma_i}\leq c<r_{\sigma_i}+1$} {$\mathbf{F}=\mathbf{F}\cup (c-1,r_{\sigma_i})$}
	\EndIf
\EndFor\label{polynomial:forbidden_end}
\State $s= 0, \mathcal{A}=\emptyset, \mathcal{B}=\{1,\ldots, |\mathbf{r}|\}$ \label{polynomial:initialize} \Comment{{Schedule Generation}}
\While{$\mathcal{B} \ne \emptyset$}\label{polynomial:while}
	\If{$s\in {F}_\gamma$ for some $\gamma$} {$s=\sup {F}_\gamma$}
	\EndIf\label{polynomial:forbidden}
	\If{$\mathcal{A}=\emptyset$} {$j= \argmin_{j\in \mathcal{B}} r_j$ and $t_{j}= r_{j}$}
\label{polynomial:if_A_empty_T=r}
	\Else ~{$j= \argmin_{j\in \mathcal{A}} d_j$ and $t_{j}=
s$}\label{polynomial:if_A_non_empty_T=s}
	\EndIf
	\State $s= s+1,~\mathcal{B}=\mathcal{B}\setminus\{j\}$ and $\mathcal{A}= \{k\in \mathcal{B}:r_k\geq
s\}$\label{polynomial:ABupdate}
\EndWhile\label{polynomial:end_while}
\State{$\pi^*=$ a vector of indexes in an increasing order of $\mathbf{t}$}
\If{$ans=no$} \label{polynomial:feasibility}
\State \textbf{return} $(\emptyset,\pi^*,no)$
\Else~{\textbf{return} $(\mathbf{t},\pi^*,yes)$}
\EndIf
\EndProcedure\label{polynomial:lastline}
\end{algorithmic}
\caption{Modified version of the result of \cite{garey_scheduling_1981}}
\label{algorithm:polynomial}
\end{algorithm}

In the following lemma, we prove that procedure \garey~in Algorithm~\ref{algorithm:polynomial} solves the IIT scheduling problem with unit process times. This problem is formulated as follows.

\begin{problem}\label{problem:unit_process_scheduling}
Given $\mathbf{r,d,\bar{r},\bar{p}}$, determine the existence of a schedule $\mathbf{t}$ satisfying 
\begin{align}
&\text{for all}~j, && t_j\in [r_j, d_j],\label{Unit-IIT:bounded} \\
&\text{for all}~i\ne j, &&(t_i,t_i+1)\cap (t_j,t_j+1)=\emptyset,\label{Unit-IIT:disjoint}\\
&\text{for all}~j~\text{and}~\gamma, && (t_j,t_j+1)\cap (\bar{r}_\gamma, \bar{p}_\gamma)=\emptyset, \label{Unit-IIT:iit}
\end{align}
where $(\bar{r}_\gamma, \bar{p}_\gamma)$ denotes the $\gamma$-th inserted idle-time. 
\end{problem}

\begin{lemma}\label{lemma:garey_solve_iit}
Procedure \garey~in Algorithm~\ref{algorithm:polynomial} solves the IIT scheduling problem with unit process times and finds a feasible schedule if exists.
\end{lemma}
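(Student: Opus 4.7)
The plan is to build on Garey's 1981 theorem, which already establishes that the forbidden-region-plus-EDD strategy correctly solves the unit-process-time scheduling problem with release times and deadlines alone. The task then reduces to showing that incorporating the initial forbidden regions $\mathbf{F}_0$, chosen by the caller to encode the inserted idle-times, preserves correctness and additionally enforces condition \eqref{Unit-IIT:iit}.

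The first step is to make the encoding explicit: setting $F_{0,\gamma} = (\bar{r}_\gamma - 1, \bar{p}_\gamma)$ gives the equivalence $t_j \notin F_{0,\gamma}$ iff $(t_j, t_j+1) \cap (\bar{r}_\gamma, \bar{p}_\gamma) = \emptyset$. Hence condition \eqref{Unit-IIT:iit} is equivalent to demanding $t_j \notin \bigcup_\gamma F_{0,\gamma}$ for every $j$. With this in hand, I would verify that any schedule $\mathbf{t}$ returned in the final line satisfies all three conditions of Problem~\ref{problem:unit_process_scheduling}. Condition \eqref{Unit-IIT:disjoint} follows from the dispatch loop in lines \ref{polynomial:if_A_empty_T=r}-\ref{polynomial:if_A_non_empty_T=s} advancing $s$ by one unit after each assignment (line \ref{polynomial:ABupdate}). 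Condition \eqref{Unit-IIT:bounded} follows from lines \ref{polynomial:if_A_empty_T=r}-\ref{polynomial:if_A_non_empty_T=s} ensuring $t_j \geq r_j$ together with the feasibility check at line \ref{polynomial:ans=no} precluding a returned $\mathbf{t}$ whenever the EDD phase would violate a deadline. Condition \eqref{Unit-IIT:iit} is guaranteed by line \ref{polynomial:forbidden}, which advances $s$ past any initial forbidden region before a job is dispatched.

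For the converse direction, I would adapt Garey's backward induction over the forbidden-region declaration phase (lines \ref{polynomial:forbidden_start}-\ref{polynomial:forbidden_end}). The invariant to carry is that a point $t$ belongs to some forbidden region, initial or declared, if and only if no feasible schedule for the IIT problem can assign $t_j = t$ to any unscheduled job. For the initial regions this holds by construction from the first step. For each dynamically declared region $(c-1, r_{\sigma_i})$ added in line 15, Garey's argument inspects jobs with deadline at least $d_{\sigma_i}$, computes a critical start time $c_j$, and pulls $c_j$ back to $\inf F_\gamma$ whenever it lands inside an already-identified forbidden region (line \ref{polynomial:declaration_c}). Since line \ref{polynomial:declaration_c} treats initial and declared regions uniformly, the invariant propagates, and $ans = no$ is set in line \ref{polynomial:ans=no} precisely when no feasible schedule exists. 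Combined with the forward direction, this establishes that the procedure solves the IIT scheduling problem with unit process times and returns a feasible schedule whenever one exists.

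The main obstacle will be rigorously justifying that mixing initial (idle-time-derived) and dynamically declared forbidden regions preserves Garey's critical-time invariant; in particular, that the pull-back $c_j = \inf F_\gamma$ in line \ref{polynomial:declaration_c} remains valid when $F_\gamma$ originates from $\mathbf{F}_0$. This ultimately reduces to verifying that both classes of forbidden region share the same operational meaning, "no feasible schedule can start a job here," which is exactly the equivalence secured in the first step. Once this is in place, the remainder of Garey's original argument transfers verbatim and delivers the $O(n\log n)$ correctness claim.
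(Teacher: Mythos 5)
Your proposal takes essentially the same route as the paper: encode each idle-time as an initial forbidden region $F_{0,\gamma}=(\bar{r}_\gamma-1,\bar{p}_\gamma)$ so that avoiding it is equivalent to condition~\eqref{Unit-IIT:iit}, check that the EDD dispatch phase enforces conditions~\eqref{Unit-IIT:bounded} and \eqref{Unit-IIT:disjoint}, and argue that Garey's critical-time analysis carries over unchanged when the forbidden regions are initially non-empty. The one point you flag as the ``main obstacle'' --- that initial and declared forbidden regions share the same operational meaning so the pull-back $c_j=\inf F_\gamma$ remains valid --- is exactly the point the paper disposes of in a single sentence, so your outline is, if anything, slightly more explicit about what must be verified.
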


The proof of Lemma~\ref{lemma:garey_solve_iit} is provided in Appendix B.

To use Algorithm~\ref{algorithm:polynomial}, we assign a time interval of equal length to cross the intersection to all vehicles, and formulate a \textit{relaxed} IIT scheduling problem. The identical process time $\theta_{max}$ is defined as follows:
\begin{equation}
\theta_{max} := \max_{j\in \mathcal{C}}\max_{R_j\leq T_j\leq D_j} P_j(T_j).\label{equation:themamax}
\end{equation}
Here, $\theta_{max}$ is the maximum time that any controlled vehicle
spends crossing an intersection, so that $\theta_{max}\geq P_j(T_j)$ for all
$j\in\mathcal{C}$ for any $T_j\in[R_j,D_j]$. In other words, all controlled vehicles are guaranteed to cross the intersection within $\theta_{max}$.

By replacing
$P_j(T_j)$ in Problem~\ref{problem:scheduling} with $\theta_{max}$, the relaxed IIT scheduling problem is formulated as follows.

\begin{problem}[(Relaxed IIT scheduling)]\label{problem:relaxedS}
	Given an initial state estimation $[\lowx(0), \upperx(0)]$, determine whether
there exists a schedule $\mathbf{T}=(T_1,\ldots, T_{n_c}) \in
\mathbb{R}^{n_c}_{+}$ such that for all $j\in \mathcal{C}$, 
	\begin{equation}
	\label{condition:efficient_boundedinput}
	R_j\leq T_j\leq D_j,
	\end{equation}
	for all $i\ne j \in \mathcal{C}$ if $T_i, T_j > 0$,
	\begin{equation}
	(T_i,T_i +\theta_{max})\cap (T_j,
T_j+\theta_{max})=\emptyset,\label{condition:efficient_controlled}
	\end{equation}
	for all $j\in \mathcal{C}$ and $\gamma\in\uncontrolled$ if $T_j > 0$,
	\begin{equation}
	(T_j, T_j+\theta_{max})\cap (\bar{R}_\gamma
,\bar{P}_{\gamma})=\emptyset.\label{condition:efficient_uncontrolled}
	\end{equation}
	If $T_j = 0$, $(0,P_j(0))$ replaces $(T_j,T_j+\theta_{max})$ in conditions~\eqref{condition:efficient_controlled} and \eqref{condition:efficient_uncontrolled}.
\end{problem}

The following algorithm solves this problem by employing \garey. This is an exact solution for Problem~\ref{problem:relaxedS} by Lemma~\ref{lemma:garey_solve_iit}. At line~\ref{relaxed:zero_vec}, we call $\mathbf{0}$ the zero vector in $\mathbb{R}^{n_c}$. 

\begin{algorithm}[H]
\begin{algorithmic}[1]
\Procedure{\textls[-80]\procedureRES}{$[\lowx(0), \upperx(0)],S$}
\If{$[\mathbf{y}^a(0), \mathbf{y}^b(0)]\cap B\ne \emptyset$}~\textbf{return} $(\emptyset,\emptyset,no)$\label{relaxed:inside_B}
\EndIf
\ForAll{$j\in\mathcal{C}$} {calculate $R_j,D_j, \theta_{max}$}
\EndFor
\ForAll{$\gamma\in\uncontrolled$} {calculate
$\bar{R}_{\gamma},\bar{P}_{\gamma}$}
\State $F_\gamma =  (\max(\bar{R}_{\gamma}/\theta_{max}-1,0),\bar{P}_{\gamma}/\theta_{max})$\label{relaxed:forbidden}
\EndFor
\State $\mathcal{M} = \{j\in\mathcal{C}: y^b_j(0)<\alpha_j\}$ and $\bar{\mathcal{M}}=\mathcal{C}\setminus \mathcal{M}$ 
\If{$|\mathcal{M}| = 0$} \textbf{return} $(\emptyset,\mathbf{0},yes)$ \EndIf\label{relaxed:zero_vec}
\State $T_j = 0$ for all $j\in\bar{\mathcal{M}}$ and $P_{max} = \max_{j\in \bar{\mathcal{M}}} P_j(T_j)$
\ForAll{$j\in{\mathcal{M}}$} 
\State $r_j = \max(R_j,P_{max})/\theta_{max}$ and $d_j = D_j/\theta_{max}$
\EndFor
\State $(\mathbf{t},\pi^*,ans) = \textsc{Polynomial}(\mathbf{r,d,F})$\label{relaxed:garey}
\If{$ans = yes$}  $T_j = t_j\theta_{max}, \forall j\in{\mathcal{M}}$
\State \textbf{return} $(\mathbf{T},\pi^*,yes)$
\Else~ \textbf{return} $(\emptyset,\pi^*,no)$
\EndIf

\EndProcedure\label{effscheduling:lastline}
\end{algorithmic}
\caption{Exact Solution of Problem~\ref{problem:relaxedS}}
\label{algorithm:relaxedexact}
\end{algorithm}

In this procedure, all parameters are normalized by $\theta_{max}$ because procedure \garey~assumes unit process times. In line~\ref{relaxed:forbidden}, the idle-time $(\bar{R}_\gamma, \bar{P}_\gamma)$ is translated into an initial set of forbidden regions $F_\gamma$ so that condition~\eqref{condition:efficient_uncontrolled} is equivalent to $t_j\notin F_{\gamma}$. If $t_j\in F_\gamma$, then $T_j\in (\max(\bar{R}_\gamma - \theta_{max},0), \bar{P}_\gamma)$ so that either $T_j\in (\bar{R}_\gamma, \bar{P}_\gamma)$ or $T_j+\theta_{max}\in (\bar{R}_\gamma, \bar{P}_\gamma)$.

 The running time of procedure \procedureRES~is dominated by \garey~in line~\ref{relaxed:garey}, which has an
asymptotic running time of $O(n_c^2)$. 

By exploiting procedure \procedureRES~in Algorithm \ref{algorithm:relaxedexact}, we design a new procedure, called \procedureAS, that solves Problem~\ref{problem:verification} within an approximation bound. This procedure schedules vehicles according to a sequence returned by procedure \procedureRES, thereby inheriting computational efficiency. This sequence is denoted by $\pi^*$ in the following algorithm. 

\begin{algorithm}[H]
\begin{algorithmic}[1]	
\Procedure{\procedureAS}{$[\lowx(0), \upperx(0)],S$}
\If{$[\mathbf{y}^a(0), \mathbf{y}^b(0)]\cap B\ne \emptyset$} \textbf{return} $(\emptyset,no)$\label{approx:inside_B}
\EndIf
\If{$y^b_j(0)\geq\alpha_j$ for all $j\in\mathcal{C}$} \textbf{return} $(\emptyset,yes)$\label{approx:all_crossed}
\EndIf
\State $(\cdot,\pi^*, \cdot) = \procedureRES([\lowx(0),\upperx(0)],S)$\label{approx:RelaxedExact}
\State $(\mathbf{T},ans)=\procedureS(\pi^*,[\lowx(0),\upperx(0)],S)$\label{approx:scheduling}
\EndProcedure
\end{algorithmic}
\caption{Approximate Solution of Problem~\ref{problem:scheduling}}
\label{algorithm:approximate}
\end{algorithm}

Procedure \procedureAS~in Algorithm~\ref{algorithm:approximate} trades exactness for computational
efficiency. We will prove that procedure \procedureAS~is more conservative than procedure \procedureES, that is, there exists an instance $I=([\lowx(0), \upperx(0)],S)$ such that \procedureAS$(I)$ returns \textit{no}~while \procedureES$(I)$ returns \textit{yes}. In order to quantify the degree of conservatism, we prove two theorems. The
first theorem states that if procedure \procedureAS~returns
\textit{yes}, then there exists an input signal to avoid the Bad set. In the second theorem, if procedure \procedureAS~returns \textit{no}, then there does not exist an input signal to avoid an inflated Bad set, which accounts for the conservatism.

 \begin{theorem}\label{theorem:approxS_yes_V_yes}
 If \procedureAS$([\lowx(0),\upperx(0)],S)=(\mathbf{T},yes)$, then
$([\lowx(0),\upperx(0)],S)\in$ Problem~\ref{problem:verification}. 
 \end{theorem}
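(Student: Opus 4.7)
The plan is to reduce the claim to Theorem~\ref{theorem:equivalence} by showing that whenever \procedureAS\ reports $yes$, it either trivially certifies safety or produces a feasible schedule for Problem~\ref{problem:scheduling}; the equivalence then delivers $([\lowx(0),\upperx(0)],S)\in$ Problem~\ref{problem:verification}.

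Procedure \procedureAS\ can return $yes$ in two places. At line~\ref{approx:all_crossed} the conditions $y_j^b(0)\geq\alpha_j$ for every $j\in\mathcal{C}$ together with the negation of the test at line~\ref{approx:inside_B}, Assumption~\ref{assumption:y_nondecreasing_t}, and the standing assumption that uncontrolled vehicles do not collide among themselves, jointly imply that no controlled vehicle ever shares the intersection with any other vehicle. Hence any input is safe, which is a yes-instance of Problem~\ref{problem:verification}. This case is mechanical.

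The principal case occurs at line~\ref{approx:scheduling}, where \procedureS$(\pi^*,[\lowx(0),\upperx(0)],S)$ returns $(\mathbf{T},yes)$. I would unwind the construction of $\mathbf{T}$ in \procedureS\ and verify conditions~\eqref{condition:boundedinput}--\eqref{condition:uncontrolled} of Problem~\ref{problem:scheduling}. The lower bound $R_j\leq T_j$ is imposed by lines~\ref{scheduling:i=1} and~\ref{scheduling:con1and2} and preserved because the inner loop only ever increases $T_{\pi_i}$; the upper bound $T_j\leq D_j$ is explicitly tested in line~\ref{scheduling:checkT<D} before the $yes$ return. Condition~\eqref{condition:controlled} follows from the inequality $T_{\pi_i}\geq T_{\pi_{i-1}}+P_{\pi_{i-1}}(T_{\pi_{i-1}})$ imposed in line~\ref{scheduling:con1and2}, again invariant under the monotone updates of the inner loop. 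Condition~\eqref{condition:uncontrolled} is the target of the inner loop over $\bar{\pi}_j$: each of its two nontrivial branches advances $T_{\pi_i}$ to at least $\bar{P}_{\bar{\pi}_j}$, while the fall-through case is precisely the one in which disjointness with $(\bar{R}_{\bar{\pi}_j},\bar{P}_{\bar{\pi}_j})$ already holds. A final appeal to Theorem~\ref{theorem:equivalence} then produces an input signal witnessing $([\lowx(0),\upperx(0)],S)\in$ Problem~\ref{problem:verification}.

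The step requiring the most care is ensuring that a disjointness established for uncontrolled vehicle $\bar{\pi}_{j'}$ at iteration $j'$ is not destroyed by a later push at iteration $j>j'$. Because $\bar{\pi}$ is ordered by increasing $\bar{R}_\gamma$ and $T_{\pi_i}$ evolves monotonically, I would prove by induction on $j$ that after processing $\bar{\pi}_j$ at least one of $T_{\pi_i}\geq \bar{P}_{\bar{\pi}_{j'}}$ or $T_{\pi_i}+P_{\pi_i}(T_{\pi_i})\leq \bar{R}_{\bar{\pi}_{j'}}$ holds for every $j'\leq j$. This bookkeeping is the only non-routine step, and once established, nothing beyond the statement of Theorem~\ref{theorem:equivalence} is needed to close the argument.
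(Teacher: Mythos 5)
Your proposal is correct and follows essentially the same route as the paper: reduce to Problem~\ref{problem:scheduling} via Theorem~\ref{theorem:equivalence}, dispose of the trivial case at line~\ref{approx:all_crossed}, and observe that a \textit{yes} from \procedureS~on $\pi^*$ exhibits a feasible schedule, so the instance is a yes-instance of the scheduling problem. The only difference is one of granularity: you re-derive the feasibility of the schedule constructed by \procedureS~(including the inner-loop invariant that later pushes of $T_{\pi_i}$ cannot destroy disjointness already secured for earlier idle-times), whereas the paper simply invokes this as established in the discussion of Algorithm~\ref{algorithm:verification} and notes that $\pi^*\in\mathcal{P}$ implies \procedureES~would also return \textit{yes}.
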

 \begin{proof}
 By Theorem~\ref{theorem:equivalence}, we only need to show that
$([\lowx(0),\upperx(0)],S)\in$ Problem~\ref{problem:scheduling}. Notice that procedure \procedureAS~of Algorithm~\ref{algorithm:approximate} returns \textit{yes} in the case of line~\ref{approx:all_crossed}. In this case, $([\lowx(0),\upperx(0)],S)\in$ Problem~\ref{problem:scheduling} since all vehicles have already crossed the intersection. The procedure also returns \textit{yes} when procedure \procedureS~given a sequence $\pi^*$ returns \textit{yes} in line~\ref{approx:scheduling}.
Since $\pi^*\in \mathcal{P}$, where $\mathcal{P}$ is a set of all permutations, and $\pi^*$ yields a feasible schedule, we know \procedureES$([\lowx(0),\upperx(0)],S)$ returns \textit{yes}. Thus, $([\lowx(0),\upperx(0)],S)\in$ Problem~\ref{problem:scheduling}.
 \end{proof}

 However, the converse of Theorem~\ref{theorem:approxS_yes_V_yes} is not true. That is, for some instances
such that $([\lowx(0),\upperx(0)],S)\in$ Problem~\ref{problem:verification}, procedure \procedureAS~can return \textit{no}. To consider these
instances, we introduce an \textit{inflated} Bad set.

The IIT scheduling problem finds a schedule satisfying $y_j^b(T_j)=\alpha_j$ and $y_j^a(T_j+P_j(T_j))=\beta_j$ for $j\in\mathcal{C}$. In contrast, the relaxed IIT scheduling problem finds a schedule satisfying $y_j^b(T_j)=\alpha_j$ and $y_j^a(T_j+\theta_{max})\geq {\beta}_j$. Given that the farthest distance that controlled vehicle $j$ can travel during $\theta_{max}$ is $\theta_{max}v_{j,max}$, we define an
\textquotedblleft inflated" intersection $(\alpha_j, \hat{\beta}_j)$ such that $\hat{\beta}_j:=\alpha_j
+ \theta_{max}v_{j,max}$. Notice that $\beta_j\leq \hat{\beta}_j$. Because the process times are only defined for controlled vehicles, $\hat{\beta}_\gamma = \beta_\gamma$ for $\gamma\in\uncontrolled$. Thus, inflated Bad
set $\hat{B}$ is defined as follows:
\begin{align}\label{equation:inflatedB}
\begin{split}
\hat{B}:=\{&\mathbf{y}\in Y:
y_i\in(\alpha_i,\hat{\beta}_i)~\text{and}~y_j\in(\alpha_j,\hat{\beta}_j)\\
&\text{for some}~i\ne j~\text{such that}~i\in \mathcal{\allset}~\text{and}~j\in
\mathcal{C}\}.
\end{split}
\end{align}
By replacing Bad set $B$ in Problem~\ref{problem:verification} with inflated Bad set $\hat{B}$, we can formulate the relaxed verification problem. The following theorem is a key result that shows the approximation bound of procedure~\procedureAS.

\begin{lemma}\label{lemma:T<barT}
If \procedureAS$([\lowx(0),\upperx(0)],S)=(\mathbf{T},yes)$, and 
\procedureRES$([\lowx(0),\upperx(0)],S)=(\bar{\mathbf{T}},\pi^*,yes)$, then
$T_{j}\leq \bar{T}_{j}$ for all $j\in\mathcal{C}$.
\end{lemma}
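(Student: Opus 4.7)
The plan is to prove $T_{\pi^*_i}\leq\bar{T}_{\pi^*_i}$ by strong induction on the position $i\in\{1,\ldots,|\mathcal{M}|\}$ in the common scheduling sequence $\pi^*$. Both algorithms operate on exactly the same $\pi^*$, because Algorithm~\ref{algorithm:approximate} lifts it verbatim from \procedureRES at line~\ref{approx:RelaxedExact} before handing it to \procedureS at line~\ref{approx:scheduling}; vehicles in $\bar{\mathcal{M}}$ receive $T_j=\bar{T}_j=0$ in both algorithms and can be set aside, so everything reduces to comparing $\mathbf{T}$ and $\bar{\mathbf{T}}$ entry by entry along $\pi^*$.

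For the inductive step I would first compare the two initial values produced before any idle-time correction. \procedureS (lines~\ref{scheduling:i=1} and~\ref{scheduling:con1and2}) sets $T^{(0)}_{\pi^*_i}=\max(R_{\pi^*_i},P_{max},T_{\pi^*_{i-1}}+P_{\pi^*_{i-1}}(T_{\pi^*_{i-1}}))$, while feasibility of $\bar{\mathbf{T}}$ for Problem~\ref{problem:relaxedS} forces $\bar{T}_{\pi^*_i}\geq\max(R_{\pi^*_i},P_{max},\bar{T}_{\pi^*_{i-1}}+\theta_{max})$, where the $P_{max}$ lower bound comes from condition~\eqref{condition:efficient_controlled} applied to the $\bar{\mathcal{M}}$-vehicles (whose occupancy interval $(0,P_j(0))$ must be disjoint from $(\bar{T}_{\pi^*_i},\bar{T}_{\pi^*_i}+\theta_{max})$). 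Combining the inductive hypothesis with $P_{\pi^*_{i-1}}(T_{\pi^*_{i-1}})\leq\theta_{max}$ from~\eqref{equation:themamax} yields $T^{(0)}_{\pi^*_i}\leq\bar{T}_{\pi^*_i}$.

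The core step is to show that the idle-time inner loop (lines~\ref{scheduling:innerforloop}--\ref{scheduling:endinnerfor}) preserves the invariant $T\leq\bar{T}_{\pi^*_i}$ as the running value $T$ is updated across the uncontrolled vehicles. For each $\gamma$, feasibility of $\bar{\mathbf{T}}$ under condition~\eqref{condition:efficient_uncontrolled} forces either $\bar{T}_{\pi^*_i}+\theta_{max}\leq\bar{R}_\gamma$ or $\bar{T}_{\pi^*_i}\geq\bar{P}_\gamma$. In the first subcase, the uniform bound $P_{\pi^*_i}(T)\leq\theta_{max}$ combined with the invariant gives $T<\bar{R}_\gamma$ and $T+P_{\pi^*_i}(T)\leq\bar{R}_\gamma$, so both guard tests at lines~\ref{scheduling:Tj>Rbar} and~\ref{scheduling:Pj>Rbar} fail and $T$ is untouched; in the second subcase, whichever branch fires replaces $T$ by at most $\bar{P}_\gamma\leq\bar{T}_{\pi^*_i}$. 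Either way the invariant persists, which closes the outer induction and gives $T_{\pi^*_i}\leq\bar{T}_{\pi^*_i}$ for every $\pi^*_i\in\mathcal{M}$.

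The main obstacle I anticipate is that the bound $P_{\pi^*_i}(T)\leq\theta_{max}$ from~\eqref{equation:themamax} is only guaranteed for $T\in[R_{\pi^*_i},D_{\pi^*_i}]$, while the running $T$ is updated throughout the inner loop; I would handle this by strengthening the inner-loop invariant to $R_{\pi^*_i}\leq T\leq\bar{T}_{\pi^*_i}\leq D_{\pi^*_i}$, the last inequality coming from~\eqref{condition:efficient_boundedinput}, so that every evaluation of $P_{\pi^*_i}(\cdot)$ inside the loop is legitimate and the case analysis above is sound.
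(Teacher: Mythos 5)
Your proposal is correct and follows essentially the same route as the paper's proof: induction along the common sequence $\pi^*$, bounding the initial value via conditions~\eqref{condition:efficient_boundedinput}--\eqref{condition:efficient_controlled} and $P_j(T_j)\leq\theta_{max}$, and then showing the idle-time loop preserves $T\leq\bar{T}_{\pi^*_i}$ via condition~\eqref{condition:efficient_uncontrolled}. Your strengthened inner-loop invariant $R_{\pi^*_i}\leq T\leq\bar{T}_{\pi^*_i}\leq D_{\pi^*_i}$ is a small but legitimate refinement that the paper's proof leaves implicit when it invokes $P_{\pi^*_i}(T)\leq\theta_{max}$.
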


\begin{lemma}\label{lemma:Appro_no_Relaxed_no}
	If \procedureAS$([\lowx(0),\upperx(0)],S)=(\emptyset,no)$, then \procedureRES$([\lowx(0),\upperx(0)],S)=(\emptyset,\pi^*,no)$.
\end{lemma}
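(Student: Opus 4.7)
I would prove the contrapositive: if \procedureRES$([\lowx(0),\upperx(0)],S)=(\bar{\mathbf{T}},\pi^*,yes)$, then \procedureAS$([\lowx(0),\upperx(0)],S)$ returns yes. The early exit at line~\ref{approx:inside_B} of \procedureAS is automatically avoided, since the identical Bad-set test at line~\ref{relaxed:inside_B} of \procedureRES did not trigger. If the test at line~\ref{approx:all_crossed} succeeds, \procedureAS returns yes at once; otherwise it proceeds to line~\ref{approx:scheduling} and calls \procedureS$(\pi^*,[\lowx(0),\upperx(0)],S)$.

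The key observation is that $\bar{\mathbf{T}}$, being feasible for the relaxed Problem~\ref{problem:relaxedS}, is also a feasible schedule for the non-relaxed Problem~\ref{problem:scheduling} under the ordering $\pi^*$. By the definition~\eqref{equation:themamax}, $P_j(T_j)\leq\theta_{max}$ for every $j\in\mathcal{C}$ and every admissible $T_j$, so $(\bar{T}_j,\bar{T}_j+P_j(\bar{T}_j))\subseteq(\bar{T}_j,\bar{T}_j+\theta_{max})$. Thus conditions~\eqref{condition:efficient_controlled} and~\eqref{condition:efficient_uncontrolled} imply conditions~\eqref{condition:controlled} and~\eqref{condition:uncontrolled}, while~\eqref{condition:efficient_boundedinput} coincides with~\eqref{condition:boundedinput}, giving $R_j\leq\bar{T}_j\leq D_j$.

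It then suffices to show that the greedy schedule $\mathbf{T}$ produced by \procedureS from the ordering $\pi^*$ satisfies $T_j\leq\bar{T}_j$ for every $j\in\mathcal{C}$, since then $T_j\leq D_j$ causes the deadline check at line~\ref{scheduling:checkT<D} to succeed and \procedureS returns yes, whence \procedureAS returns yes. I would prove $T_{\pi^*_i}\leq\bar{T}_{\pi^*_i}$ by induction on $i$, mirroring the argument of Lemma~\ref{lemma:T<barT}. The base case $i=1$ holds because $\bar{T}_{\pi^*_1}$ meets every constraint (release time, $P_{max}$ bound, and idle-time avoidance) over which \procedureS's initialization and inner loop jointly minimize. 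In the inductive step, one combines the chain
\[\bar{T}_{\pi^*_i}\geq\bar{T}_{\pi^*_{i-1}}+\theta_{max}\geq\bar{T}_{\pi^*_{i-1}}+P_{\pi^*_{i-1}}(\bar{T}_{\pi^*_{i-1}})\geq T_{\pi^*_{i-1}}+P_{\pi^*_{i-1}}(T_{\pi^*_{i-1}})\]
with the relaxed idle-time condition to see that $\bar{T}_{\pi^*_i}$ is a feasible candidate at step $i$, so the greedy choice $T_{\pi^*_i}\leq\bar{T}_{\pi^*_i}$. The main obstacle is justifying the final inequality in the chain above, which requires monotonicity of $T\mapsto T+P(T)$; this is inherited from the order-preserving vehicle dynamics of Assumption~\ref{assumption:x_orderpreserving} and is precisely the technical ingredient one needs for Lemma~\ref{lemma:T<barT}.
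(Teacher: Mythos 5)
Your proposal is correct and follows essentially the same route as the paper: the paper argues directly (``\textsc{Approx} returns \emph{no}'' implies the greedy schedule violates a deadline, which via $T_j\leq\bar{T}_j$ rules out any feasible relaxed schedule), while you argue the contrapositive, but both hinge on exactly the same key fact --- the inequality $T_{\pi^*_j}\leq\bar{T}_{\pi^*_j}$ between the greedy schedule of \procedureS{} under the ordering $\pi^*$ and a feasible schedule of Problem~\ref{problem:relaxedS}, which the paper isolates as Lemma~\ref{lemma:T<barT} and proves by the same induction along $\pi^*$. One small refinement: the last inequality in your chain, $\bar{T}_{\pi^*_{i-1}}+P_{\pi^*_{i-1}}(\bar{T}_{\pi^*_{i-1}})\geq T_{\pi^*_{i-1}}+P_{\pi^*_{i-1}}(T_{\pi^*_{i-1}})$, requires monotonicity of $T\mapsto T+P(T)$, which you flag as the main technical obstacle but only assert; this ingredient is in fact unnecessary, since the paper's induction goes $T_{\pi^*_{i-1}}+P_{\pi^*_{i-1}}(T_{\pi^*_{i-1}})\leq \bar{T}_{\pi^*_{i-1}}+\theta_{max}\leq\bar{T}_{\pi^*_i}$ using only the induction hypothesis and the definition \eqref{equation:themamax} of $\theta_{max}$ as a uniform upper bound on $P_j(\cdot)$. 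With that substitution your argument matches the paper's completely; your side observation that $\bar{\mathbf{T}}$ is itself feasible for Problem~\ref{problem:scheduling} is true but not what \procedureS{} checks, and you correctly pivot away from it.
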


The proofs of Lemmas~\ref{lemma:T<barT} and \ref{lemma:Appro_no_Relaxed_no} can be found in Appendix B. 

\begin{theorem}\label{theorem:approxS_no_relaxedV_no}
	If \procedureAS$([\lowx(0),\upperx(0)],S)=(\emptyset,no)$, then there is no input signal $\aggsignal{u}$ that guarantees $\mathbf{y}(t,\aggsignal{u},\aggsignal{w},\aggsignal{d},\mathbf{x}(0))\notin \hat{B}$ for all $t\geq 0$ for any $\aggsignal{w}\in\mathcal{W},\aggsignal{d}\in\mathcal{D},$ and $\mathbf{x}(0)\in[\lowx(0),\upperx(0)]$.
\end{theorem}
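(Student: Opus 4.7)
The plan is to argue by contrapositive, using the approximation chain built up in Lemmas~\ref{lemma:Appro_no_Relaxed_no} and~\ref{lemma:garey_solve_iit}. Suppose there exists an input signal $\aggsignal{u}\in\mathcal{U}$ such that $\mathbf{y}(t,\aggsignal{u},\aggsignal{w},\aggsignal{d},\mathbf{x}(0))\notin\hat B$ for every $t\geq 0$ and every $\aggsignal{w}\in\mathcal{W}$, $\aggsignal{d}\in\mathcal{D}$, $\mathbf{x}(0)\in[\lowx(0),\upperx(0)]$. The goal is to exhibit a feasible schedule for the relaxed IIT scheduling problem (Problem~\ref{problem:relaxedS}). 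Any such schedule, by Lemma~\ref{lemma:garey_solve_iit} applied through \procedureRES, forces \procedureRES{} to return \textit{yes}, after which the contrapositive of Lemma~\ref{lemma:Appro_no_Relaxed_no} yields that \procedureAS{} also returns \textit{yes}, contradicting the hypothesis.

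The schedule is built exactly as in the $(\Rightarrow)$ direction of Theorem~\ref{theorem:equivalence}: for each $j\in\mathcal{C}$ with $y_j^b(0)<\alpha_j$, let $\tilde T_j$ be the time at which $y_j^b(t,u_j,x_j^b(0))=\alpha_j$, and set $\tilde T_j=0$ otherwise. Condition~\eqref{condition:efficient_boundedinput} is immediate from the definitions of $R_j$ and $D_j$. For condition~\eqref{condition:efficient_controlled}, pick $i\ne j\in\mathcal{C}$ with $0<\tilde T_i\leq\tilde T_j$. The hypothesis that $\hat B$ is avoided for every admissible disturbance and initial state forces the open intervals during which vehicles $i$ and $j$ may lie in their inflated intersections $(\alpha_i,\hat\beta_i)$ and $(\alpha_j,\hat\beta_j)$ to be disjoint, so in particular $y_i^a(\tilde T_j,u_i,x_i^a(0))\geq\hat\beta_i$. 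Since $y_i^a(\tilde T_i,u_i,x_i^a(0))\leq y_i^b(\tilde T_i,u_i,x_i^b(0))=\alpha_i$, the lower trajectory $y_i^a$ must cover at least $\hat\beta_i-\alpha_i=\theta_{max}v_{i,max}$ between $\tilde T_i$ and $\tilde T_j$. Combined with the speed bound $\dot y_i^a\leq v_{i,max}$, this gives $\tilde T_j\geq\tilde T_i+\theta_{max}$, which is~\eqref{condition:efficient_controlled}. Condition~\eqref{condition:efficient_uncontrolled} follows from the same argument applied to controlled vehicle $j$ and uncontrolled vehicle $\gamma$, using that $\hat\beta_\gamma=\beta_\gamma$ so that $(\bar R_\gamma,\bar P_\gamma)$ is already the occupancy interval of $\gamma$'s inflated intersection. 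The corner cases $\tilde T_j=0$ are covered by the special clause of Problem~\ref{problem:relaxedS} that replaces $(T_j,T_j+\theta_{max})$ with $(0,P_j(0))$; the same $\hat B$-avoidance reasoning, together with $B\subseteq\hat B$, gives disjointness of these intervals.

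The main obstacle will be making rigorous the reduction of ``$\aggsignal{u}$ avoids $\hat B$ for every admissible uncertainty'' to ``the potential occupancy intervals of the inflated intersections are pairwise disjoint.'' Two facts underlie this step. First, if two such intervals share a common time $t$, then independent choices of $d_i,d_j,x_i(0),x_j(0)$ compatible with $[\lowx(0),\upperx(0)]$ can be made so that $y_i(t)\in(\alpha_i,\hat\beta_i)$ and $y_j(t)\in(\alpha_j,\hat\beta_j)$ simultaneously; this follows from the fact that the reachable position set at time $t$ equals $[y_i^a(t),y_i^b(t)]$ by Assumption~\ref{assumption:x_orderpreserving} together with continuity, and directly contradicts $\hat B$-avoidance. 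Second, the bound $\dot y_i^a\leq v_{i,max}$ is inherited from the per-trajectory bound $\dot y_i\leq v_{i,max}$ because the pointwise minimum of a family of $v_{i,max}$-Lipschitz functions is itself $v_{i,max}$-Lipschitz from above; this bound is precisely what calibrates $\hat\beta_i=\alpha_i+\theta_{max}v_{i,max}$ so that the inflated intersection takes at least $\theta_{max}$ units of time to traverse.
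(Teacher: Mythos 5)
Your proposal is correct and follows essentially the same route as the paper: it reduces to \procedureRES{} via Lemma~\ref{lemma:Appro_no_Relaxed_no}, then proves the contrapositive by reading off the entry times of a $\hat B$-avoiding input as a schedule and using the calibration $\hat\beta_j-\alpha_j=\theta_{max}v_{j,max}$ to obtain the $\theta_{max}$ separation in conditions~\eqref{condition:efficient_controlled} and \eqref{condition:efficient_uncontrolled}. Your direct Lipschitz bound on $y^a_i$ is just a restatement of the paper's observation that $y^a_j(\tilde T_j+\theta_{max},\signal{\tilde u}{j},x^a_j(0))\leq\hat\beta_j$, and you are in fact somewhat more explicit than the paper about the reachability and continuity details underlying the disjointness of the occupancy intervals.
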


\begin{algorithm*}
	\caption{Implementation of $s_e$}
	\label{algorithm:effsupervisor}
	\begin{algorithmic}[1]
		\Procedure{\efficientS}{$[\lowx(k\tau),\upperx(k\tau)],\mathbf{u}_{desire}^k$}
		\State $(\mathbf{T}_1,ans_1 )=\procedureAS(\mathbf{X}_{pred}(\mathbf{u}_{desire}^k(\cdot)),S)$\label{effsupervisor:T1_ans1}
		\If{$ans_1=yes$ and for all $t\in [0,\tau), B\cap[\lowx(t,\mathbf{u}_{desire}^k(\cdot),\lowx(k\tau)),\upperx(t,\mathbf{u}_{desire}^k(\cdot),\upperx(k\tau))]=\emptyset$ }
		\State $\pi^k =$ a vector of indexes in the increasing order of nonzero entries of $\mathbf{T}_1$\label{effsupervisor:sequence_save1}
		\State $\aggsafe{k+1,\infty}= \sigma(\mathbf{X}_{pred}(\mathbf{u}_{desire}^k(\cdot)),\mathbf{T}_1)$\label{effsupervisor:yes_usafe}
		\State $\aggsafe{k+1}= \mathbf{u}_{safe}^{k+1,\infty}(t)
		~\text{for}~ t\in[(k+1)\tau, (k+2)\tau)$
		\State \textbf{return} $\mathbf{u}_{desire}^k(\cdot)$\label{effsupervisor:return_desire}
		\Else
		\State $(\mathbf{T}_2,ans_2)=\procedureAS(\mathbf{X}_{pred}(\aggsafe{k}),S)$\label{effsupervisor:T2_ans2}
		\If{$ans_2 = no$}
		\State $(\mathbf{T}_2,ans_3)=\procedureS(\pi^{k-1},\mathbf{X}_{pred}(\aggsafe{k}),S)$\label{effsupervisor:T2_ans3}
		\EndIf
		\State $\pi^k=$ a vector of indexes in the increasing order of nonzero entries of $\mathbf{T}_2$ \label{effsupervisor:sequence_save2}
		\State $\aggsafe{k+1,\infty}= \sigma(\mathbf{X}_{pred}(\aggsafe{k}),\mathbf{T}_2)$\label{effsupervisor:no_usafe}
		\State $\aggsafe{k+1}= \mathbf{u}_{safe}^{k+1,\infty}(t)~
		\text{for}~ t\in[(k+1)\tau, (k+2)\tau)$
		\State \textbf{return} $\aggsafe{k}$\label{effsupervisor:return_safe}
		\EndIf
		\EndProcedure
	\end{algorithmic}
\end{algorithm*}

\begin{proof}
	By Lemma~\ref{lemma:Appro_no_Relaxed_no}, \procedureAS$([\lowx(0),\upperx(0)],S)=(\emptyset,no)$ means that \procedureRES~returns \textit{no}. Thus, we will prove that if there is no schedule satisfying conditions~\eqref{condition:efficient_boundedinput}-\eqref{condition:efficient_uncontrolled}, there is no input signal to avoid the inflated Bad set for any uncertainty. For ease of proof, we prove the contrapositive statement. That is, assume that there is an input signal $\aggsignal{\tilde{u}}\in\mathcal{U}$ such that $\mathbf{y}(t,\aggsignal{\tilde{u}},\aggsignal{w},
	\aggsignal{d},\mathbf{x}(0))\notin \hat{B}$ for all $t$ for any
	$\aggsignal{w}\in\mathcal{W},~\aggsignal{d}\in \mathcal{D}$, and $\mathbf{x}(0)\in[\lowx(0),\upperx(0)]$. Then, there exists schedule $\tilde{\mathbf{T}}$ satisfying conditions~\eqref{condition:efficient_boundedinput}-\eqref{condition:efficient_uncontrolled}. This proof is similar to the first part of the proof of Theorem~\ref{theorem:equivalence}.
	
	For all $j\in\mathcal{C}$, define $\tilde{T}_j$ as $y^b_j(\tilde{T}_j,\signal{\tilde{u}}{j},x^b_j(0))=\alpha_j$ if $y^b_j(0)< \alpha_j$, and $0$ otherwise, and $\tilde{K}_j$ as $y^a_j(\tilde{K}_j,\signal{\tilde{u}}{j},x^a_j(0))=\hat{\beta}_j$ if $y^a_j(0)<\hat{\beta}_j$, and $0$ otherwise. For $i\ne j\in\mathcal{C}$, since $y_i(t,\signal{\tilde{u}}{i},\signal{d}{i},x_i(0))$ and $y_j(t,\signal{\tilde{u}}{j},\signal{d}{j},x_j(0))$ avoid the inflated Bad set for any $\aggsignal{d}\in\mathcal{D}$ and $\mathbf{x}(0)\in[\lowx(0), \upperx(0)]$, $(\tilde{T}_i,\tilde{K}_i)\cap(\tilde{T}_j,\tilde{K}_j)=\emptyset$. Since the inflated intersection takes account of the maximum driving distance during $\theta_{max}$, we have $y^a_j(\tilde{T}_j+\theta_{max},\signal{\tilde{u}}{j},x^a(0))\leq \hat{\beta}_j$. This implies $\tilde{T}_j+\theta_{max}\leq \tilde{K}_j$ for all $j\in\mathcal{C}$ because $y_j^a(t)$ is non-decreasing in $t$. Thus, $(\tilde{T}_i, \tilde{T}_i+\theta_{max})\cap (\tilde{T}_j,\tilde{T}_j+\theta_{max})=\emptyset$ (condition~\eqref{condition:efficient_controlled}).
	
	For $\gamma\in\uncontrolled$, define $(\bar{R}_\gamma, \bar{P}_\gamma)$ as in Definition~\ref{definition:schedulingparmeters}. Then, for any $\signal{w}{\gamma}\in\mathcal{W}_\gamma,\signal{d}{\gamma}\in\mathcal{D}_\gamma,$ and $x_\gamma(0)\in [x_\gamma^a(0),x_\gamma^b(0)]$, uncontrolled vehicle $\gamma$ enters the intersection no earlier than $\bar{R}_\gamma$ and exits it no later than $\bar{P}_\gamma$. In order for $\signal{\tilde{u}}{j}$ to guarantee that $y_j(t, \signal{\tilde{u}}{j},\signal{d}{j},x_j(0))$ and $y_\gamma(t)$ never meet inside the intersection for any uncertainty,
	$(\tilde{T}_j, \tilde{K}_j)\cap(\bar{R}_\gamma, \bar{P}_\gamma)=\emptyset$. Since $\tilde{T}_j+\theta_{max}\leq \tilde{K}_j$, we have $(\tilde{T}_j, \tilde{T}_j+\theta_{max})\cap (\bar{R}_\gamma, \bar{P}_\gamma)=\emptyset$ (condition~\eqref{condition:efficient_uncontrolled}).
	
	 Condition~\eqref{condition:efficient_boundedinput} is satisfied by the definitions of $R_j$ and $D_j$. \qed
\end{proof}

In summary, Theorem~\ref{theorem:approxS_yes_V_yes} states that if procedure \procedureAS~returns \textit{yes}, there exists an input signal to avoid the Bad set $B$ for any uncertainty. As stated in Theorem~\ref{theorem:approxS_no_relaxedV_no}, if the procedure returns \textit{no}, there does not exist an input signal to avoid the inflated Bad set $\hat{B}$ for any uncertainty. Thus, the difference between $B$ and $\hat{B}$ is a measure of the approximation bound of procedure \procedureAS.

\subsection{Efficient supervisor}\label{section:effsupervisor}
In order for a supervisor to run in real time, the verification problem must be solved within the time step $\tau$. However, if a large number of controlled vehicles is considered, the problem becomes intractable. Thus, we employ the results in Section~\ref{section:efficientverification} to design efficient supervisors. The simplest solution would be to implement a supervisor $\hat{s}([\lowx(k\tau),\upperx(k\tau)],\mathbf{u}_{desire}^k)$ defined as follows:
$$
\hat{s} = 
\begin{cases}
\mathbf{u}_{desire}^k(\cdot) & \text{if}~\exists \aggdesire{k}:\text{
	for all}~ t\geq 0\\ 
&\mathbf{y}(t,\aggdesire{k},\aggsignal{w},\aggsignal{d},\mathbf{x}(k\tau))\notin \hat{B}\\
\aggsafe{k} & \text{otherwise.}
\end{cases}
$$
The only difference from the exact supervisor $s$ is that the Bad set $B$ is replaced by the inflated Bad set $\hat{B}$.

Instead of implementing $\hat{s}$, we consider a procedure implementing another supervisor, $s_e$, by using procedure \procedureAS~to verify whether or not to override drivers. We call this procedure \efficientS~in Algorithm~\ref{algorithm:effsupervisor}.

 Notice that this procedure stores the feasible sequence of vehicles crossing the intersection at every time step (lines~\ref{effsupervisor:sequence_save1} and \ref{effsupervisor:sequence_save2}). This is because when the sequence considered in line~\ref{effsupervisor:T2_ans2} does not yield a feasible schedule, the previous step's sequence $\pi^{k-1}$ can be used to generate one. This is where in procedure \procedureS~in Algorithm~\ref{algorithm:verification}, $|\pi_0|$ can be different from $|\pi|$, where $\pi_0=\pi^{k-1}$ and $\pi\in\mathbb{R}^{|\mathcal{M}|}$. This is because  $\mathcal{M}$ is a set of vehicles that have not entered an intersection at the current step, whereas $\pi^{k-1}$ is a sequence of vehicles that had not entered an intersection at the previous step.

The fact that the previous step's sequence leads to a feasible schedule ensures the non-blocking property of the supervisor and is proved in the next theorem. \cite{bruni_robust_2013} proposed an efficient supervisor considering measurement errors and unmodeled dynamics with all vehicles controlled. Their supervisor cannot find a feasible schedule at every step and thus uses the previous schedule until a new feasible schedule is found. This ignores the correction of the state estimation during the open-loop control, thereby being more conservative than our efficient supervisor, which updates the schedule based on the most current state estimation.

Procedure \efficientS~takes polynomial computation time, and guarantees avoiding the Bad set $B$ because such a safe input exists if procedure \procedureAS~returns \textit{yes} by Theorem~\ref{theorem:approxS_yes_V_yes}. Most importantly, we will prove in the following theorem that $s_e$ is less restrictive than $\hat{s}$, in the sense that $\hat{s}$ overrides controlled vehicles more frequently than $s_e$.

\begin{theorem}\label{theorem:efficientsupervisor}
The supervisor $s_e$ is less restrictive than $\hat{s}$: if \efficientS$([\lowx(k\tau),\upperx(k\tau)],\mathbf{u}_{desire}^k)=\aggsafe{k}$, then $\hat{s}([\lowx(k\tau),\upperx(k\tau)],\mathbf{u}_{desire}^k)=\aggsafe{k}$. Moreover, $s_e$ is non-blocking.
\end{theorem}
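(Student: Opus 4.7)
The plan is to handle the two claims separately. For the less-restrictive assertion, I would unpack the \textbf{if}-guard at the top of procedure \efficientS: it returns $\aggsafe{k}$ exactly when either (a) there is some $t\in[0,\tau)$ with $B\cap[\lowx(t,\mathbf{u}_{desire}^k(\cdot),\lowx(k\tau)),\upperx(t,\mathbf{u}_{desire}^k(\cdot),\upperx(k\tau))]\neq \emptyset$, or (b) $ans_1=no$. In case (a), the containment $B\subseteq \hat{B}$ together with \eqref{equation:stateestimation_guarantee} produces a triple $(\aggsignal{w},\aggsignal{d},\mathbf{x}(k\tau))$ whose actual trajectory under $\mathbf{u}_{desire}^k(\cdot)$ enters $\hat{B}$ before time $(k+1)\tau$; since \eqref{eq:desired_input} forces every candidate $\aggdesire{k}$ to coincide with $\mathbf{u}_{desire}^k(\cdot)$ on $[k\tau,(k+1)\tau)$, no such $\aggdesire{k}$ can universally avoid $\hat{B}$, and $\hat{s}$ overrides. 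In case (b), Theorem~\ref{theorem:approxS_no_relaxedV_no} applied to $\mathbf{X}_{pred}(\mathbf{u}_{desire}^k(\cdot))$ rules out any input signal that keeps trajectories out of $\hat{B}$ from time $(k+1)\tau$ onward, so again no $\aggdesire{k}$ works and $\hat{s}$ overrides.

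For non-blocking I would mimic the induction from Theorem~\ref{theorem:supervisor}, now with \procedureES~replaced by \procedureAS~and an extra fallback branch. The inductive hypothesis is that at step $k-1$ we have $\aggout{k-1}\neq\emptyset$, a stored sequence $\pi^{k-1}$, and $\aggsafe{k,\infty}=\sigma(\mathbf{X}_{pred}(\aggout{k-2}),\mathbf{T})$ for some schedule $\mathbf{T}$ whose crossing order is $\pi^{k-1}$, so that \eqref{equation:usafe_at_kt} holds. At step $k$, if the \textbf{if}-guard passes then \efficientS~returns $\mathbf{u}_{desire}^k(\cdot)\neq\emptyset$ and produces $\aggsafe{k+1,\infty}=\sigma(\mathbf{X}_{pred}(\mathbf{u}_{desire}^k(\cdot)),\mathbf{T}_1)$, which is well-defined since Theorem~\ref{theorem:approxS_yes_V_yes} guarantees $\mathbf{T}_1$ is a valid IIT schedule. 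Otherwise \efficientS~returns $\aggsafe{k}$, which is nonempty by the hypothesis, so the only remaining task is to build $\aggsafe{k+1,\infty}$ from a feasible schedule at $\mathbf{X}_{pred}(\aggsafe{k})$; when $ans_2=yes$ this is immediate from $\mathbf{T}_2$, so the real content is the case $ans_2=no$.

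In that case I would argue that \procedureS$(\pi^{k-1},\mathbf{X}_{pred}(\aggsafe{k}),S)$ necessarily succeeds. Applying $\aggsafe{k}$ to $\mathbf{X}_{pred}(\aggout{k-1})$ yields the larger interval in \eqref{equation:usafe_at_k+1}, inside which $\aggsafe{k+1,\infty}$ (the tail of $\aggsafe{k,\infty}$) keeps every trajectory out of $B$ for all future times. The state correction \eqref{equation:statecorrection} gives $[\lowx(k\tau),\upperx(k\tau)]\subseteq \mathbf{X}_{pred}(\aggout{k-1})$, and Assumption~\ref{assumption:x_orderpreserving} propagates this through one time step to yield the inclusion \eqref{equation:predict_inside}, so $\aggsafe{k+1,\infty}$ is also a safe signal from the smaller set $\mathbf{X}_{pred}(\aggsafe{k})$. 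Theorem~\ref{theorem:equivalence} then produces a feasible IIT schedule $\tilde{\mathbf{T}}$ at $\mathbf{X}_{pred}(\aggsafe{k})$; because $\aggsafe{k+1,\infty}$ was generated via $\sigma$ with crossing order $\pi^{k-1}$, this $\tilde{\mathbf{T}}$ orders the vehicles still to reach the intersection according to $\pi^{k-1}$. Since procedure \procedureS~outputs the componentwise-earliest schedule compatible with its input sequence and the scheduling constraints, the existence of the feasible $\tilde{\mathbf{T}}$ forces \procedureS~to return a feasible $\mathbf{T}_2$, so $\aggsafe{k+1,\infty}=\sigma(\mathbf{X}_{pred}(\aggsafe{k}),\mathbf{T}_2)$ is well-defined and $\pi^k$ can be stored.

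The step I expect to be the main obstacle is the last one: pinning down rigorously that the feasible schedule induced by $\aggsafe{k+1,\infty}$ respects the stored sequence $\pi^{k-1}$ after pruning vehicles that have already crossed by $(k+1)\tau$, and that the earliest-schedule construction inside \procedureS~cannot artificially fail on any sequence for which a feasible schedule exists. Both pieces rest on monotonicity — the order-preserving property in Assumption~\ref{assumption:x_orderpreserving} together with the monotonicity of the earliest-schedule map in the release times and idle-times — and require careful bookkeeping when the index set $\mathcal{M}$ used at step $k$ is smaller than the one used at step $k-1$.
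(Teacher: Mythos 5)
Your proposal is correct and follows the same overall decomposition as the paper: the less-restrictive claim reduces to Theorem~\ref{theorem:approxS_no_relaxedV_no}, and non-blocking is an induction whose only nontrivial case is $ans_1=no$ and $ans_2=no$, resolved by showing that the stored sequence $\pi^{k-1}$ still admits a feasible schedule at $\mathbf{X}_{pred}(\aggsafe{k})$. The one place you diverge is in how that fallback schedule is produced. You take the tail of the stored safe signal, push it through the inclusion \eqref{equation:predict_inside}, invoke Theorem~\ref{theorem:equivalence} to extract a feasible schedule, and then argue it respects $\pi^{k-1}$; the paper instead compares the scheduling parameters across consecutive steps directly, deriving $T_j^{k-1}\leq \tilde{T}_j+\tau$ and $T_j^{k-1}+P_j^{k-1}(T_j^{k-1})\geq \tilde{T}_j+\tilde{P}_j(\tilde{T}_j)+\tau$ from Assumption~\ref{assumption:x_orderpreserving} and \eqref{equation:statecorrection}, so that the occupancy intervals at step $k$ are nested (after a shift by $\tau$) in those at step $k-1$ and feasibility is inherited interval by interval. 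The paper's route buys you the order-preservation of the crossing sequence for free (the nesting of intervals cannot permute the order), which is exactly the bookkeeping you flag as your main obstacle; your route is cleaner conceptually but does leave that step, plus the claim that \procedureS~never fails on a sequence admitting a feasible schedule, to be verified by the same monotonicity computation the paper carries out explicitly. Both arguments ultimately rest on that implicit optimality of \procedureS, so neither is more rigorous on that point. Two small things to fix: your inductive hypothesis should read $\aggsafe{k,\infty}=\sigma(\mathbf{X}_{pred}(\aggout{k-1}),\mathbf{T})$, not $\mathbf{X}_{pred}(\aggout{k-2})$; and in your case (a) of the first claim, the step from the estimation box meeting $B$ to an actual realization entering $B$ needs the reachable position set to fill the box, which holds here by the achievability and connectedness of the disturbance extremes but is worth stating (the paper sidesteps this case entirely).
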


\begin{proof}
\efficientS $([\lowx(k\tau),\upperx(k\tau)],\mathbf{u}_{desire}^k)=\aggsafe{k}$ when \procedureAS$(\mathbf{X}_{pred}(\mathbf{u}_{desire}^k(\cdot)),S)$ returns \textit{no}. By Theorem~\ref{theorem:approxS_no_relaxedV_no}, there is no input signal $\aggdesire{k}\in \mathcal{U}$ that can prevent entering the inflated Bad set for some uncertainties. Thus, $\hat{s}(\mathbf{X}_{pred}(\mathbf{u}_{desire}^k),S)$ returns $\aggsafe{k}$. This proves that $s_e$ is less restrictive than $\hat{s}$.

The non-blocking property is proved by mathematical induction on time step $k$. For the base case, it is assumed that \efficientS $([\lowx(0),\upperx(0)],\mathbf{u}_{desire}^0)=\aggout{0}\ne \emptyset$ where $\aggout{0}$ is defined on time $[0,\tau)$, and $\aggsafe{1,\infty}$ is well-defined, that is, there exists a schedule $\mathbf{T}$ that defines $\aggsafe{1,\infty}$ as $\sigma(\mathbf{X}_{pred}(\mathbf{u}_{desire}^0(\cdot)),\mathbf{T})$.
Suppose at $t=(k-1)\tau$, we have \efficientS $([\lowx((k-1)\tau),
\upperx((k-1)\tau)],\mathbf{u}_{desire}^{k-1})=\aggout{k-1}\ne
\emptyset$ and $\aggsafe{k,\infty}$ is well-defined, which satisfies $\mathbf{y}(t,\aggsafe{k,\infty},\aggsignal{w}, \aggsignal{d}, \mathbf{x}(k\tau))\notin B$ for any $\aggsignal{w}\in\mathcal{W}, \aggsignal{d}\in\mathcal{D}$, and $\mathbf{x}(k\tau)\in \mathbf{X}_{pred}(\aggout{k-1})$. 
Now, we need to prove that $\aggout{k}\ne \emptyset$, and $\aggsafe{k+1,\infty}$ is well-defined. 

In Algorithm~\ref{algorithm:effsupervisor}, if $ans_1=yes$ in line~\ref{effsupervisor:T1_ans1} or $ans_2=yes$ in line~\ref{effsupervisor:T2_ans2}, by Theorem~\ref{theorem:approxS_yes_V_yes}, there exists an input signal that makes the vehicle trajectories avoid entering the Bad set. Thus, $\aggsafe{k+1,\infty}$
is well-defined on time $[(k+1)\tau, \infty)$ in lines~\ref{effsupervisor:yes_usafe} and \ref{effsupervisor:no_usafe}. Also, $\aggout{k}= \emptyset$ because $\aggout{k}=\aggdesire{k}$ or $\aggout{k}=\aggsafe{k}$. 

Suppose that $ans_1=no$ and $ans_2=no$. Then, given $\pi^{k-1}$, we need to prove that $ans_3=yes$ and $\mathbf{T}_2$ exists in line~\ref{effsupervisor:T2_ans3}. Notice that $\pi^{k-1}$ is a vector of indexed in the increasing order of the nonzero entries of a feasible schedule $\mathbf{T}^{k-1}$ of the previous step. That is, at the previous step, \procedureS$(\pi^{k-1},\mathbf{X}_{pred}(\aggout{k-1}),S)=(\mathbf{T}^{k-1},yes)$. We will show that the existence of $\mathbf{T}^{k-1}$ implies that of $\mathbf{T}_2$.

Let $\aggsafe{k+1,\infty}$ be $\aggsafe{k,\infty}$ restricted to time $[(k+1)\tau,\infty)$ and $\aggsafe{k}$ be $\aggsafe{k,\infty}$ restricted to time $[k\tau, (k+1)\tau)$. Then, the $j$-th entry of $\mathbf{T}^{k-1}$ is as follows:\begin{align}
&T_j^{k-1}:=\{t:y^b_j(t,\safe{k}{j},\max X_{pred,j}(u_{out,j}^{k-1}(\cdot)))=\alpha_j\}\notag\\
&=\{t:y^b_j(t,\safe{k+1}{j},\notag\\
&\hspace{0.5 in}x^b(\tau,u_{safe,j}^k(\cdot),\max X_{pred,j}(u_{out,j}^{k-1}(\cdot)))=\alpha_j\}+\tau\notag\\
&\leq \{t:y^b_j(t,\safe{k+1}{j},x^b(\tau,u_{safe,j}^k(\cdot),x^b(k\tau)))=\alpha_j\}+\tau\notag\\
&:=\tilde{T}_j +\tau.\label{effsupervisor_nonblock:T}
\end{align}
The inequality is due to $[x^a_j(k\tau), x^b_j(k\tau)]\subseteq {X}_{pred,j}(u_{out,j}^{k-1}(\cdot))$ by \eqref{equation:statecorrection} and the order-preserving property in Assumption~\ref{assumption:x_orderpreserving}. Similarly, its process time $T_j^{k-1}+P_j^{k-1}(T_j^{k-1})$ is as follows:
\begin{align}
&T_j^{k-1}+P_j^{k-1}(T_j^{k-1})\notag\\
&:=\{t:y^a_j(t,\safe{k}{j},\min X_{pred,j}(u_{out,j}^{k-1}(\cdot)))=\beta_j\}\notag\\
&=\{t:y^a_j(t,\safe{k+1}{j},\notag\\
&\hspace{0.5 in}x^a(\tau,u_{safe,j}^k(\cdot),\min X_{pred,j}(u_{out,j}^{k-1}(\cdot)))=\beta_j\}+\tau\notag\\
&\geq \{t:y^a_j(t,\safe{k+1}{j},x^a(\tau,u_{safe,j}^k(\cdot),x^a(k\tau)))=\beta_j\}+\tau\notag\\
&:=\tilde{T}_j+\tilde{P}_j(\tilde{T}_j) +\tau.\label{effsupervisor_nonblock:T+P}
\end{align}
For $\gamma\in\uncontrolled$, let $(\bar{R}_\gamma^{k-1}, \bar{P}_\gamma^{k-1})$ denote the idle-time given $\mathbf{X}_{pred}(\aggout{k-1})$.

We now show that the schedule $\tilde{\mathbf{T}}:=(\tilde{T}_j:j\in\mathcal{C})$ is a feasible schedule of \procedureS$(\pi^{k-1}, \mathbf{X}_{pred}(\aggsafe{k}),S)$. The release time and deadline of this procedure for $j\in\mathcal{C}$ is by definition, \begin{align*}
&R_j=\min_{\signal{u}{j}\in\mathcal{U}_j} \{t:y^b_j(t,\signal{u}{j},\max X_{pred,j}(u_{safe,j}^{k}(\cdot)))=\alpha_j\},\\
&D_j=\max_{\signal{u}{j}\in\mathcal{U}_j} \{t:y^b_j(t,\signal{u}{j},\max X_{pred,j}(u_{safe,j}^{k}(\cdot)))=\alpha_j\}.
\end{align*} Since $\upperx(\tau, \aggsafe{k},\upperx(k\tau))=\max \mathbf{X}_{pred}(\aggsafe{k})$ in \eqref{equation:stateprediction_j}, we have $\tilde{T}_j \in [R_j, D_j]$.

Since $\mathbf{T}^{k-1}$ is feasible, $(T_j^{k-1}, T_j^{k-1}+P_j^{k-1}(T_j^{k-1}))\cap (T_i^{k-1},T_i^{k-1}+P_i^{k-1}(T_i^{k-1}))=\emptyset$ for $i\ne j\in\mathcal{C}$. Because of \eqref{effsupervisor_nonblock:T} and \eqref{effsupervisor_nonblock:T+P}, 
$$(\tilde{T}_j, \tilde{T}_j+\tilde{P}_j(\tilde{T}_j))\subseteq(T_j^{k-1}, T_j^{k-1}+P_j^{k-1}(T_j^{k-1}))-\tau,$$
for all $j\in\mathcal{C}$. Thus, $(\tilde{T}_j, \tilde{T}_j+\tilde{P}_j(\tilde{T}_j))\cap (\tilde{T}_j, \tilde{T}_i+\tilde{P}_i(\tilde{T}_i))=\emptyset$.

Similarly for $\gamma\in\uncontrolled$, it is not difficult to see that the idle-time of \procedureS$(\pi^{k-1}, \mathbf{X}_{pred}(\aggsafe{k}),S)$ denoted by $(\bar{R}_\gamma, \bar{P}_\gamma)$ becomes a subset of $(\bar{R}_\gamma^{k-1}, \bar{P}_\gamma^{k-1})-\tau$. Since $(T_j^{k-1}, T_j^{k-1}+P_j^{k-1}(T_j^{k-1}))\cap (\bar{R}_\gamma^{k-1}, \bar{P}_\gamma^{k-1})=\emptyset$, we have $(\tilde{T}_j, \tilde{T}_j+\tilde{P}_j(\tilde{T}_j))\cap(\bar{R}_\gamma, \bar{P}_\gamma)=\emptyset$. 

Thus, $\tilde{\mathbf{T}}$ is feasible, thereby implying $ans_3=yes$ in line~\ref{effsupervisor:T2_ans3} of Algorithm~\ref{algorithm:effsupervisor}. Since $\mathbf{T}_2 = \tilde{\mathbf{T}}$ exists, $\aggsafe{k+1,\infty}$ is well-defined.

Therefore, the supervisor is non-blocking.\qed
\end{proof}
\begin{figure*}[t!]
	\centering
	\begin{subfigure}{0.49\linewidth}
		\centering
		\includegraphics[width=\linewidth]{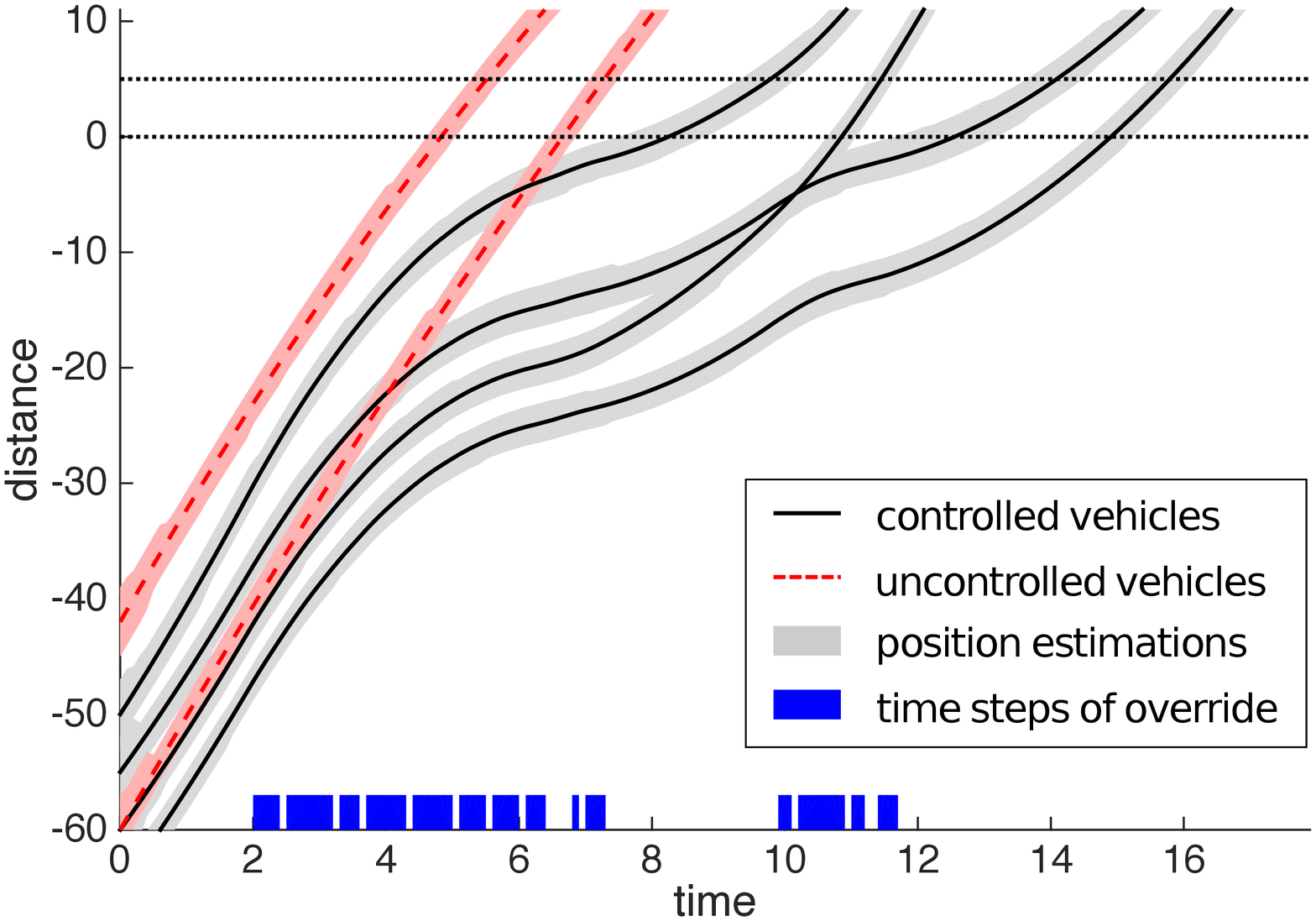}
		\caption{}
		\label{figure:exact_4veh}
	\end{subfigure}
	\begin{subfigure}{0.49\linewidth}
		\centering
		\includegraphics[width=\linewidth]{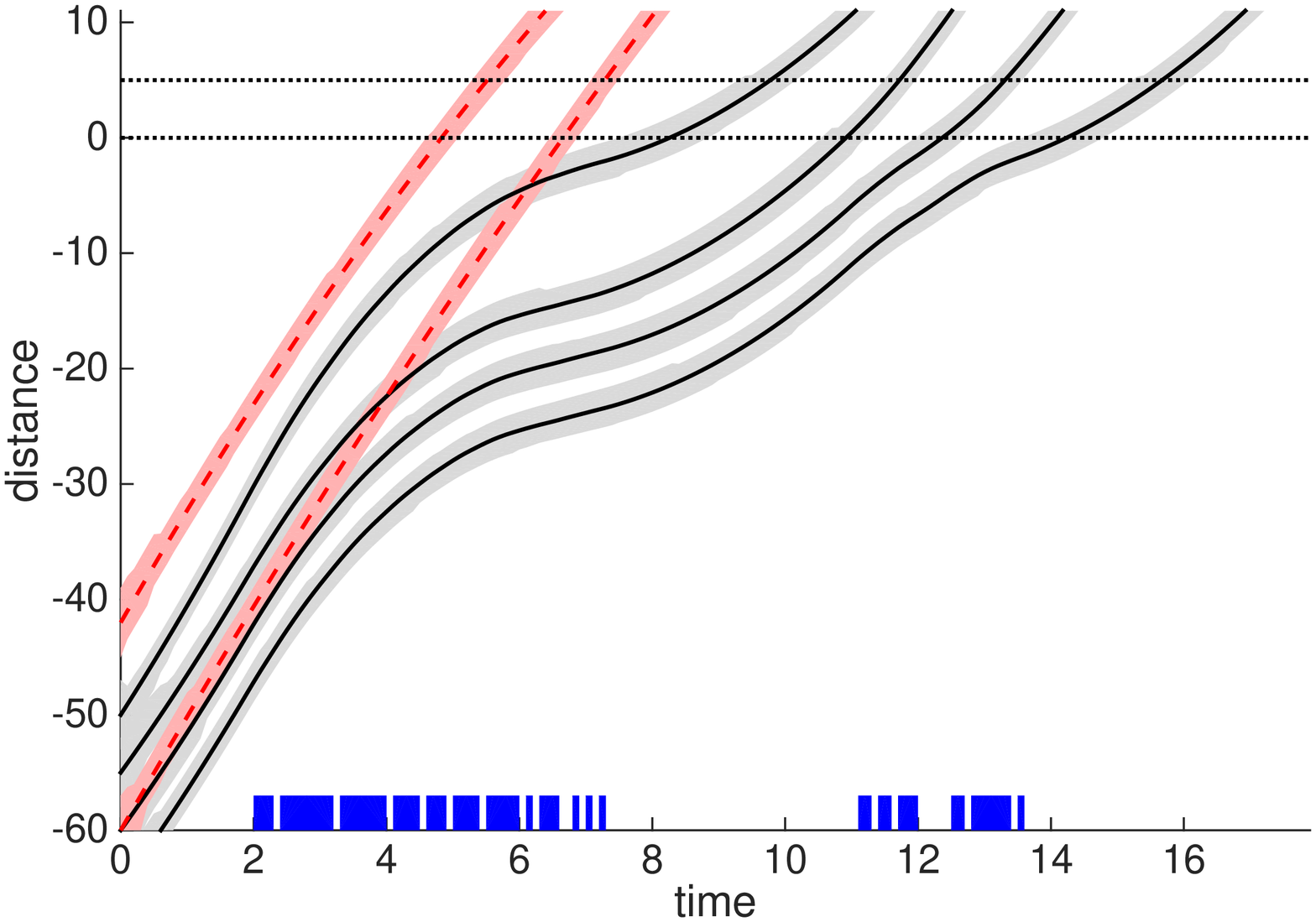}
		\caption{}
		\label{figure:approx_4veh}	
	\end{subfigure}
	\caption{Simulation of the exact supervisor $s$ in {(a)} and the efficient supervisor $s_e$ in {(b)}. The dotted lines at distance 0 and 5 represent the location of the intersection. The solid black lines and the dotted red lines are the exact trajectories of the controlled and uncontrolled vehicles, respectively, and the shading around the lines represent the position estimation. The blue boxes on the bottom are the time steps at which the supervisor overrides the controlled vehicles. The initial exact state is $\mathbf{y}(0)=$(-42, -50, -55, -60, -60, -65) and $\mathbf{v}(0)=$(10, 9, 8, 8, 10, 8).}
	\label{figure:simulation_4veh}
\end{figure*}

In summary, the efficient supervisors $\hat{s}$ and $s_e$ are more restrictive than the exact supervisor $s$, and $\hat{s}$ is more restrictive than $s_e$ by Theorem~\ref{theorem:efficientsupervisor}.

\section{Simulations}\label{section:simulation}
In this section, we present simulation results of the exact and efficient supervisors, $s$ and $s_e$, in two scenarios by running Algorithms~\ref{algorithm:supervisor} and \ref{algorithm:effsupervisor}. These are implemented using MATLAB on a personal computer with an 3.10\,GHz Intel Core i7-3770s processor with 8\,GB RAM.

With the dynamic states $x_j = (y_j, v_j)$ and $x_\gamma=(y_\gamma,v_\gamma)$ for $j\in\mathcal{C}$ and $\gamma\in\uncontrolled$, the vehicle dynamics considered in the simulation are as follows:
\begin{align*}
&\dot{y}_j = v_j+d_{y,j}, \\
&\dot{v}_j = \begin{cases} \max(0,u_j-b v_j^2 + d_{v,j}) & \text{if} ~(v_j=v_{j,min}), \\
\min(0,u_j-b v_j^2 + d_{v,j}) & \text{if} ~(v_j=v_{j,max}), \\
u_j - b v_j^2 + d_{v,j} & \text{otherwise},
\end{cases}
\end{align*}
and
\begin{align*}
&\dot{y}_\gamma = v_\gamma + d_{y,\gamma},\\
& \dot{v}_\gamma=\begin{cases} \max(0,w_\gamma-b v_\gamma^2 +d_{v,\gamma}) & 
\text{if} ~(v_\gamma=v_{\gamma,min}),\\
\min(0,w_\gamma-b v_\gamma^2 +d_{v,\gamma}) & 
\text{if} ~(v_\gamma=v_{\gamma,max}),\\
w_\gamma - b v_\gamma^2 + d_{v,\gamma} & \text{otherwise},
\end{cases}
\end{align*}
where $b = 0.001$ is a drag coefficient, and $d_{y,j}, d_{y,\gamma}, d_{v,i}, d_{v,\gamma}$ are disturbances representing unmodeled dynamics bounded by $-0.05$ and $0.05$. The speeds $v_j, v_\gamma$ are bounded by $v_{j,min}=v_{\gamma,min}=1.39$ and $v_{j,max}=v_{\gamma,max}=13.9$, the input $u_j$ by $u_{j,min}=-2.5$ and $u_{j,max}=2.5$, and the driver-input $w_\gamma$ by $w_{\gamma,min}=-0.5$ and $w_{\gamma,max}=0.5$. In the simulation, the disturbances and the driver-inputs of uncontrolled vehicles are randomly chosen within their bounds. 

At each time step $\tau=0.1$, the supervisors determine whether there are impending collisions at an intersection located at $(\alpha_i, \beta_i)=(0,5)$ for all $i\in\allset$. The states of all vehicles are measured subject to noises, $\delta_{y,i}\in [-3, 3]$ and $\delta_{v,i}\in [-0.05, 0.05]$. For the sake of simplicity, in the simulation, we let $u_{j,desire}^k=1$ for all vehicles at all times.



In the first scenario, four controlled and two uncontrolled vehicles are approaching an intersection as in Figure~\ref{figure:simulation_4veh}, which illustrates the position trajectories of the vehicles in time.

The simulation result of the exact supervisor $s$ is shown in Figure~\ref{figure:exact_4veh}, and that of the efficient supervisor $s_e$ is shown in Figure~\ref{figure:approx_4veh}. Though we said in Section~\ref{section:effsupervisor} that $s_e$ is more conservative than $s$ -- in the sense that given the same initial condition, $s$ evaluates more inputs to find one guaranteeing avoiding the Bad set than $s_e$ -- comparison of the override time steps (blue boxes) in Figures~\ref{figure:exact_4veh} and \ref{figure:approx_4veh} indicates that $s_e$ does not override the drivers more than $s$. It takes 0.067~s and 0.011~s per iteration to execute $s$ (Algorithm~\ref{algorithm:supervisor}) and $s_e$ (Algorithm~\ref{algorithm:effsupervisor}), respectively, in the worst case. This validates that the computation of the efficient supervisor is faster than that of the exact supervisor. In both simulations, the intersection at distance $(0,5)$ is occupied by one vehicle at a time.


\begin{figure}[htb!]
	\centering
	\includegraphics[width=\columnwidth]{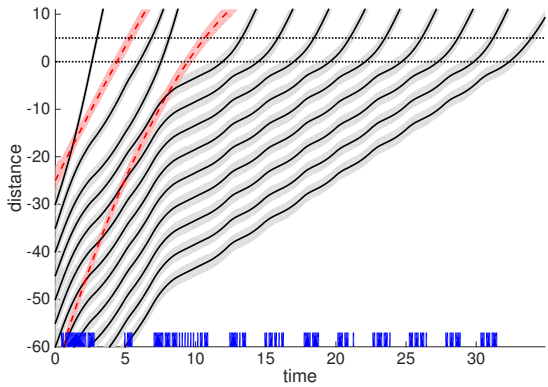}
	\caption{Simulation of the efficient supervisor $s_e$. The initial exact state is $\mathbf{y}(0)=$(-25, -30, -35, -40, -45, -50, -55, -60, -65, -65, -70, -75, -80, -85) and $\mathbf{v}(0)=$(6,9, 9, 8, 8, 8, 8, 8, 8, 9.5, 8, 8, 8, 8).}
	\label{figure:approx_12veh}
\end{figure}
	
The second scenario considers twelve controlled and two uncontrolled vehicles. Due to the large number of controlled vehicles involved, the exact supervisor cannot solve the verification problem within one time step. Thus, the only option to resolve conflict in this scenario is to implement the efficient supervisor (Algorithm~\ref{algorithm:effsupervisor}). As shown in Figure~\ref{figure:approx_12veh}, the efficient supervisor assists the vehicles to prevent any collision at the intersection. In the worst case, it takes 0.033~s per iteration.

\section{Experiments}\label{section:experiment}
In this section, we describe the experimental validation of the exact supervisor on an intersection testbed. First, we introduce the laboratory apparatus. Then, we describe the dynamic model of the RC cars used in the experiment and the techniques adopted to reduce disturbances. The results of the experiment are provided at the end of this section.

\subsection{Experimental setup}

\begin{figure}[htb!]
\centering
\begin{subfigure}{.49\linewidth}
  \centering
  \includegraphics[width=\linewidth, height=1.5 in]{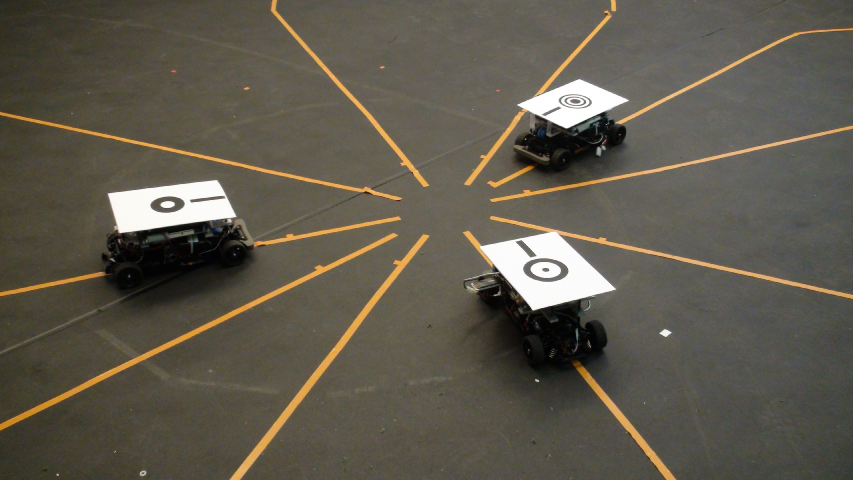}
  \caption{}
  \label{figure:carsandsymbols}
\end{subfigure}
\begin{subfigure}{.49\linewidth}
  \centering
  \includegraphics[width=\linewidth]{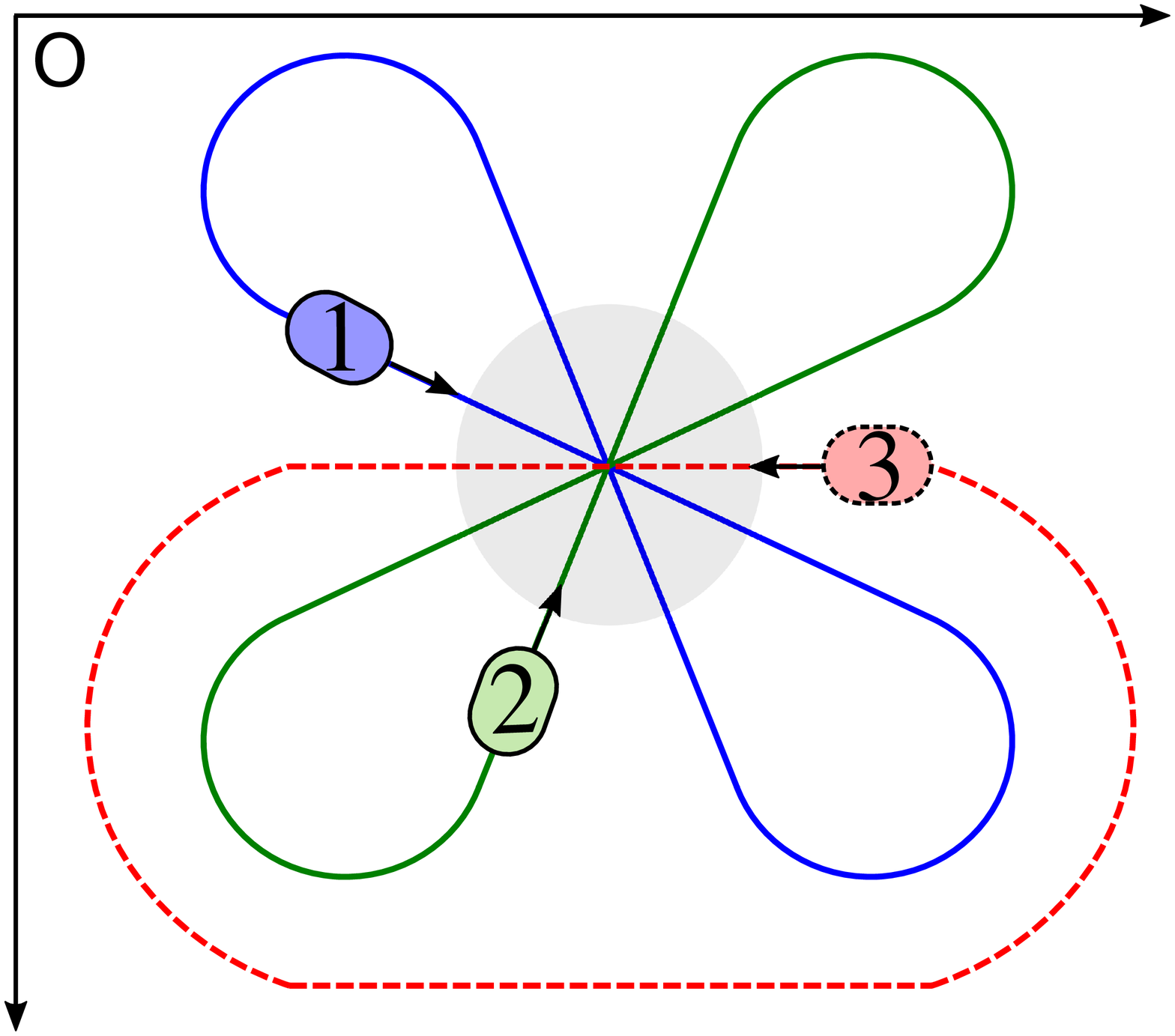}
  \caption{}
  \label{figure:paths}
\end{subfigure}
\caption{ \textbf{(a)} The RC cars used in the experiment. The camera system detects their symbols to identify each car and measure its position. \textbf{(b)} The paths of the cars on the testbed. Cars 1 and 2 are controlled by the supervisor while car 3 is not. A car is considered to be occupying the intersection when its position is within the shaded circle.}
\end{figure}

The cars used for this experiment are each built onto a Tamiya scaled RC car chassis equipped with a DC motor. A micro-controller (Acroname Moto 1.0) is used to control the steering servo and the motor through two separate PWM channels. An on-board computer (Mini-ITX running Fedora) runs the C programs, which provide the functionalities required to communicate with the centralized supervisor and to control the micro-controller. The system is powered by two batteries (Tenergy Li-Ion 14.8~V 4400~mAh) connected to a capacitor (Aluminium Electrolytic Capacitor 12000~uF 25~V) through a power relay (Omron G5SB).  A power amplifier (Acroname 3~A Back EMF H-bridge) connected to the batteries through a switch provides necessary power to the motor. 

During the experiment, three cars follow distinct paths intersecting at a single point on a 6~m $\times$ 6~m testbed as shown in Figure \ref{figure:paths}. Each path is stored as sequential points on the coordinate system illustrated in the figure. Cars 1 and 2 are controlled by the exact supervisor when necessary, while car 3 is not controllable. We program cars 1 and 2 to maintain constant motor input, which corresponds to the desired input, while car 3 is driven by a human operator. 


Each car has access to its own wheel speed through a quadrature encoder mounted on the rear axle. The position and direction of the cars are measured by an over-head vision system, which comprises six cameras on the ceiling and three computers processing images taken by the cameras. Each car is identified by a symbol attached to its roof as shown in Figure~\ref{figure:carsandsymbols}. The measurement of the speeds, positions, and directions of all cars are collected by an external computer via a wireless connection of the 802.11b standard using UDP/IP. Then, the computer distributes these measurements to the controlled cars, together with other information such as the desired inputs of the controlled vehicles and model parameters discussed in the next section. 

\subsection{Car dynamics model}\label{section:cardynamicsmodel}

\begin{figure}[htb!]
\centering
\includegraphics[width=\linewidth]{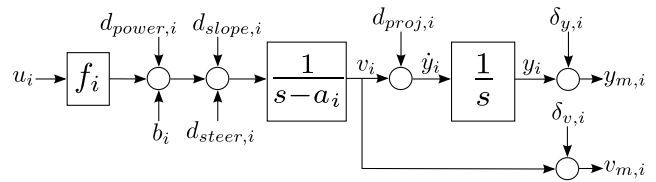}
\caption{Block diagram of the RC car model. The model includes the measurement disturbances $\delta_{y,i}$ and $\delta_{v,i}$ on the car position and speed, respectively.}
\label{figure:blockdiagram}
\end{figure}

A block diagram in Figure \ref{figure:blockdiagram} represents the car dynamics. In this section, we design a compensating input to reduce the effects of the disturbances $d_{proj,i}, d_{power,i}, d_{slope,i}$, and $d_{steer,i}$. 
 
Before compensating for the disturbances, the dynamics of the RC cars are as follows: for car $i\in\allset$ where $\mathcal{C}=\{1,2\}$ and $\uncontrolled=\{3\}$,
\begin{align}
\begin{split}\label{eq:rccarmodel}
	 &\dot{y}_i = v_i + d_{proj,i}, \\
	 &\dot{v}_i =	 a_i v_i + b_i + f_i u_i + d_{power,i} + d_{slope,i} + d_{steer,i} 
\end{split}
\end{align}
where ${y}_i$ is the longitudinal position of car $i$ along the path, $v_i$ is its wheel speed, and $u_i$ is the motor input with model parameters $a_i, b_i,$ and $f_i$. The disturbances $d_{proj,i}, d_{power,i}, d_{slope,i}$, and $d_{steer,i}$ are explained below and compensated to reduce their effects. 

\begin{figure}[htb!]
	\centering
	\begin{subfigure}{.49\linewidth}
		\centering
		\includegraphics[width=.6\linewidth]{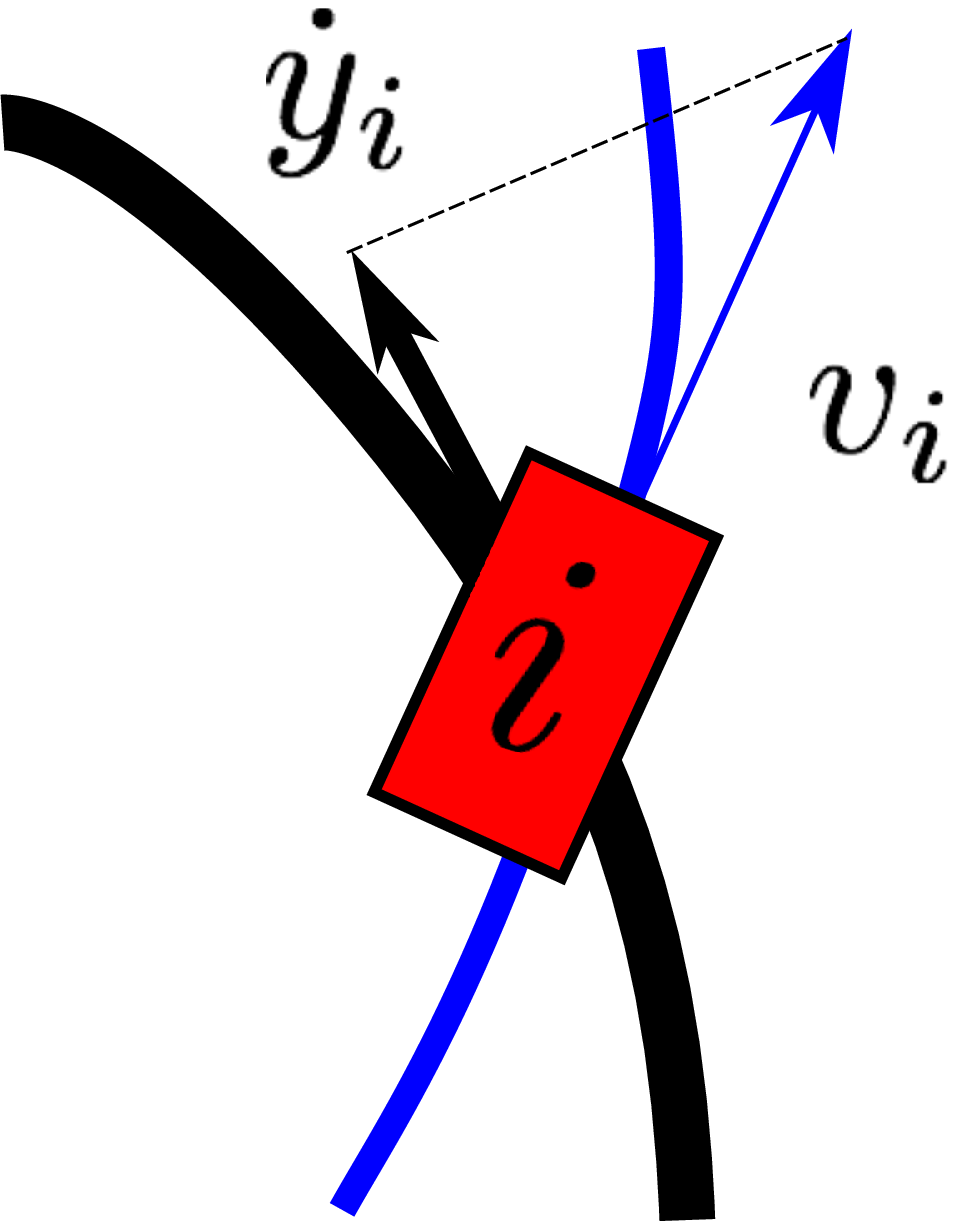}
		\caption{}
		\label{figure:dproja}
	\end{subfigure}%
	\begin{subfigure}{.49\linewidth}
		\centering
		\includegraphics[width=.7\linewidth]{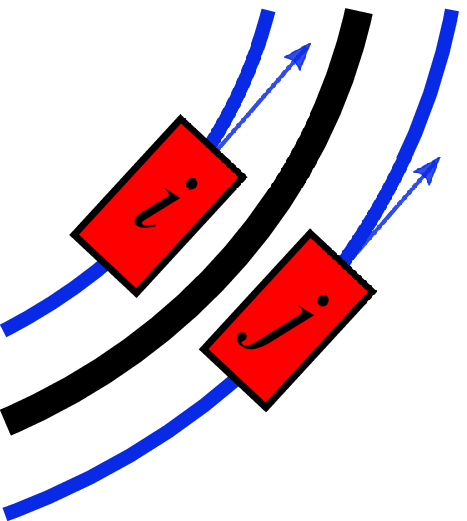}
		\caption{}
		\label{figure:dprojb}
	\end{subfigure}
	\caption{Cases when $d_{proj,i}\ne 0$. Imperfect path following contributes to the discrepancy between the longitudinal speed $\dot{y}_i$ and the wheel speed $v_i$.}
	\label{figure:dproj}
\end{figure}

Notice that $v_i$ is the car's wheel speed while $\dot{y_i}$ is the projected speed on the longitudinal path of car $i$. It is possible that $\dot{y_i}\ne v_i$ when the car does not follow its path exactly, and this discrepancy is represented by the term $d_{proj,i}$. Figure~\ref{figure:dproj} shows two cases in which $\dot{y}_i \ne v_i$. In Figure \ref{figure:dproja}, the direction of car $i$ differs from the tangent of the longitudinal path (black line) so that $v_i\ne \dot{y}_i$. In Figure~\ref{figure:dprojb}, two cars follow the same path with the same wheel speed $v$ but with slight deviation from the path, causing $\dot{y}_i >v > \dot{y}_j$.
\begin{figure*}[t!]
	\centering
	\begin{subfigure}{0.49\linewidth}
		\includegraphics[width = \linewidth]{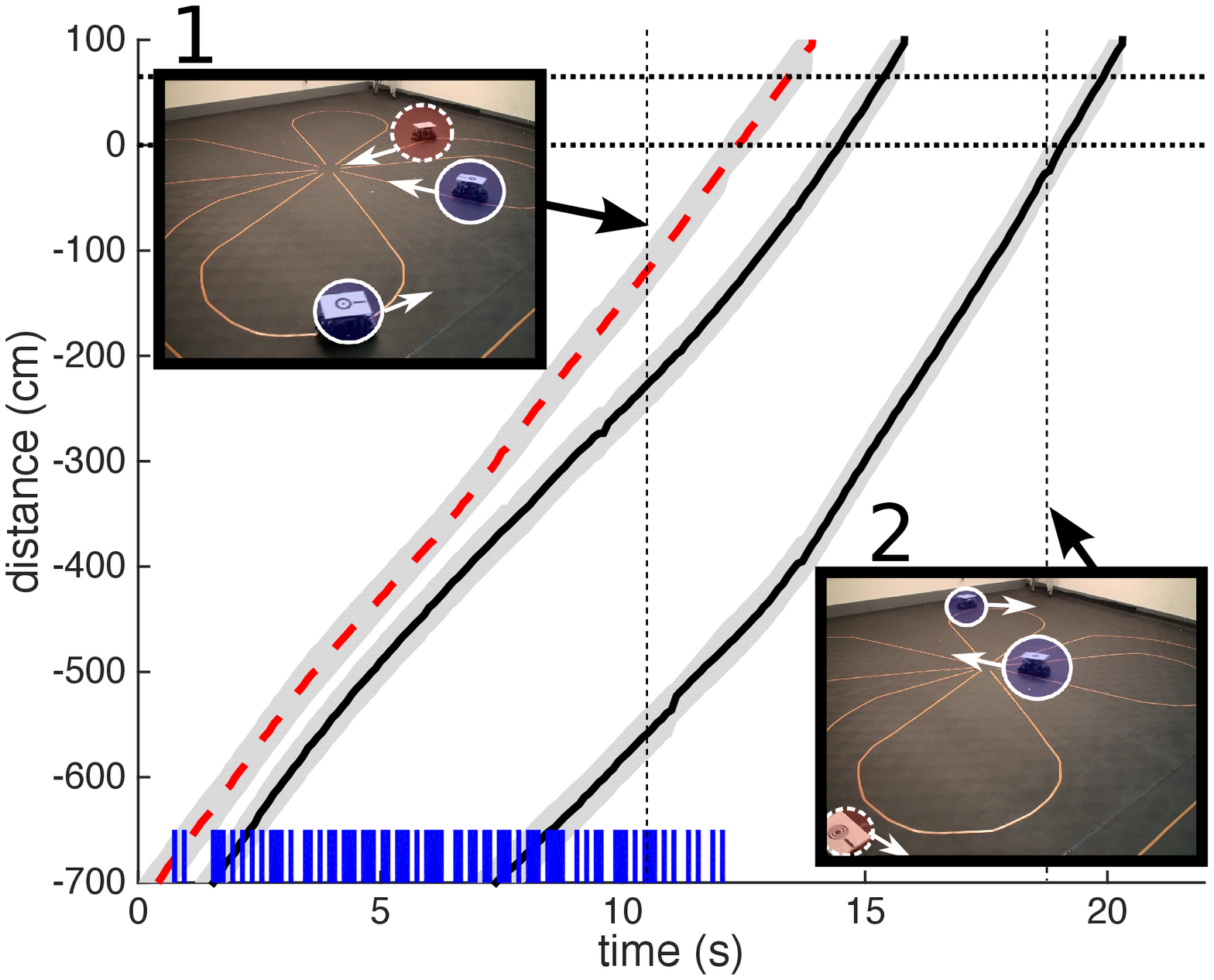}
		\caption{}
		\label{figure:timeposition1}
	\end{subfigure}
	\begin{subfigure}{0.49\linewidth}
		\includegraphics[width = \linewidth]{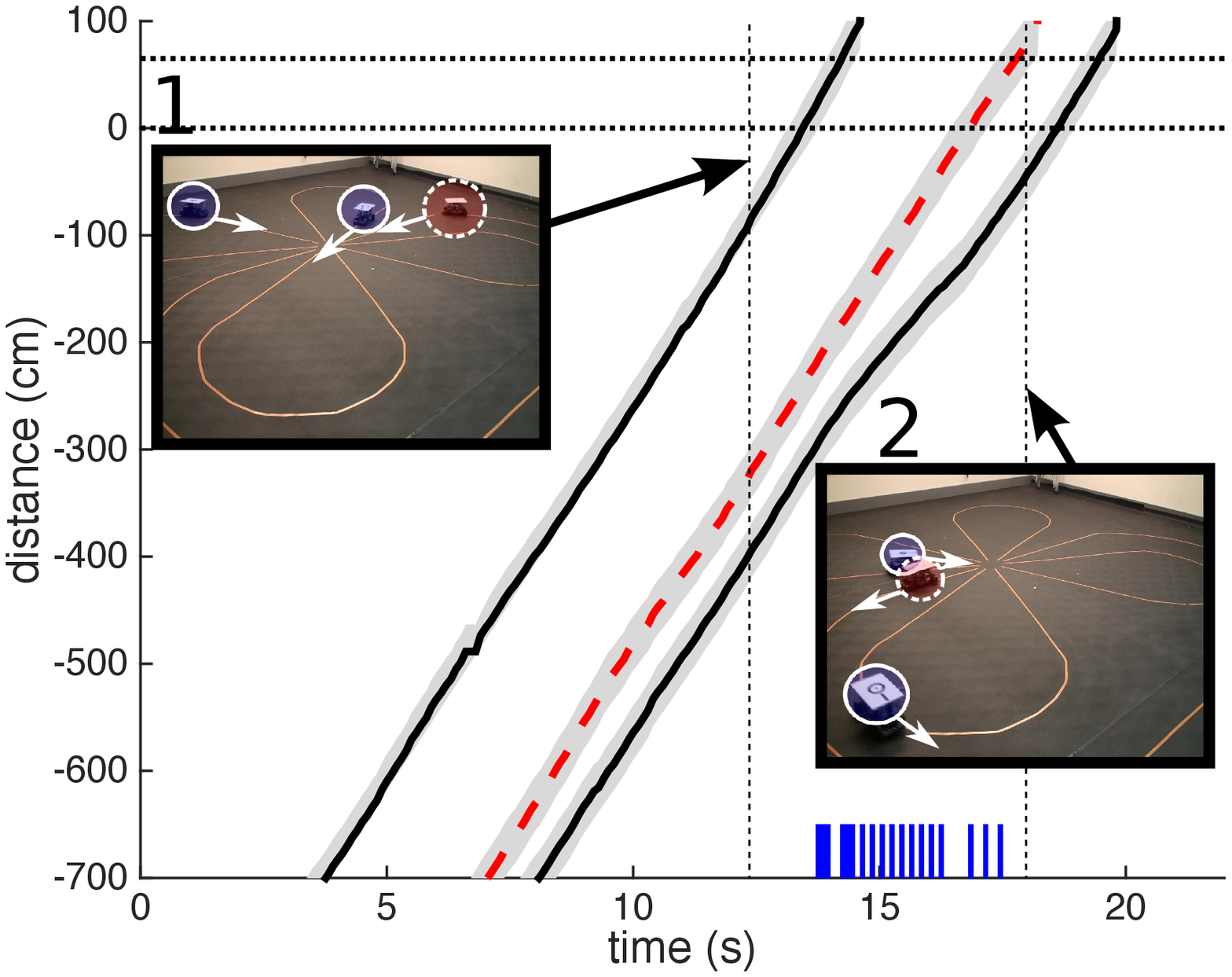}
		\caption{}
		\label{figure:timeposition2}
	\end{subfigure}
	\caption{The results of implementing the exact supervisor (see also Extension 1). The intersection is represented by the dotted lines located at $(0,65)$~cm. The solid black lines represent the position measurement of the controlled cars, cars 1 and 2, and the dotted red line represents the position measurement of the uncontrolled car, car 3. These figures demonstrate that the cars are not inside the intersection at the same time.}
	\label{figure:timegraphs}
\end{figure*}

This dynamic model \eqref{eq:rccarmodel} includes three different disturbances on the acceleration. The disturbance $d_{power,i}$ describes the first-order dynamic behavior empirically observed in the motor as a consequence of the power connection. By running the cars in a circle with constant motor input for several minutes, we can model $d_{power,i} = g_i e^{-t/h_i}u'_i$, where $g_i$ is the gain and $h_i$ the time constant. The car-specific parameters $g_i$ and $h_i$ are estimated by analyzing the collected data using the least square method (see \cite{rizzi_analysis_2014} for the data). 

The disturbance $d_{slope,i}$ is introduced to model the slope of the testbed, which is not completely flat and has non-negligible effects on the car speed. The disturbance $d_{steer,i}$ takes into account the fact that the steering and motor dynamics are coupled (\cite{verma_development_2008}). Since the testbed slope and the steering input are approximately the same at the same point of the path, we estimate these two disturbances as a path-dependent function $d_{path,i}(y) := d_{steer,i}(y) + d_{slope,i}(y)$. This function is different for each path and estimated by running the car multiple times and curve fitting of the obtained data.

Eliminating the effects of these disturbances is critical because otherwise it is difficult to initiate Algorithm~\ref{algorithm:supervisor} with a feasible initial condition, especially in a spatially constrained environment, such as a laboratory testbed. To this end, we introduce a compensating term $c_i(t, y_i)$ to the motor input so that $u_i = u'_i + c_i(t, y_i)$, where $u_i'$ is the input signal returned by the supervisor for car $i$. By employing the model and estimation of the disturbances explained above, we obtain the following compensating input:
$$ c_i(t, y_i) = - \frac{g_i e^{-t/h_i}u'_i + d_{path, i}(y_i)}{g_i e^{-t/h_i} + f_i}. $$
This, in turn, simplifies \eqref{eq:rccarmodel} as $\dot{y}_i=v_i+d_{proj,i}$ and $\dot{v}_i = a_i v_i + b_i + f_i u'_i$. Eventually with the compensation, we consider the following car dynamics in the experiment. For $i\in\allset$,
\begin{align*}
&\dot{y}_i = v_i+d_{y,i},\\
&\dot{v}_i=\begin{cases} 
\max(0,a_iv_i + b_i + f_i u_i' + d_{v,i})& \text{if}~(v_i = v_{i,min}),\\
\min (0,a_iv_i + b_i + f_i u_i' + d_{v,i})& \text{if}~(v_i = v_{i,max}),\\
a_i v_i +b_i +f_i u'_i + d_{v,i} & \text{otherwise},
\end{cases}
\end{align*}
where $d_{v,i}$ represents an error from the compensation and an unpredicted source of disturbance.

\subsection{Results}

\begin{figure*}[h!]
	\centering
	\includegraphics[width=\linewidth]{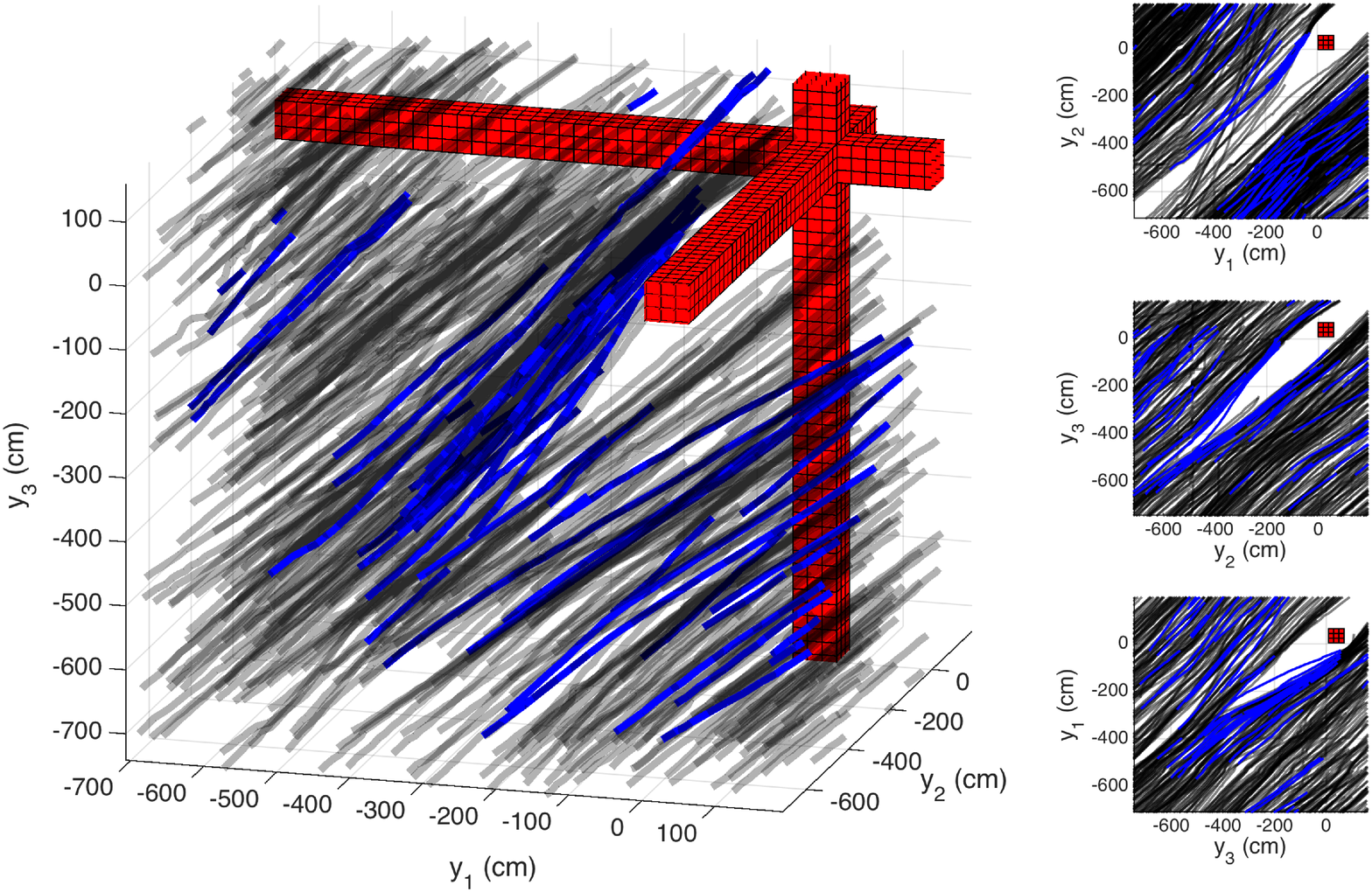}
	\caption{Trajectories (semitransparent black lines) in the output space. There are 509 trajectories, which are indicated in blue when the supervisor intervenes. The three figures on the right-hand side are the 2D projections. These confirm that all of the trajectories avoid the Bad set.}
	\label{figure:experiment}
\end{figure*}
The on-board computers on cars 1 and 2 run Algorithm \ref{algorithm:supervisor}. Since the same algorithm is run with the same measurement updated every $\tau=0.1 s$, the outputs are the same, thereby working as centralized control.

The parameters are as follows: $a =$(-0.53, -0.30, -0.43)~{s}$^{-1}$, $b =$(-84.68, -66.43, -49.64)~cm/s$^2$, $g=$(0.99, 0.44, 0.95), and $h =$(9.17, 6.84, 5.95)~s. The gain $f_i(t)$ for all $i$ is a time-varying parameter, estimated at every time step. The bounds of the speed are $v_{min}=$(10.5, 10.5, 13)~cm/s and $v_{max}=$(17, 16.5, 15)~cm/s, and those of the motor input (PWM) are $u_{min}=(105, 105, 130)$ and $u_{max}=$(170,165,150). The bounds of the measurement noises are empirically chosen by ignoring long tails as $\delta_{y,i,min}=$ -25~cm, $\delta_{y,i,max}=$ 25~cm, $\delta_{v,i,min}=$ -25~cm/s, and $\delta_{v,i,max}=$ 16~cm/s for all $i$. The bounds of the experimental disturbance are $d_{y,min}=$(-5, -3, -3)~cm/s, $d_{y,max}=$(3, 4, 3)~cm/s, $d_{v,min}=$(-4, -2, -3)~cm/s$^2$, and $d_{v,max}=$(2, 3, 2.5)~cm/s$^2$. 

Figure~\ref{figure:timegraphs} depicts two experimental results obtained by implementing the exact supervisor. The intersection is located at (0, 65)~cm for all cars. The intersection is an area containing the point at which the paths intersect, as in Figure~\ref{figure:paths}. 

In Figure \ref{figure:timeposition1}, in picture 1, the uncontrolled car (dotted red circle) approaches the intersection earlier than the other cars. The supervisor overrides the controlled cars (solid blue circles) to decelerate them until their desired inputs do not cause conflict. In picture 2, the conflict is resolved, and one controlled car crosses the intersection alone without overrides. In Figure \ref{figure:timeposition2}, in picture 1, one controlled car approaches the intersection first. The supervisor lets this car accelerate and the other controlled car decelerate so that the uncontrolled car safely crosses the intersection, as shown in picture 2. Notice that the upper bound of the last car's position enters the intersection right after the lower bound of the uncontrolled car's position has exited, indicating that the override was applied because it is deemed necessary to avoid the collision. In both cases, intersection collisions are averted.

Figure~\ref{figure:experiment} depicts 509 trajectories (semitransparent black lines) near the Bad set (red blocks). The trajectories are indicated in blue when the supervisor intervenes. From the 2D projections (figures on the right-hand side), we can confirm that none of the trajectories enters the Bad set. Since cars 1 and 2 are programmed to maintain constant speeds and car 3 does not change its speed quickly, the trajectories in the projections should be straight lines without overrides. We can see that the supervisor overrides cars 1 and 2 when the trajectories would enter the Bad set if the trajectories were linear. We observed in 15 trajectories that the cars were forced to stop before the Bad set because the supervisor could not find a safe input. This was because we had to truncate the tails of the distributions of disturbances, and thus the state measurement and the state estimation can sometimes be incompatible. This truncation is necessary in the confined laboratory because otherwise a feasible initial condition may not always exist. The existence of a feasible initial condition is a necessary condition to initiate procedure~\supervisor~in Algorithm~\ref{algorithm:supervisor}.


\section{Conclusions}\label{section:conclusions}
We have designed exact and efficient supervisors that override controlled vehicles when collisions are imminent. The sources of uncertainty, such as measurement errors, unmodeled dynamics, and the presence of uncontrolled vehicles are taken into account in the design of the supervisors. The exact supervisor determines the existence of safe inputs (verification problem) by solving the Inserted Idle-Time (IIT) scheduling problem, which is proven to yield equivalent answers to the verification problem. To address the computational complexity issue, we also design the efficient supervisor that solves the IIT scheduling problem with a quantified approximation bound. The simulation results show that the efficient supervisor prevents collisions without substantial conservatism, compared to the exact one. The experiment using RC cars on an intersection testbed validated that collisions at an intersection are successfully averted by the exact supervisor.

Although this paper deals only with decision problems to focus on safety, there is no barrier to incorporate objective functions to address other issues such as fuel consumption or traffic congestion. The intersection considered in this paper is modeled as a single conflict area so that vehicles are required to occupy the intersection one at a time. This assumption may make the system very conservative in that, for example, two vehicles turning right on different lanes are geometrically unable to collide while the supervisors do not let them inside the intersection at the same time. We are currently investigating the design of supervisors with a more general intersection model, which includes multi-conflict points. The result with simple first-order vehicle dynamics can be found in \cite{ahn_milp_2016}. Other remaining issues include preventing rear-end collisions (\cite{colombo_least_2014}) and considering unknown routes of vehicles.

\begin{funding}
This work was in part supported by NSF Award \#1239182. Alessandro Colombo was in part supported by grant AD14VARI02 - Sottomisura B.
\end{funding}

\bibliographystyle{SageH}
\bibliography{IEEEabrv}

\section*{Appendix A: Index to Multimedia Extensions}
\begin{table}[H]
\begin{tabular}{l l p{0.5\linewidth}}
	\hline
	Extension & Media type & Description \\
	\hline
	1 & Video & This video contains experiments of the exact supervisor presented in Section~\ref{section:experiment}.\\
	\hline
\end{tabular}
\end{table}

\section*{Appendix B}
We provide the proofs of Lemmas~\ref{lemma:garey_solve_iit}-\ref{lemma:Appro_no_Relaxed_no} in this section.

\noindent\textbf{Lemma~\ref{lemma:garey_solve_iit}.}
\textit{Procedure \garey~in Algorithm~\ref{algorithm:polynomial} solves the IIT scheduling problem with unit process times and finds a feasible schedule if exists.
}
\begin{proof}
To account for inserted idle-times, we define an initial set of forbidden regions as $F_{0,\gamma}=(\bar{r}_\gamma-1, \bar{p}_\gamma)$ for all $\gamma$. If $t_j\notin F_{0,\gamma}$ for some $j$, then we have $(t_j, t_j+1)\cap (\bar{r}_\gamma, \bar{p}_\gamma)=\emptyset$, thereby satisfying condition~\eqref{Unit-IIT:iit}.
	
Forbidden Region Declaration in lines~\ref{polynomial:forbidden_start}-\ref{polynomial:forbidden_end} of procedure \garey~solves Problem~\ref{problem:unit_process_scheduling}. The key idea is critical time $c$, which is the latest start time of job $\sigma_i$. If job $\sigma_i$ starts later than critical time $c$, at least one of the subsequent jobs cannot be scheduled before its deadline. That is, if a job cannot start before the critical time (line~\ref{polynomial:ans=no}), there is no schedule satisfying conditions \eqref{Unit-IIT:bounded}-\eqref{Unit-IIT:iit}. Otherwise, there will be a schedule satisfying all the conditions. This relation between critical time and the existence of a feasible schedule was proved in \cite{garey_scheduling_1981} with initially empty forbidden regions. Since initially non-empty forbidden regions do not affect the fact that critical time is the latest start time, this analysis of critical time can determine the existence of a schedule with non-empty initial forbidden regions.


With the forbidden regions declared, the EDD rule finds a schedule in lines~\ref{polynomial:initialize}-\ref{polynomial:end_while}. Time $s$ is assigned to each job as a schedule. In line~\ref{polynomial:forbidden}, $s$ avoids the forbidden regions so that condition~\eqref{Unit-IIT:iit} is satisfied.
If there is no ready job (line \ref{polynomial:if_A_empty_T=r}), a job with the minimum release time among unscheduled jobs is scheduled, and otherwise (line~\ref{polynomial:if_A_non_empty_T=s}), a job with the earliest deadline among ready jobs is scheduled. After a job is scheduled, line~\ref{polynomial:ABupdate} updates $s$ to $s+1$ so that condition~\eqref{Unit-IIT:disjoint} is satisfied. This schedule cannot satisfy condition~\eqref{Unit-IIT:bounded} if $ans$ has been $no$ in line~\ref{polynomial:ans=no}. Otherwise, this schedule satisfies conditions~\eqref{Unit-IIT:bounded}-\eqref{Unit-IIT:iit}.\qed
\end{proof}
\noindent\textbf{Lemma~\ref{lemma:T<barT}.}
\textit{If \procedureAS$([\lowx(0),\upperx(0)],S)=(\mathbf{T},yes)$, and 
\procedureRES$([\lowx(0),\upperx(0)],S)=(\bar{\mathbf{T}},\pi^*,yes)$, then
$T_{j}\leq \bar{T}_{j}$ for all $j\in\mathcal{C}$.}
\begin{proof}
We will show by induction on $j$ that $T_{\pi^*_j}\leq \bar{T}_{\pi^*_j} $ for all $\pi^*_j\in\mathcal{M}$. Notice that $T_{i}=\bar{T}_i=0$ for all $i\in\bar{\mathcal{M}}$. The schedule $\mathbf{T}$ is generated by procedure \procedureS~in Algorithm~\ref{algorithm:verification}. For the base case, $T_{\pi^*_1}=\max(R_{\pi^*_1},P_{max})$ in line~\ref{scheduling:i=1} of Algorithm~\ref{algorithm:verification}. Since $\bar{T}_{\pi^*_1}$ is a feasible solution of Problem~\ref{problem:relaxedS} by Lemma~\ref{lemma:garey_solve_iit}, it satisfies $R_{\pi^*_1}\leq\bar{T}_{\pi^*_1}$ from condition~\eqref{condition:efficient_boundedinput} and $(0,P_{i}(0))\cap (\bar{T}_{\pi^*_1}, \bar{T}_{\pi^*_1}+\theta_{max})=\emptyset$ for all $i\in\bar{\mathcal{M}}$ from condition~\eqref{condition:efficient_controlled}. Thus, $\max(R_{\pi^*_1},P_{max})=T_{\pi^*_1}\leq \bar{T}_{\pi^*_1}$. Now, suppose $T_{\pi^*_{k-1}} \leq \bar{T}_{\pi^*_{k-1}}$. Then, for $j=k$, we need to show that $T_{\pi^*_{k}}\leq \bar{T}_{\pi^*_{k}}$.
	
	In line~\ref{scheduling:con1and2} of Algorithm~\ref{algorithm:verification}, $T_{\pi^*_k}=\max(R_{\pi^*_k},T_{\pi^*_{k-1}}+P_{\pi^*_{k-1}}(T_{\pi^*_{k-1}}))$. If $T_{\pi^*_{k}}=R_{\pi^*_{k}}$, we have $T_{\pi^*_{k}}\leq \bar{T}_{\pi^*_{k}}$ because $\bar{T}_{\pi^*_k}$ satisfies condition~\eqref{condition:efficient_boundedinput}. If $T_{\pi^*_{k}}=T_{\pi^*_{k-1}}+P_{\pi^*_{k-1}}(T_{\pi^*_{k-1}})$, we have $T_{\pi^*_{k}}\leq \bar{T}_{\pi^*_{k-1}}+\theta_{max}$ because $T_{\pi^*_{k-1}}\leq \bar{T}_{\pi^*_{k-1}}$ and $P_{\pi^*_{k-1}}(T_{\pi^*_{k-1}})\leq \theta_{max}$ by \eqref{equation:themamax}. Since $\bar{T}_{\pi^*_{k}}$ satisfies  condition~\eqref{condition:efficient_controlled}, $\bar{T}_{\pi^*_{k-1}}+\theta_{max}\leq \bar{T}_{\pi^*_{k}}$. Therefore, $T_{\pi^*_{k}}\leq\bar{T}_{\pi^*_{k}}$. In lines~\ref{scheduling:Tj>Rbar_then>Pbar} and \ref{scheduling:Pj>Rbar_thenTj=Pbar} of Algorithm~\ref{algorithm:verification}, the schedule can increase so that $T_{\pi^*_{k}}=\bar{P}_\gamma$ for some $\gamma\in\uncontrolled$ if $T_{\pi^*_{k}}\geq \bar{R}_\gamma$ or $T_{\pi^*_{k}}+P_{\pi^*_{k}}(T_{\pi^*_{k}})> \bar{R}_\gamma$. Since $\bar{T}_{\pi^*_{k}}$ satisfies condition~\eqref{condition:efficient_uncontrolled}, if $\bar{T}_{\pi^*_{k}}\geq \bar{R}_\gamma$ or $\bar{T}_{\pi^*_{k}}+\theta_{max}> \bar{R}_\gamma$, then it must be $\bar{T}_{\pi^*_{k}}\geq \bar{P}_\gamma$. Therefore, in either case, $T_{\pi^*_{k}}\leq \bar{T}_{\pi^*_{k}}$. \qed
\end{proof}

\noindent\textbf{Lemma~\ref{lemma:Appro_no_Relaxed_no}.}
\textit{If \procedureAS$([\lowx(0),\upperx(0)],S)=(\emptyset,no)$, then \procedureRES$([\lowx(0),\upperx(0)],S)=(\emptyset,\pi^*,no)$.}

\begin{proof}
	In Algorithm~\ref{algorithm:approximate}, procedure \procedureAS~returns \textit{no} if $[\mathbf{y}^a(0),\mathbf{y}^b(0)]\cap B\ne\emptyset$ in line~\ref{approx:inside_B} or procedure \procedureS~with $\pi^*$ returns $no$ in line~\ref{approx:scheduling}. In the former case, procedure \procedureRES~also returns \textit{no} as in line~\ref{relaxed:inside_B} of Algorithm~\ref{algorithm:relaxedexact}. The latter case implies that $T_{\pi^*_j} > D_{\pi^*_j}$ for some $\pi^*_j\in\mathcal{C}$. By Lemma~\ref{lemma:T<barT}, we have $\bar{T}_ {\pi^*_j}\geq T_{\pi^*_j}$ if a feasible solution $\bar{\mathbf{T}}$ of Problem~\ref{problem:relaxedS} exists. Such a schedule cannot be feasible because $\bar{T}_{\pi^*_j} \geq T_{\pi^*_j} > D_{\pi^*_j}$, which does not satisfy condition~\eqref{condition:efficient_boundedinput}. Thus, procedure \procedureRES~returns $(\emptyset,\pi^*,no)$. \qed
\end{proof}

\end{document}